\renewcommand{\(}{\left(}
\renewcommand{\)}{\right)}
\renewcommand{\[}{\left[}
\renewcommand{\]}{\right]}
\newcommand{\s}{\mathbf{s}}
\newcommand{\f}{\mathbf{f}}
\renewcommand{\S}{\mathbf{S}}
\newcommand{\y}{\mathbf{y}}
\newcommand{\z}{\mathbf{z}}
\newcommand{\R}{\mathbf{R}}
\renewcommand{\r}{\mathbf{r}}
\newcommand{\x}{\mathbf{x}}
\newcommand{\I}{\mathbf{I}}
\newcommand{\C}{\mathbf{C}}
\renewcommand{\P}{\mathbf{P}}
\renewcommand{\t}{\mathbf{t}}
\newcommand{\A}{\mathbf{A}}
\newcommand{\T}{\mathbf{T}}
\newcommand{\U}{\mathbf{U}}
\newcommand{\M}{\mathbf{M}}
\renewcommand{\u}{\mathbf{u}}
\renewcommand{\v}{\mathbf{v}}
\newcommand{\Q}{\mathbf{Q}}
\newcommand{\V}{\mathbf{V}}
\newcommand{\X}{\mathbf{X}}
\newcommand{\Y}{\mathbf{Y}}
\newcommand{\B}{\mathbf{B}}
\newcommand{\e}{\mathbf{e}}
\newcommand{\Tr}[1]{{\rm{Tr}}\left(#1\right)}
\newcommand{\rk}[1]{{\rm{rk}}\left(#1\right)}
\newcommand{\End}[1]{{\rm{End}}}
\renewcommand{\det}[1]{\left|#1\right|}
\newcommand{\sspan}[1]{\langle #1\rangle }
\newtheorem{lemma}{Lemma}
\newtheorem{definition}{Definition}
\newtheorem{theorem}{Theorem}
\newtheorem{corollary}{Corollary}
\newtheorem{rem}{Remark}
\newtheorem{fact}{Fact}
\newcommand{\norm}[1]{\left\lVert#1\right\rVert}
\newcommand*{\dt}[1]{%
  \accentset{\mbox{\large\bfseries .}}{#1}}
\newcommand{\determ}{\operatorname{\mathrm{det}}}
\newcommand{\Gr}{\operatorname{\mathrm G}}
\newcommand{\GL}{\operatorname{\mathrm GL}}
\newcommand{\Hom}{\operatorname{\mathrm Hom}}
\newcommand{\essinf}{\operatorname{\mathrm {ess}\,\mathrm{inf}\;}}
\theoremstyle{remark}
\begin{document}
\begin{frontmatter}
\title{Gaussian and Robust Kronecker Product Covariance Estimation: Existence and Uniqueness}

\author[rvt]{I. Soloveychik\corref{cor1}}
\ead{ilya.soloveychik@mail.huji.ac.il}

\author[rvt]{D. Trushin}

\cortext[cor1]{Corresponding author}
\address[rvt]{Rachel and Selim Benin School of Computer Science and Engineering, \\ The Hebrew University of Jerusalem, Givat Ram, Jerusalem, Israel, 91904}

\begin{abstract}
We study the Gaussian and robust covariance estimation, assuming the true covariance matrix to be a Kronecker product of two lower dimensional square matrices. In both settings we define the estimators as solutions to the constrained maximum likelihood programs. In the robust case, we consider Tyler's estimator defined as the maximum likelihood estimator of a certain distribution on a sphere. We develop tight sufficient conditions for the existence and uniqueness of the estimates and show that in the Gaussian scenario with the unknown mean, $p/q+q/p + 2$ samples are almost surely enough to guarantee the existence and uniqueness, where $p$ and $q$ are the dimensions of the Kronecker product factors. In the robust case with the known mean, the corresponding sufficient number of samples is $\max\[p/q, q/p\] + 1$.
\end{abstract}

\begin{keyword}
Constrained covariance estimation, robust estimation, high-dimensional estimation, Kronecker product structure.
\end{keyword}
\end{frontmatter}

\section{Introduction}
Covariance estimation is a fundamental problem in multivariate statistical analysis. It arises in diverse applications such as signal processing, where knowledge of the covariance matrix is unavoidable in constructing optimal detectors \cite{kelly1986adaptive}, genomics, where it is widely used to measure correlations between gene expression values \cite{schafer2005shrinkage, xie2003covariance, hero2012hub}, and functional MRI \cite{derado2010modeling}. Most of the modern algorithms analyzing social networks are based on Gaussian Graphical Models \cite{lauritzen1996graphical}, where the independences between the graph nodes are completely determined by the sparsity structure of the inverse covariance matrix \cite{banerjee2008model}. In empirical finance, knowledge of the covariance matrix of stock returns is a fundamental question with implications for portfolio selection and for tests of asset pricing models such as the CAPM \cite{bai2011estimating,ledoit2003improved}. Application of structured covariance matrices instead of Bayesian classifiers based on Gaussian mixture densities or kernel densities proved to be very efficient for many pattern recognition tasks, among them speech recognition, machine translation and object recognition in images \cite{dahmen2000structured}. In geometric functional analysis and computational geometry \cite{adamczak2010quantitative} the exact estimation of covariance matrix is necessary to efficiently compute volume of a body in high dimension. The classical problems of clustering and Discriminant Analysis are entirely based on precise knowledge of covariance matrices of the involved populations \cite{friedman1989regularized}, etc.

In many modern applications, data sets are very large with both large number of samples $n$ and large dimension $p$, often with $p \gg n$, leading to the amount of unknown parameters greatly exceeding the number of observations. This high-dimensional regime naturally calls for exploiting or assuming additional structural properties of the data to reduce the number of estimated degrees of freedom. Usually, the specific structures are chosen to be linear or affine. The most popular examples include such models as Toeplitz \cite{snyder1989use, abramovich2007time, wiesel2013time,kavcic2000matrices, asif2005block,fuhrmann1991application, roberts2000hidden,bickel2008regularized,sun2015robust,soloveychik2014tyler}, group symmetric \cite{shah2012group, soloveychik2014groups}, sparse \cite{banerjee2008model, ravikumar2011high, rothman2008sparse}, low rank \cite{fan2008high, johnstone2009consistency, lounici2014high} and many others. Non-linear structures are also quite common in engineering applications. Among them are the Kronecker product model \cite{tsiligkaridis2013convergence, dutilleul1999mle, werner2008estimation, dawid1981some}, linear and sparse inverse covariance structures such as graphical models \cite{friedman2008sparse, yuan2007model} and others.

In this paper we focus on the Kronecker Product (KP) structure, which has recently become an extremely popular model for a variety of applications, such as MIMO wireless communications \cite{werner2007estimation}, geostatistics \cite{cressie2015statistics}, genomics \cite{yin2012model}, multi-task learning \cite{bonilla2007multi}, face recognition \cite{zhang2010learning}, recommendation systems \cite{allen2010transposable}, collaborative filtering \cite{yu2009large} and many others. The KP model assumes a $pq \times pq$ covariance matrix $\bm\Theta_0$ to be a KP of two lower dimensional square matrices, which is denoted by $\bm\Theta_0 = \P \otimes \Q$, where $\P$ and $\Q$ are $p \times p$ and $q \times q$ dimensional positive definite matrices, respectively. Given $\bm\Theta_0$, its factors $\P$ and $\Q$ can only be determined up to a positive scalar. This natural ambiguity is usually treated by fixing scaling of one of the factors as we do below.

Consider the Gaussian setting and assume we are given $n$ independent and identically distributed (i.i.d.) $pq$ dimensional real vector measurements $\x_i \sim \x,\; i=1,\dots,n$, where
\begin{equation}
\x \sim \mathcal{N}(\bm\mu, \bm\Theta).
\end{equation}
Assume the mean $\bm\mu$ is known, while the covariance $\bm\Theta$ is to be estimated. If the number of samples is not less than the ambient dimension, $n \geqslant pq$, the Maximum Likelihood Estimator (MLE) of the covariance parameter almost surely exists and coincides with the Sample Covariance Matrix (SCM),
\begin{equation}
\S = \frac{1}{n}\sum_{i=1}^n (\x_i-\bm\mu)(\x_i-\bm\mu)^\top.
\label{scm_def}
\end{equation}
When the prior knowledge suggests that the true covariance matrix $\bm\Theta_0$ is of the KP structure, it is usually more convenient to cut $\x$ into $q$ columns of height $p$ each to obtain a so-called matrix normal random variable $\X$ \cite{dutilleul1999mle, dawid1981some, gupta1999matrix}. Following \cite{gupta1999matrix}, we denote this law by
\begin{equation}
\X \sim \mathcal{MN}(\M, \P\otimes \Q),
\label{intro_def_g}
\end{equation}
where $\M$ is obtained from $\bm\mu$ by the same reshaping procedure. Assume we are given $n$ i.i.d. matrix samples $\X_i \sim \X,\; i=1,\dots,n$ as in (\ref{intro_def_g}), and want to estimate the covariance matrix factors $\P$ and $\Q$. Here, the MLE solution is no longer given by an explicit formula as in (\ref{scm_def}), moreover, the resulting optimization program is non-convex due to the constraint. Luckily, there exists an alternating optimization approach, which is usually adopted \cite{dutilleul1999mle, lu2005likelihood, lu2004likelihood, werner2008estimation}. This algorithm is often referred to as the Flip-Flop (FF) due to the symmetric updates of the estimates of $\P$ and $\Q$ it produces. Below we show that the obtained constrained program becomes convex under a specific change of metric over the set of positive definite matrices, the so-called geodesic metric \cite{wiesel2012geodesic, wiesel2012unified}, naturally explaining the convergence of the FF and significantly helping to further explore the optimization at hand. We refer to this iterative algorithm as the Gaussian FF (GFF) to distinguish it from another FF scheme introduced later.

In many real world applications the underlying multivariate distribution is actually non-Gaussian and robust covariance estimation methods are required. This occurs whenever the distribution of the measurements is heavy-tailed or a small proportion of the samples exhibits outlier behavior \cite{huber1964robust, maronna1976robust}. Probably the most common extension of the Gaussian family of distributions allowing for treating heavy-tailed populations is the class of elliptically shaped distributions \cite{frahm2004generalized}. Elliptical populations served as the basis for defining a family of the so-called covariance $M$-estimators \cite{maronna1976robust}, of which we focus on Tyler's estimator \cite{tyler1987distribution}. Given $n$ samples $\x_i \in \mathbb{R}^{pq},\; i=1,\dots,n,$ Tyler's covariance matrix estimator is defined as the solution to the fixed point equation
\begin{equation}
\T = \frac{pq}{n}\sum_{i=1}^n \frac{\x_i\x_i^\top}{\x_i^\top\T^{-1}\x_i}.
\label{tylerequ}
\end{equation}
When $\x_i$ are i.i.d. Generalized Elliptically (GE) distributed \cite{frahm2004generalized}, their shape matrix $\bm\Theta$ is positive definite and $n>pq$, Tyler's estimator exists with probability one and is a consistent estimator of $\bm\Theta$ up to a positive scaling factor. The GE family includes as particular cases the generalized Gaussian, the compound Gaussian, the elliptical distributions and many others \cite{frahm2004generalized}. Therefore, Tyler's estimator has been successfully used to replace the SCM in many applications sensitive to outliers or heavy-tailed noise, such as anomaly detection in wireless sensor networks \cite{chen2011robust}, antenna array processing \cite{ollila2003robust}, and radar detection \cite{abramovich2007time, ollila2012complex, bandiera2010knowledge, pascal2008covariance}. 

It was recently demonstrated that Tyler's estimator can be viewed as an MLE of a certain spherical distribution \cite{wiesel2012geodesic, greco2013cramer, soloveychik2014non}. In spite of the fact that the obtained MLE program is not convex, it was later shown to become convex under the geodesic metric change we mentioned above \cite{wiesel2012geodesic, wiesel2012unified}. Both these fundamental discoveries paved a way to the creation of a very useful natural optimization framework characterizing Tyler's estimator, which made possible definition of structured analogs of Tyler's estimator under geodesically convex constraints \cite{soloveychik2014groups, sun2015robust, wiesel2015structured}. The present article extensively uses this new framework to study the existence and uniqueness of the KP constrained Tyler's MLE and the convergence properties of the Robust Flip-Flop (RFF) analog of the GFF algorithm obtained from it. Another very popular in engineering applications example of a linear geodesically convex structure is the so-called group symmetry \cite{shah2012group}. Interestingly, a very recent paper \cite{soloveychik2014groups} utilized the aforementioned optimization methodology to thoroughly investigate the group symmetric Tyler's estimator (STyler) and its performance benefits. It is important to note that multiple geodesically convex constraints can be efficiently superimposed when the underlying physics suggests such prior knowledge, e.g quite often in practice the KP structure is followed by group symmetries, leading to a further decrease in the number of estimated degrees of freedom.

In both Gaussian and robust cases, one of the central questions in high-dimensional environment is: What is the minimal number of samples guaranteeing the existence and uniqueness of the corresponding covariance MLE? As we have already mentioned, in the unconstrained Gaussian MLE it is known that $n=pq$ samples are enough to guarantee the existence and uniqueness almost surely when the mean is known, and $n=pq+1$ when the mean is unknown. This number is, of course, enough in the constrained case as well, however, one would expect that this threshold can be reduced due to the decrease in the estimated number of parameters. Different necessary and sufficient conditions on the number of samples in the Gaussian KP scenario were proposed by a large number of works, see \cite{dutilleul1999mle, srivastava2008models, lu2004likelihood, werner2008estimation, ros2016existence} and references therein. In particular, in \cite{dutilleul1999mle} it was claimed that the number of samples needed to guarantee both the existence and uniqueness of the GFF solution in the unknown mean case, equals $\max\[p/q,q/p\] + 1$. Later, \cite{srivastava2008models} showed that, in fact, $\max[p,q] + 1$ matrix valued measurements are required to guarantee the uniqueness, assuming the estimator exists. In \cite{ros2016existence} the authors showed by a few simple counterexamples that both results from \cite{dutilleul1999mle} and \cite{srivastava2008models} are not correct. Instead, they claimed that ``As yet, there do not seem to be existence results for the case $n \in \[\max\[p/q,q/p\] +1, pq\]$,'' therefore, leaving this question open. 

Unlike the Gaussian setting, in the robust scenario the mean is usually assumed to be known. To the best of our knowledge, in this case the question of the minimal number of samples needed to ensure existence and uniqueness was not properly addressed thus far, except for the trivial necessary condition $n \geqslant \max\[p/q,q/p\]$ stemming from the definition of RFF, as shown below.

The main goal of this paper is to present tight thresholds for the necessary and sufficient conditions guaranteeing existence and uniqueness in both Gaussian and robust cases. Namely, we show that in the Gaussian setting, when the mean is not known, if $n < \max\[p/q,q/p\] +1$, the estimator, even if it exists, is not unique with probability one and, when $n > p/q+q/p + 1$, the solution exists and is unique almost surely. We also provide a discussion explaining that between these bounds the probabilities of existence and non-existence of a unique solution are positive. In the robust case with the mean known, the threshold is $\max\[p/q,q/p\]$. More specifically, if $n$ is less than this number, no unique solution exists, while if $n$ is greater than this value, the estimator almost surely exists and is unique.

The rest of the text is organized as follows. After we introduce notations, we define the Gaussian setting and formulate the problem. Then we discuss the state of the art results concerning the necessary and sufficient conditions for the existence and uniqueness. Section \ref{gauss_m_r_sec} presents our main result, shows the main idea of the proof and demonstrates it on a simple two dimensional example. Then we treat the robust scenario and provide our main contribution there. Finally, Sections \ref{GaussProofSec} and \ref{RobustProofSec} contain all the proofs for the Gaussian and robust cases, correspondingly.

\subsection{Notations}
We deal with real Euclidean spaces denoted by $\mathbb{R}^p$, whose elements are columns written as lower case bolds $\x$. The standard scalar products over such spaces are denoted by $(\cdot,\cdot)$ and the corresponding norms by $\norm{\cdot}$. By $\mathcal{M}_{p \times q}$ we denote the Euclidean space of real $p \times q$ matrices, written as upper case bolds $\X$. $\mathcal{S}(\mathbb{R}^p)$ stands for the linear space of symmetric $p \times p$ matrices and $\mathcal{P}(\mathbb{R}^p) \subset \mathcal{S}(\mathbb{R}^p)$ - for the open cone of positive definite matrices inside it. Note that $\mathcal{S}(\mathbb{R}^p)$ inherits from $\mathcal{M}_{p \times p}$ the natural structure of a Euclidean space with the Frobenius norm. $\I$ denotes the identity matrix of an appropriate dimension. For two spaces $\mathbb{R}^p$ and $\mathbb{R}^q,\;\; \mathbb{R}^p\otimes \mathbb{R}^q$ denotes their tensor product space and for two matrices $\P \in \mathcal{P}(\mathbb{R}^p),\; \Q \in \mathcal{P}(\mathbb{R}^q),\;\; \P\otimes\Q$ denotes their Kronecker product. The spectral norm of a matrix $\P \in \mathcal{P}(\mathbb{R}^p)$ is denoted by $\norm{\P}_2$ and its determinant by $|\P|$. Given a set $X$, its boundary is denoted by $\partial X$. For two sets $X$ and $Y,\; X \times Y$ denotes their Cartesian (direct) product. Given a subset $X$ of a linear space, $\sspan{X}$ denotes its span and $|X|$ - its cardinality. We use standard abbreviation a.s. to denote the almost sure convergence when the measure can be inferred from the context. The symbol $\sim$ replaces saying ``is distributed identically to''.

\section{Gaussian Setting and Problem Formulation}
\label{ProblemSec}
Assume we are given $n$ i.i.d. Gaussian matrix samples
\begin{equation}
\X_1,\dots,\X_n \sim \X, \qquad \X \sim \mathcal{MN}(\M, \P \otimes \Q),
\label{GaussSetting}
\end{equation}
where $\X_i\in \mathcal{M}_{p\times q},\; \P\in \mathcal P(\mathbb R^p)$ and $\Q\in \mathcal P(\mathbb R^q)$. Denote $X = \{\X_1,\dots,\X_n\}$, then, up to an additive constant and scaling, the negative log-likelihood reads as
\begin{equation}
\label{g_ll}
\widetilde f_\mathcal{N}(\M,\P\otimes\Q;X) = \frac{1}{n} \sum_{i=1}^n \Tr{\P^{-1}(\X_i -\M) \Q^{-1}(\X_i -\M)^\top} + \ln|\P\otimes\Q|,
\end{equation}
and is defined over the set $\mathcal M_{p \times q} \times \mathcal M_{\mathcal{N}}$, with
\begin{equation}
\label{MGaussDomain}
\mathcal M_{\mathcal{N}} = \{\P\otimes \Q\mid \P\in \mathcal P(\mathbb R^p),\;\Q\in \mathcal P(\mathbb R^q)\}\subset \mathcal P(\mathcal{M}_{p\times q}).
\end{equation}
Here, the matrix $\P\otimes \Q$ is identified with the positive operator $\P\otimes \Q : \mathcal{M}_{p\times q}\to \mathcal{M}_{p\times q}$ acting by the rule $\X\to \P\X\Q$. The scalar product on $\mathcal{M}_{p\times q}$ is given by $(\A,\B) = \Tr{\A\B^\top}$. Note that $\mathcal M_{\mathcal{N}}$ can be identified with the set
\begin{equation}
\mathcal M_{\mathcal{N}} \cong \{(\P,\Q)\mid \|\P\|_2 = 1\}\subset \mathcal P(\mathbb R^p)\times \mathcal P(\mathbb R^q),
\label{m_n_def}
\end{equation}
where the specific normalization can be chosen arbitrarily.

\begin{rem}
\label{not_rem}
Below we assume the following notational convention: when the set $\mathcal M_{\mathcal{N}}$ is viewed as a subspace of $\mathcal P(\mathcal{M}_{p\times q})$ as in (\ref{MGaussDomain}), the arguments of the negative log-likelihood are written as $\widetilde f_\mathcal{N}(\M,\P\otimes\Q;X)$, however, when $\mathcal M_{\mathcal{N}}$ is identified with a subset of $\mathcal P(\mathbb R^p)\times \mathcal P(\mathbb R^q)$ defined by (\ref{m_n_def}), we write the arguments as $\widetilde f_\mathcal{N}(\M,\P,\Q;X)$, with
\begin{equation}
\widetilde f_\mathcal{N}(\M,\P,\Q;X) = \widetilde f_\mathcal{N}(\M\otimes\P,\Q;X).
\end{equation}
Below we use the same rule for other similar functions and the specific representation of the underlying set can be inferred from the way arguments are written.
\end{rem}

Identification (\ref{m_n_def}) allows us to consider elements $\P\otimes\Q \in \mathcal M_{\mathcal{N}}$ (under proper normalization $\P\otimes\Q = \(\P/\|\P\|_2\)\otimes \(\|\P\|_2\Q\)$ if needed) as pairs $(\P,\Q)$. In addition, it endows $\mathcal M_{\mathcal{N}}$ with a smooth manifold structure making $\widetilde f_\mathcal{N}$ a smooth function over it. The covariance MLE under the KP constraint can now be written as a solution to the following program
\begin{equation}
\min_{\substack{\M \in \mathcal{M}_{p \times q}, \\ (\P,\Q)\in \mathcal M_{\mathcal{N}}}} \widetilde f_\mathcal N(\M,\P,\Q;X).
\end{equation}
As in the unconstrained Gaussian case, this program decouples into minimization w.r.t. (with respect to) the unknown mean $\M$, yielding
\begin{equation}
\widehat\M = \frac{1}{n}\sum_{i=1}^n\X_i,
\label{avg_def}
\end{equation}
and minimization w.r.t. to $\P$ and $\Q$. Note that $\ln|\P\otimes\Q| = q\ln|\P|+p\ln|\Q|$, and denote $\Y_i = \X_i - \widehat\M$, then the first-order optimal conditions for $\P$ and $\Q$ read as
\begin{equation}
\begin{cases}
\P = \displaystyle \frac{1}{qn}\sum_{i=1}^n \Y_i \Q^{-1}\Y_i^\top, \\
\Q = \displaystyle \frac{1}{pn}\sum_{i=1}^n \Y_i^\top \P^{-1}\Y_i.
\end{cases}
\label{gff_sys}
\end{equation}
There does not exist a closed form analytic solution to (\ref{gff_sys}), therefore, it is usually solved numerically via the so-called Flip-Flop (FF) iterative scheme \cite{dutilleul1999mle}, which we call the Gaussian FF (GFF). The GFF algorithm works as follows. Starting from an initial guess $(\P_0,\Q_0) \in \mathcal M_{\mathcal N}$ for $(\P,\Q)$, we plug it into the right-hand side of (\ref{gff_sys}) and get a new pair $(\P_1,\Q_1)$. After we normalize this pair to make it formally belong to $\mathcal M_{\mathcal N}$, we apply the procedure to $(\P_1,\Q_1)$ instead $(\P_0,\Q_0)$ and so on as shown in diagram (\ref{ff-diag}),
\begin{equation}
\begin{cases}
\widetilde \P_{j+1} = \displaystyle \frac{1}{qn}\sum_{i=1}^n \Y_i \Q_j^{-1}\Y_i^\top, \\
\widetilde \Q_{j+1} = \displaystyle \frac{1}{pn}\sum_{i=1}^n \Y_i^\top \P_j^{-1}\Y_i, \\
\P_{j+1} = \displaystyle \frac{1}{\|\widetilde \P_{j+1}\|_2} \widetilde \P_{j+1}, \\ 
\Q_{j+1} = \displaystyle \|\widetilde \P_{j+1}\|_2\widetilde \Q_{j+1}.
\end{cases}
\label{gff_norm}
\end{equation}
The consecutive pairs $(\P_j,\Q_j)$ serve as approximations to the true solution, therefore, the convergence of this sequence to the minimum of $\widetilde f_\mathcal N(\widehat{\M},\P,\Q;X)$ on $\mathcal M_\mathcal N$ (if it exists) constitutes one of the central topics of our paper. We start form listing the existing results on the questions at hand.

\section{Existence, Uniqueness and Convergence: State of the Art}
\label{StateOfArtSec}
Having derived the G-CARMEL (Gaussian KRonecker product MLE) solution and obtained an iterative scheme for its calculation, our next goal is to determine the necessary and sufficient conditions for its existence and uniqueness and for the convergence of the GFF procedure. The only parameter under our control is the required number of i.i.d. normal samples, therefore, below we focus on the question: How many measurements one needs to guarantee existence, uniqueness and convergence almost surely?

\begin{itemize}
[leftmargin=*]
\item {\bf{Existence}}. We start from the sufficient conditions. It was claimed in \cite{dutilleul1999mle} that $\max[p/q,q/p]+1$ samples are needed for the existence and uniqueness of the MLE solution in the Gaussian case. However, it was later shown by a counterexample \cite{ros2016existence} that the uniqueness does not follow from this condition. In addition, the authors of \cite{ros2016existence} write that ``Moreover, it is not known whether it [this condition] guaranties existence, because it is not sufficient to show that all updates of the FF algorithm have full rank as is done in \cite{dutilleul1999mle}. It could still happen that the sequence of updates converges (after infinitely many steps) to a Kronecker product that does not have a full rank with the likelihood converging to its supremum.'' It is also claimed in \cite{ros2016existence} that no less than $pq + 1$ samples are required to ensure the existence a.s. This number of measurements coincides with the one needed in the unconstrained case and does not explore the KP structure. Finally, the authors of \cite{ros2016existence} conclude that nothing can be said regarding the existence, if the number of samples lies inside the interval $n \in \[\max[p/q,q/p +1, pq\]$.

The necessary conditions were also treated in \cite{dutilleul1999mle}, where the author claims that if the estimator exists, then $n \geqslant \max\[p/q,q/p\] + 1$. This is clearly true, since if the number of samples is less than this threshold, at least one of the right-hand sides in (\ref{gff_sys}) is rank deficient and cannot be invertible.

\item {\bf{Uniqueness}}.
As summarized in \cite{ros2016existence}, the author of \cite{dutilleul1999mle} claims that the G-CARMEL is unique whenever $n \geqslant \max\[p/q,q/p\]+1$. Later, the authors of \cite{srivastava2008models} stated that indeed $n \geqslant \max[p,q] + 1$ is needed to ensure the uniqueness. Here again, \cite{ros2016existence} succeeded to find counterexamples showing that both these bounds do not guarantee uniqueness. Moreover, the paper \cite{ros2016existence} describes the exact parts of the proofs which seem to contain mistakes, however, the correct lower bounds on the number of required samples are not provided. In fact, to the best of our knowledge, tight sufficient conditions for the uniqueness have not been reported so far.

\item {\bf{Convergence of the Gaussian Flip-Flop Algorithm}}.
The last question regarding the G-CARMEL on which we focus, is the convergence of the GFF iterative scheme. In \cite{dutilleul1999mle} the author establishes the convergence of the GFF technique empirically. He claims that if the limiting points of the sequences $\{\widehat{\Q}_j\}$ and $\{\widehat{\P}_j\}$ (in his notations) do not depend on the initial point, and an additional condition on the second derivatives of the objective is satisfied at the limiting points, then these limits provide the G-CARMEL solution. If such limiting points are not uniquely determined, but rather depend on the initial guesses, they must provide local extrema of the likelihood function. Unfortunately, this empirical approach can hardly be applied in practice and does not provide a strict criterion for the convergence of the GFF.

The authors of \cite{lu2005likelihood, lu2004likelihood} claim that when the number of samples is $n \geqslant pq + 1$, the GFF is guarantied to converge, however they doubt if it really converges to the MLE, since the ``parameter space of $(p,q)$-separable covariance matrices is not convex''. They emphasize that for some values of $n$ the algorithm can converge to many different estimates, depending on the starting value. Finally, they conjecture that for $n$ large enough ``the limit point of the GFF can safely be regarded as the unique MLE'' without proving this statement.

In \cite{werner2008estimation} theoretical asymptotic properties of the GFF algorithm are considered and the algorithm's performance for small sample sizes is investigated with a simulation study.
\end{itemize}

The main contributions of the Gaussian part of our paper consist in 
\begin{itemize}
\item proving tight sufficient and necessary conditions for the a.s. existence and uniqueness of the G-CARMEL estimate,
\item showing that the sufficient conditions also imply convergence of the GFF iterations to the unique solution starting from any initial guess.
\end{itemize}

\section{Main Results and Arguments}
\label{gauss_m_r_sec}
In this section we state our main result in the normal case, give the intuition behind the proof argument and demonstrate our technique on a simple example in a low dimension.

\subsection{The Main Statement}
\begin{theorem}
\label{main_g}
Assume $X = \{\X_1,\dots\X_n\}$ are independently sampled from a continuous distribution over $\mathcal{M}_{p \times q}$ and consider the problem of minimizing $\widetilde f_{\mathcal N}(\M,\P,\Q;X)$ over $\mathcal{M}_{p \times q} \times \mathcal M_{\mathcal{N}}$, then
\begin{enumerate}[leftmargin=0.75cm]
\item if $n < \max\[p/q,q/p\] + 1$, there is no unique minimum,
\item if $n> p/q+q/p + 1$, there is a unique minimum a.s.,
\item if $n> p/q+q/p + 1$, the GFF converges starting from any point of $\mathcal M_{\mathcal{N}}$ to this unique minimum a.s.
\end{enumerate}
\end{theorem}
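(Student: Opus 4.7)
The plan is to handle the three items in turn, leveraging the geodesic convexity of $\widetilde f_\mathcal{N}$ on $\mathcal{M}_\mathcal{N}$ (to be established separately) as the unifying tool. After substituting the explicit optimizer $\widehat{\M}$ from (\ref{avg_def}), the problem reduces to minimizing over $(\P,\Q)\in\mathcal{M}_\mathcal{N}$ with the centered samples $\Y_i=\X_i-\widehat{\M}$ satisfying $\sum_i\Y_i=0$.

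For item 1, assume without loss of generality $p\geqslant q$. Then $n\leqslant p/q$ implies $(n-1)q<p$, and since the columns of all the $\Y_i$ together span at most $(n-1)q$ dimensions of $\mathbb{R}^p$ (one $\Y_i$ is determined by the others through the centering relation), there exists a unit vector $\u\in\mathbb{R}^p$ with $\u^\top\Y_i=0$ for every $i$. Along the family $\P_\lambda=(\I-\u\u^\top)+\lambda\u\u^\top\in\mathcal{M}_\mathcal{N}$ with $\Q$ held fixed, the trace term in (\ref{g_ll}) is independent of $\lambda$ (the $\u$-direction of $\P^{-1}$ is annihilated by $\u^\top\Y_i=0$), while $q\ln|\P_\lambda|=q\ln\lambda\to-\infty$ as $\lambda\to 0^+$. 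Hence $\widetilde f_\mathcal{N}$ is unbounded below, and no minimum---unique or otherwise---can exist.

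For items 2 and 3, the argument rests on geodesic strict convexity of $\widetilde f_\mathcal{N}$ on $\mathcal{M}_\mathcal{N}$ together with coercivity along every ray to the boundary. A Riemannian ray in $\mathcal{M}_\mathcal{N}$ has the form $(\P_0\exp(tH_P),\Q_0\exp(tH_Q))$; simultaneously diagonalizing $H_P$ and $H_Q$, the trace term becomes $\sum_{i,k,\ell}e^{-t(\lambda_k+\mu_\ell)}|a_{i,k\ell}|^2$ with $a_{i,k\ell}$ the components of $\Y_i$ in the product eigenbasis, and the log-determinant grows linearly with slope $q\sum_k\lambda_k+p\sum_\ell\mu_\ell$. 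Coercivity reduces to a generic-position counting lemma: the $\Y_i$ must not collectively be supported inside a product of a too-small column subspace of $\mathbb{R}^p$ and a too-small row subspace of $\mathbb{R}^q$ in configurations where the negative log-determinant growth beats the positive trace growth. Writing out the codimension of each such offending configuration against the dimensions of the corresponding Grassmannians, and accounting for the centering cost of one sample, singles out the threshold $n>p/q+q/p+1$; the two summands track the two independent ways $\P$ and $\Q$ can deteriorate in offsetting directions, and the $+1$ is the centering correction. Strict convexity then follows once sublevel sets are compact and the data is in sufficiently general position, giving uniqueness. For item 3, the GFF updates (\ref{gff_norm}) are exact block minimizations in $\P$ and $\Q$ in turn with likelihood-invariant rescaling, so function values decrease monotonically and are bounded below by the minimum just obtained; compactness of sublevel sets yields a convergent subsequence whose limit satisfies (\ref{gff_sys}) and is therefore a critical point, which by strict convexity must coincide with the unique global minimum. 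A standard subsequence argument then upgrades this to convergence of the whole sequence.

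The main obstacle is the tightness in item 2. One-sided rank accounting yields only the necessary bound $\max[p/q,q/p]+1$; recovering the sharp $p/q+q/p+1$ requires understanding joint degenerations in which $\P$ and $\Q$ simultaneously deteriorate in opposite directions, so that the log-determinant slope is governed by a sum rather than a maximum of eigenvalue contributions. The generic-position step must then rule out the non-trivial measure-zero algebraic subvariety of $(\mathcal{M}_{p\times q})^n$ on which the centered data admits such a joint degeneration---this codimension computation is the technical heart of the proof.
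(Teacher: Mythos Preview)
Your plan follows the paper's route: geodesic convexity of $\widetilde f_\mathcal{N}$ on $\mathcal{M}_\mathcal{N}$, coercivity on the boundary established by showing the set of bad sample configurations has measure zero via a dimension count, and block coordinate descent for item~3. Your item~1 is in fact a bit sharper than the paper's (which merely notes that the right-hand side of (\ref{gff_sys}) is rank-deficient): you show the objective is unbounded below.

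One point needs tightening for item~2. The obstruction is \emph{not} that the centered $\Y_i$ lie in a tensor product $V'\otimes U'$ of small subspaces. The genuine bad event is the existence of a subspace $E\subset\mathbb{R}^q$ with $\dim\bigl(\sum_i\Y_iE\bigr)\le(p/q)\dim E$; equivalently, the $\Y_i$ are simultaneously block upper-triangular with respect to a pair of flags in $\mathbb{R}^p$ and $\mathbb{R}^q$ (see the $p=q=2$ case in Section~\ref{ex_case_st}, where the issue is a common direction $\t$ along which $\X_1\t,\X_2\t$ are parallel---not a product support condition). This triangularity variety is strictly larger than the product-subspace one you describe, and it is \emph{its} codimension that produces the threshold $p/q+q/p$. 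The paper handles this by parametrizing the incidence variety $\mathcal Z_{d\,n\,r}\subset\Hom(U,V)^n\times\Gr_{d\,d}(U)$ (Definition~\ref{def:diagram}) and applying Sard's theorem (Corollary~\ref{cor:LebZero}); the flag apparatus of Section~\ref{GaussProofSec} then stratifies the ways eigenvalues of $\P$ and $\Q$ can diverge and matches each rate pattern to the corresponding triangularity constraint (Theorem~\ref{theorem:continuity}). Your ray analysis with simultaneously diagonalized $H_P,H_Q$ can recover this, but only once you recognize that the zero-pattern forced on the $\Y_i$ is block-triangular, not block-diagonal; computing codimension on the product-support variety instead would give the wrong threshold.
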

\begin{proof}
This is a direct corollary of Theorem~\ref{theorem:continuityExpVar} from Section \ref{main_th_pr_sec}.
\end{proof}

\begin{rem}
The statement of the theorem is valid for any continuous distribution and is not limited to the Gaussian ones. Indeed, the claim does not assume any specific statistical model and does not provide statistical guaranties (e.g. consistency), but rather treats the questions of the existence and uniqueness of the minimum.
\end{rem}

\begin{rem}
Note the gap between items 1) and 2) containing one (when $p \neq q$) or two (when $p=q$) integer points which cannot be eliminated. We discuss this phenomenon below in more detail.
\end{rem}

\subsection{Sketch of the The Proof}
In this section we discuss the main building blocks of the proof of Theorem~\ref{main_g} leaving the technical details to Section \ref{GaussProofSec}.
\begin{itemize}
[leftmargin=*]
\item {\textbf{Reduction to the Centered Case}}.
Let
\begin{equation}
\widetilde g_{\mathcal N}(\P\otimes\Q;X) = \widetilde f_{\mathcal N}(0,\P\otimes\Q;X),
\end{equation}
then minimization of $g_\mathcal N$ over $\mathcal M_{\mathcal{N}}$ does not require optimization w.r.t. the mean parameter. The general case with the unknown expectation can be reduced to it through the following observation. Given a family $X = \{\X_1,\dots,\X_n\} \subset \mathcal{M}_{p \times q}$ of $n$ random $p\times q$ matrices, there always exists another family $Y=\{\Y_1,\ldots,\Y_{n-1}\} \subset \mathcal{M}_{p \times q}$ of $n-1$ random matrices, such that
\begin{equation}
\widetilde g_{\mathcal N}(\P\otimes\Q;Y) = \widetilde f_{\mathcal N}(\widehat{ \M},\P\otimes\Q;X).
\end{equation}
Lemma~\ref{lemma:continuityExpVar} from Section \ref{GaussProofSec} shows why this is true and justifies our transition to the zero mean case. In the remainder of this section we treat the zero mean setting.

\item {\textbf{Necessary Conditions}}.
Since we require the solution to be composed of invertible matrices $\P$ and $\Q$, (\ref{gff_sys}) must hold at the extremum point. Note that its right-hand side is not invertible for $n<\max[p/q,q/p]$, therefore, returning to the non-centered case and compensating for this by adding one sample, yields item 1) of Theorem~\ref{main_g}.

\item {\textbf{Sufficient Conditions}}.
To derive the sufficient conditions, in Section \ref{GaussProofSec} we change the parametrization of $\widetilde g_\mathcal N$ by 
\begin{equation}
g_\mathcal N(\P\otimes\Q;X) = \widetilde g_\mathcal N (\P^{-1}\otimes\Q^{-1};X) = \frac{1}{n}\sum_{i=1}^n(\P\X_i\Q,\X_i) - \ln\det{ \P\otimes \Q},
\label{g_n_ex}
\end{equation}
and introduce a specific metric over $\mathcal P(\mathbb R^p \times \mathbb R^q)$, w.r.t. which the set $\mathcal M_{\mathcal{N}}$ and the function $g_\mathcal N(\P\otimes\Q;X)$ are convex. The desired solution exists and is unique if and only if $g_\mathcal N$ continuously tends to $+\infty$ on the boundary as shown in Lemma~\ref{lemma:hatgProp}, in which case it is also strictly convex. Theorem~\ref{theorem:continuity} then demonstrates that this happens a.s. w.r.t. the distribution of $X$ when $n > p/q + q/p$. In the next section we demonstrate the reasoning behind these claims by exploiting the $p=q=2$ case in more detail.

\item {\textbf{Convergence of the GFF}}. Suppose we are given a pair of matrices $(\P_0,\Q_0) \in \mathcal{M}_\mathcal N$ and use (\ref{gff_norm}) to generate the sequence
\begin{equation}
\xymatrix{
	{\P_0}\ar[dr]&{\P_1}\ar[dr]&{\P_2}\ar[dr]&{\ldots}\\
	{\Q_0}\ar[ur]&{\Q_1}\ar[ur]&{\Q_2}\ar[ur]&{\ldots}
}
\label{ff-diag}
\end{equation}
Here, the successive iterates $\P_j,\Q_{j+1},\P_{j+2},\ldots$ are obtained by minimizing $g_\mathcal N$ w.r.t. $\Q$ when $\P_j$ is fixed and similarly by minimizing w.r.t. $\P$ when $\Q_j$ is fixed, etc. As we have mentioned in the previous paragraph, $g_\mathcal N$ is a.s. strictly convex and tends to $+\infty$ on the boundary when $n > p/q + q/p$. This guarantees a decrease of the target function on each iteration and the convergence of the sequence $(\P_j,\Q_{j+1})$ to the unique minimum, hence, $(\P_j,\Q_j)$ converges as well.
\end{itemize}
Let us now illustrate the main arguments by a simple low dimensional example.

\subsection{$p=q=2$ Case Study}
\label{ex_case_st}
Assume $\mathbb R^p = \mathbb R^q = \mathbb R^2, \; X = \{\X_1,\dots,\X_n\}\subset \mathcal M_{2\times 2}$ consists of $n$ matrices and we deal with the case of zero mean. We are going to show a bit more than we have announced in the previous section, namely, we will prove that
\begin{enumerate}
\item If $n = 1$, the set of minima is non empty and forms a submanifold of dimension $3$ with probability one. In particular, a minimum exists but is not unique.

\item If $n = 2$, there exists a polynomial $D(\X_1,\X_2)$ such that
\begin{itemize}
\item if $D(\X_1,\X_2)\geqslant 0$, there is no unique minimum of $g_\mathcal N$ over $\mathcal M_\mathcal N$,

\item if $D(\X_1,\X_2)<0$, there is a unique minimum of $g_\mathcal N$ over $\mathcal M_\mathcal N$,
\end{itemize}
both happening with positive probabilities.

\item If $n > 2$, there is a unique minimum of $g_\mathcal N$ over $\mathcal M_\mathcal N$.
\end{enumerate}

As explained in Section \ref{metricSubSec}, the set $\mathcal M_{\mathcal{N}} \subset \mathcal P(\mathbb R^2 \times \mathbb R^2)$ is convex w.r.t. to a specific metric change. In addition, Lemma~\ref{lemma:hatgProp} demonstrates that the solution to the optimization at hand exists and is unique if and only if $g_\mathcal N$ continuously tends to $+\infty$ on the boundary, which we use below.

\textbf{1) $n=1$ case.} Here, both equations in (\ref{gff_sys}) (after replacing $\P$ and $\Q$ by their inverses) become identical to
\begin{equation}
\P^{-1} = \frac{1}{2}\X_1\Q\X_1^\top.
\end{equation}
This equation defines a submanifold $\mathcal M_m \subset \mathcal M_\mathcal N$ isomorphic to $\mathcal P(\mathbb R^2)$ containing $\Q$-s, and, therefore, having
dimension $3$. A straightforward computation shows that the value of $g_\mathcal N$ is constant on $\mathcal M_m$. Since $g_\mathcal N$ is convex, all points of $\mathcal M_m$ are minima.

\textbf{2) $n=2$ case.} It turns out that the critical question defining the behavior of the solution here is whether there exists a vector $\t\in \mathbb R^2$ such that $\X_1\t$ and $\X_2\t$ are parallel. If the answer is negative, the minimum exists and is unique, otherwise, if it exists, it is not unique. As Lemma~\ref{lemma:hatgProp} item 4) shows below, such vector $\t$ does not exist if and only if $g_\mathcal N$ tends to $+\infty$ on the boundary of $\mathcal M_{\mathcal{N}}$. Next we explain the reasoning in more detail and explicitly construct such $\t$.

Consider a sequence $\mathcal M_{\mathcal{N}} \ni \{\(\P_j,\Q_j\)\}\to\partial\mathcal M_{\mathcal{N}}$, meaning that either $\P_j$ tend to a singular matrix or the norms of $\Q_j$ are unbounded (or both). Below we suppress the $j$ indexing of the sequences to simply notations. In other words we distinguish between two cases: a) either $\|\P\|_2$ and $\|\Q\|_2$ are bounded or b) we may assume that in some appropriately chosen bases $\{\s_1,\s_2\}$ and $\{\t_1,\t_2\}$ for $\P$ and $\Q$, respectively, we have
\begin{equation}
\P =
	\begin{pmatrix}
		{\alpha}&{0}\\
		{0}&{1}
	\end{pmatrix}
\quad
\Q =
	\begin{pmatrix}
		{\mu}&{0}\\
		{0}&{\eta}
	\end{pmatrix},
\label{div_seq}
\end{equation}
where $\alpha \leqslant 1$ and we assume (after swapping the eigenbasis of $\Q$, if necessary) that $\mu \to +\infty$ not slower than $\eta$ (if $\eta$ is bounded this is vacuously true). 

In the first case, when the spectral norms are bounded, the trace term of $g_\mathcal N$ is bounded. Since in this scenario at least one of the matrices must tend to a singular one, $\ln\det{\P\otimes \Q}\to -\infty$, implying $g_\mathcal N\to +\infty$. In the second case we have sequence (\ref{div_seq}) and note that the logarithmic term of $g_\mathcal N$ has a summand tending to $-\infty$ with the rate not greater than $\ln\mu$. Assume $\X_1\t_1$ and $\X_2\t_1$ are not parallel for all $\t_1$, then at least one of $\X_i$-s has a non zero $(2,1)$ element. Suppose this is $\X_1= \(\begin{smallmatrix}x_{11}&x_{12}\\ x_{21}&x_{22}\end{smallmatrix}\)$ with $x_{21}\neq 0$, then the scalar products part of $g_\mathcal N$ is not less than $\frac{1}{2}|x_{21}|^2\mu$. Hence, this part of $g_\mathcal N$ tends to $+\infty$ faster than the negative part and totally $g_\mathcal N\to +\infty$ on the sequence at hand.

Now suppose that there does exist a vector $\t\in \mathbb R^2$ such that $\X_1\t$ and $\X_2\t$ are collinear. Normalize $\t$ and form an orthonormal basis $\{\t,\t'\}$ in the space of $\Q$. After this, normalize $\X_1\t$, which we denote by $\s$, and complete it to an orthonormal basis $\{\s,\s'\}$ in the space of $\P$. In these bases each $\X_i$ reads as (here we omit index $i$ in matrix elements for simplicity)
\begin{equation}
	\X_i =
	\begin{pmatrix}
		{x_{11}}&{x_{12}}\\
		{0}&{x_{22}}
	\end{pmatrix}.
\end{equation}
Now define a new sequence in the chosen bases
\begin{equation}
	\P=
	\begin{pmatrix}
		{\frac{1}{\mu}}&{0}\\
		{0}&{1}
	\end{pmatrix},
	\quad
	\Q =
	\begin{pmatrix}
	{\mu}&{0}\\
	{0}&{1}
	\end{pmatrix},
\end{equation}
with $\mu\to +\infty$. Then $\P\otimes \Q$ tends to the boundary of $\mathcal M_{\mathcal{N}}$,
\begin{equation}
\ln\det{\P\otimes \Q} = 2\ln\det{\P} + 2\ln\det{\Q}
=2\ln\det{\P\Q} = 0,
\end{equation}
and for each $\X_i$,
\begin{equation}
(\P\X_i\Q, \X_i) = x_{12}^2 + x_{21}^2 + x_{22}^2 \frac{1}{\mu}.
\end{equation}
Hence, $g_\mathcal N$ is bounded on the sequence $\{\(\P,\Q\)\}$ and we are done with this case.

Next we derive a condition on $\X_1$ and $\X_2$ telling whether such a mutual $\t$ exists, which will suggest us the probability of such event. Let in the original bases $\X_1,\; \X_2$ and $\t$ read as
\begin{equation}
	\X_1 =
	\begin{pmatrix}
		{x}&{y}\\
		{u}&{v}\\
	\end{pmatrix},
	\quad
	\X_2 =
	\begin{pmatrix}
		{a}&{b}\\
		{c}&{d}\\
	\end{pmatrix},
	\quad
	\t =
	\begin{pmatrix}
		{\alpha}\\
		{\beta}\\
	\end{pmatrix}.
\end{equation}
We look for all triples $(\X_1,\X_2,\t)$ such that $\X_1\t$ and $\X_2\t$ are collinear, which is equivalent to
\begin{equation}
	\begin{vmatrix}
		{\alpha x + \beta y}&{\alpha a + \beta b}\\
		{\alpha u + \beta v}&{\alpha c + \beta d}\\
	\end{vmatrix}
	=0.
\end{equation}
The latter can be written as
\begin{equation}
	\begin{vmatrix}
		{x}&{a}\\
		{u}&{c}\\
	\end{vmatrix}
	\alpha^2 + \(
	\begin{vmatrix}
		{x}&{b}\\
		{u}&{d}\\
	\end{vmatrix}
	+
	\begin{vmatrix}
		{y}&{a}\\
		{v}&{c}\\
	\end{vmatrix}
	\) \alpha \beta +
	\begin{vmatrix}
		{y}&{b}\\
		{v}&{d}\\
	\end{vmatrix}
	\beta^2
	=0.
\end{equation}
Calculate the discriminant of this quadric
\begin{equation}
	D(\X_1,\X_2) = \(
		\begin{vmatrix}
			{x}&{b}\\
			{u}&{d}\\
		\end{vmatrix}
		+
		\begin{vmatrix}
			{y}&{a}\\
			{v}&{c}\\
		\end{vmatrix}
	\)^2
	-
	4
	\begin{vmatrix}
		{x}&{a}\\
		{u}&{c}\\
	\end{vmatrix}
	\begin{vmatrix}
		{y}&{b}\\
		{v}&{d}\\
	\end{vmatrix}.
\end{equation}
Note that $D(\X_1,\X_2)$ is a non-zero polynomial and there is $\t\in \mathbb{R}^2$ with the required properties if and only if $D\geqslant 0$. If the density of the distribution of $(\X_1,\X_2)$ is a.s. non-zero, then clearly $D\geqslant 0$ and $D<0$ both hold with non-zero probabilities.

\textbf{3) $n>2$ case.} Here a similar computation shows that we a.s. cannot find a vector $\t$ such that $\X_i\t$ are collinear, therefore the above arguments imply the existence and uniqueness of the minimum.

\subsection{Remarks}
To summarize, the answer to the existence and uniqueness question can be completely described in terms of the following indicator variable:
\begin{equation}
\zeta(X) = \mathop{\essinf}_{\u\in \mathbb{R}^2 \setminus 0}\dim\(\sum_{\X_i\in X} \X_i \langle \u\rangle\).
\end{equation}
When $n=1$, $\zeta(X) = 1$ a.s., in the case $n>2$, $\zeta(X) = 2$ a.s., and these two situations correspond to the uniqueness and non uniqueness. When $n=2$, we have
\begin{equation}
	\zeta(X) =
	\begin{cases}
		1, &\text{if $D \geqslant 0$},\\
		2, &\text{if $D <0$},
	\end{cases}
\label{ind_f}
\end{equation}
where both events happen with non-zero probabilities, i.e. $\zeta(X)$ is not a constant a.s. This intuitively explains the one sample gap between the necessary and sufficient conditions in Theorem~\ref{main_g}.

In arbitrary dimension the ideas described above generalize as following. In order to guarantee the desired asymptotic behavior of the target function, our aim would be to avoid the following situation: there is a random subspace $U\subseteq \mathbb R^q$ such that the dimension of $\sum_{i=1}^n \X_i U$ is less than $\min [n\dim U, p]$ with non-zero probability. As the proof of Theorem~\ref{main_g} shows, when the number of samples satisfies the required condition, such event will a.s. not happen.

Let us focus on the gap between the necessary and sufficient conditions appearing in the statement of Theorem, \ref{main_g} 
\begin{equation}
\mathcal{I} = \[\max\(\frac{p}{q},\frac{q}{p}\) +1,\;\;\frac{p}{q}+\frac{q}{p} +1\].
\end{equation}
This interval contains $2$ points in case $p=q$ and only $1$ point otherwise. The same argument as in the example above (note that in the example we considered the known mean case, therefore the $n=1$ case considered there would correspond to $n=2$ here) shows that when $p=q$ and $n=\max\[p/q,q/p\]+1 = 2 \in \mathcal{I}$, there are multiple minima. Therefore, there is only one untreated integer point $p/q+q/p +1$ left inside the interval $\mathcal{I}$. However, we do not investigate deeply the behavior of this remaining value due to the following reason. As the two dimensional example above suggests (this corresponds to the case $n=2$ case in the example), in this case uniqueness and non-uniqueness happen with non-zero probabilities, making the analysis hard. Since there is only one untreated point left and  the treatment involves quite non-trivial calculations, the game does not worth the candle. We believe that in general dimensions this missing point of the interval exhibits the same behavior, and both events ``existence and uniqueness'' and ``existence and non-uniqueness'' happen with non-zero probabilities.

\section{Robust Kronecker Product Covariance Estimation}
\subsection{Tyler's Estimator}
As we have already explained in the Introduction, when robust covariance estimation is considered, the most popular tool used by practitioners is the so-called family of covariance $M$-estimators introduced by R.A. Maronna \cite{maronna1976robust}. We focus on a distribution-free member of this class introduced by D.E. Tyler and named after him \cite{tyler1987distribution}. Tyler's covariance estimator, given by formula (\ref{tylerequ}), can be equivalently defined as a covariance parameter MLE of a certain spherical distribution \cite{greco2013cramer, soloveychik2014non} as follows.
\begin{definition}
\label{def}
Assume $\bm\Theta_0 \in \mathcal{P}(\mathbb{R}^p)$, then
\begin{equation}
p(\x) = \frac{\Gamma(p/2)}{2\sqrt{\pi}^p} \frac{1}{\sqrt{|\bm\Theta_0|}(\x^H  \bm\Theta_0^{-1}\x)^{p/2}}
\label{tyler_distr}
\end{equation}
is a probability density function of a vector $\x \in \mathbb{R}^p$ lying on a unit sphere. This distribution is usually referred to as the Real Angular Central Elliptical (RACE) distribution \cite{greco2013cramer}, and we denote it as $\x \sim \mathcal{U}(\bm\Theta_0)$. The matrix $\bm\Theta_0$ is referred to as a shape matrix of the distribution.
\end{definition}
RACE distribution is closely related to the class of Generalized Elliptical (GE) populations, which includes Gaussian, compound Gaussian, elliptical, skew-elliptical, RACE and other distributions \cite{frahm2007tyler}. An important property of the GE family is that the shape matrix of a population does not change when the vector is divided by its Euclidean norm \cite{frahm2004generalized, frahm2007tyler}. After normalization any GE vector becomes RACE distributed. This allows us to treat all these distributions together using Tyler's estimator, which is the MLE of the shape matrix parameter in RACE populations and is unbiased when a specific scaling is fixed \cite{greco2013cramer, soloveychik2014non}.

\subsection{Robust Setting and Problem Formulation}
In order to proceed to the KP structured robust covariance estimation, we introduce the following setting. Assume we are given $n$ i.i.d. centered real $p\times q$ matrix measurements $X=\{\X_1,\dots,\X_n\}$ and our goal is to determine what is the minimal number of samples $n$ needed to ensure the existence and uniqueness of Tyler's estimator under the KP constraint. We use the MLE formulation of Tyler's estimator and consider the corresponding optimization program. Specifically, we search for positive definite $\P$ and $\Q$ minimizing the target
\begin{equation}
\widetilde f_\mathcal{E}(\P\otimes\Q;X) = \frac{1}{pq}\ln|\P \otimes \Q|
+ \frac{1}{n} \sum_{i=1}^n \ln\(\Tr{\P^{-1}\X_i\Q^{-1}\X_i^\top}\),
\label{e_ll}
\end{equation}
which is a robust version of the G-CARMEL estimator and is named R-CARMEL.

The target $\widetilde f_\mathcal{E}$ is naturally defined over $\mathcal M_{\mathcal{N}}$ introduced in (\ref{MGaussDomain}) and Remark \ref{not_rem} applies here as well, therefore, we use the same notational convention. In addition, $\widetilde f_\mathcal{E}$ is scale invariant $\widetilde f_\mathcal{E}(\lambda \P\otimes\Q;X) = \widetilde f_\mathcal{E}(\P\otimes\Q;X)$, hence, we rather consider $\widetilde f_\mathcal{E}$ over
\begin{equation}
\mathcal M_{\mathcal E} = \mathcal M_{\mathcal{N}}/\{
\P\otimes \Q \sim \lambda \P\otimes \Q,\; \lambda > 0\}.
\end{equation}
The induced map $\mathcal M_{\mathcal{N}} \to \mathcal M_{\mathcal E}$ is surjective and has no critical points. The composition $\mathcal P(V)\times \mathcal P(U)\to \mathcal M_{\mathcal{N}} \to \mathcal M_{\mathcal{E}}$ admits a section, thus, we may treat $\mathcal M_{\mathcal{E}}$ as
\begin{equation}
\mathcal M_{\mathcal{E}} \cong  \{(\P,\Q)\mid \|\P\|_2 =\|\Q\|_2 = 1\}\subset \mathcal P(\mathbb R^p)\times \mathcal P(\mathbb R^q),
\label{m_e_def}
\end{equation}
which provides $\mathcal M_{\mathcal{E}}$ with a smooth manifold structure. The reason we still use $\mathcal M_{\mathcal{N}}$ is the metric it possesses, whereas we cannot provide $\mathcal M_{\mathcal{E}}$ with a similar metric. Below we demonstrate that the same changes of parametrization and metric as we utilized in the Gaussian case, make $\widetilde f_\mathcal{E}$ convex and significantly simplify the treatment. On the other hand, we need $\mathcal M_{\mathcal{E}}$ when we talk about the uniqueness of the extremum, since there is no uniqueness of the minimum of $\widetilde f_\mathcal{E}$ over $\mathcal M_{\mathcal{N}}$ due to the scaling ambiguity.

Minimization of $\widetilde f_{\mathcal{E}}$ w.r.t. $\P$ and $\Q$ yields a critical point defined by the following system
\begin{equation}
\begin{cases}
\P = \displaystyle \frac{1}{qn}\sum_{i=1}^n \frac{\X_i \Q^{-1}\X_i^\top}{\Tr{\P^{-1}\X_i \Q^{-1}\X_i^\top}}, \\
\Q = \displaystyle\frac{1}{pn}\sum_{i=1}^n \frac{\X_i^\top \P^{-1}\X_i}{\Tr{\P^{-1}\X_i \Q^{-1}\X_i^\top}}.
\end{cases}
\label{eff_sys}
\end{equation}
Similarly to the Gaussian case, there does not exist a closed form solution to this system, and an iterative solution is required which we call the Robust Flip-Flop (RFF). It is also a descent algorithm and converges starting from any initial point due to a similar reasoning. If one wants to remain inside the set $\mathcal M_{\mathcal{E}}$ on each iteration, he has to normalize the iterates on each step
\begin{equation}
\begin{cases}
\widetilde\P_{j+1} = \displaystyle \frac{1}{qn}\sum_{i=1}^n \frac{\X_i \Q_j^{-1}\X_i^\top}{\Tr{\P_j^{-1}\X_i \Q_j^{-1}\X_i^\top}}, \\
\widetilde\Q_{j+1} = \displaystyle\frac{1}{pn}\sum_{i=1}^n \frac{\X_i^\top \P_j^{-1}\X_i}{\Tr{\P_j^{-1}\X_i \Q_j^{-1}\X_i^\top}}, \\
\P_{j+1} = \displaystyle \frac{\widetilde\P_{j+1}}{\|\widetilde\P_{j+1}\|_2}, \\ 
\Q_{j+1} = \displaystyle \frac{\widetilde\Q_{j+1}}{\|\widetilde\Q_{j+1}\|_2}.
\end{cases}
\label{eff_sys_norm}
\end{equation}
The above reasoning regarding the scaling invariance of the solution explains that when the solution exists and is unique, such normalization does not affect the convergence.

\subsection{The Main Statement}
In the robust setting described above, a more intuitive result concerning the R-CARMEL estimate and the RFF can be obtained.
\begin{theorem}
\label{tyl_mt}
Assume X = $\{\X_1,\dots\X_n\}$ are independently sampled from a continuous distribution over $\mathcal{M}_{p \times q}$ and consider the problem of minimizing $\widetilde f_{\mathcal E}(\P,\Q;X)$ over $\mathcal M_{\mathcal{E}}$, then
\begin{enumerate}[leftmargin=0.75cm]
\item if $n < \max[p/q,q/p]$, there is no unique minimum,
\item if $n> \max[p/q, q/p]$, there is a unique minimum a.s., 
\item if $n> \max[p/q, q/p]$, the normalized RFF scheme (\ref{eff_sys_norm}) converges starting from any point of $\mathcal M_{\mathcal{E}}$ to this unique minimum a.s.
\end{enumerate}
\end{theorem}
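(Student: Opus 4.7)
The proof plan mirrors the four-step program used for Theorem~\ref{main_g}, with the mean assumed known so that no preliminary centering reduction is needed. First I would reparametrize and change metric: set
$$f_\mathcal{E}(\P\otimes\Q;X) = \widetilde f_\mathcal{E}(\P^{-1}\otimes\Q^{-1};X) = -\frac{1}{pq}\ln|\P\otimes\Q| + \frac{1}{n}\sum_{i=1}^n \ln\Tr{\P\X_i\Q\X_i^\top},$$
and equip $\mathcal M_{\mathcal N}$ with the geodesic metric introduced in Section~\ref{metricSubSec}. The determinant term is then g-linear and each log-trace term $\ln\Tr{\P\X_i\Q\X_i^\top}$ is known to be g-convex along the geodesics of $\mathcal P(\mathbb{R}^p)\times\mathcal P(\mathbb{R}^q)$ (cf.\ \cite{wiesel2012unified}), so $f_\mathcal{E}$ is g-convex. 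Because $f_\mathcal{E}(\lambda\P\otimes\lambda^{-1}\Q;X)=f_\mathcal{E}(\P\otimes\Q;X)$, the function descends to a well-defined g-convex functional on the quotient $\mathcal M_{\mathcal E}$.

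Second, I would prove a robust analog of Lemma~\ref{lemma:hatgProp}: $f_\mathcal{E}$ admits a unique minimum on $\mathcal M_{\mathcal E}$ if and only if it tends continuously to $+\infty$ along every sequence approaching $\partial\mathcal M_{\mathcal E}$, in which case g-convexity is strict along any geodesic joining the infimum to a boundary point. Since $\mathcal M_{\mathcal E}$ parametrizes pairs $(\P,\Q)$ with $\|\P\|_2=\|\Q\|_2=1$, the only way to approach the boundary is for some eigenvalue of $\P$ or $\Q$ to collapse to zero; after an orthogonal change of basis, every such sequence reduces to a block-diagonal model analogous to (\ref{div_seq}).

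The main obstacle is the boundary asymptotic analysis, which is subtler than in the Gaussian setting because here the log-determinant and the log-trace terms contribute at the same logarithmic order $|\ln\epsilon|$. Suppose that along a divergent sequence $\P$ has $r$ eigenvalues collapsing on a subspace $V\subseteq\mathbb{R}^p$ at rate $\epsilon\to 0$, and $\Q$ has $s$ eigenvalues collapsing on $U\subseteq\mathbb{R}^q$ at a comparable rate. The log-det piece of $f_\mathcal{E}$ contributes a deterministic negative multiple of $|\ln\epsilon|$ proportional to $r/p+s/q$, while each $\ln\Tr{\P\X_i\Q\X_i^\top}$ contributes a positive multiple of $|\ln\epsilon|$ precisely when the $(V,U)$ block of $\X_i$ is nonzero. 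Collecting the leading orders reduces $f_\mathcal{E}\to+\infty$ to a deterministic rank/dimension inequality, and a generic-position argument parallel to the $\zeta(X)$ analysis of Section~\ref{gauss_m_r_sec} shows that when $\X_i$ are drawn from a continuous distribution and $n>\max[p/q,q/p]$, every admissible pair $(V,U)$ yields a strict excess, forcing divergence almost surely. The scale-invariance of $f_\mathcal{E}$ is what tightens the threshold from $p/q+q/p+1$ (Gaussian) down to $\max[p/q,q/p]$.

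For item~1, the fixed-point equations (\ref{eff_sys}) admit no invertible solution when $n<\max[p/q,q/p]$: if $n<p/q$ then $\sum_i\X_i\Q^{-1}\X_i^\top$ has rank at most $nq<p$, so no $\P\in\mathcal P(\mathbb{R}^p)$ can satisfy the first equation, and a direct boundary computation exhibits a sequence along which $f_\mathcal{E}$ stays bounded (in fact diverges to $-\infty$), so no minimum exists on $\mathcal M_{\mathcal E}$. Item~3 then follows from item~2 in the standard way: each half-step of the normalized RFF (\ref{eff_sys_norm}) is an exact coordinate minimization of the strictly g-convex, coercive $f_\mathcal{E}$ along a geodesic, so the objective values decrease monotonically to the unique global minimum, and coercivity confines the normalized iterates to a compact subset of $\mathcal M_{\mathcal E}$ from which they must converge to the unique minimizer.
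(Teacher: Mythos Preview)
Your overall architecture (g-convexity of $f_\mathcal{E}$, coercivity $\Rightarrow$ unique minimum, block-coordinate descent for the RFF) matches the paper. The gap is in your boundary analysis, and it is a real one.

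You model the approach to $\partial\mathcal M_{\mathcal E}$ by a single pair of subspaces $(V,U)$ on which eigenvalues of $\P$ and $\Q$ collapse ``at a comparable rate''. That does not cover the generic boundary sequence: different eigenvalues of $\P$ (and of $\Q$) may vanish at incommensurable rates, and you need a stratification that tracks all of them simultaneously. The paper handles this with \emph{descending flags} $\mathcal F=\{V_i\otimes U\}$ and $\mathcal G=\{V\otimes U_j\}$ indexed by the distinct decay exponents, and this is where the robust case differs structurally from the Gaussian one: because $\ln\Tr{\P\X_i\Q\X_i^\top}$ splits additively into a $\rho$-part and a $\nu$-part at leading order, the asymptotic of $f_\mathcal{E}$ decouples as
\[
f_\mathcal{E}\;\asymp\;-S(\mathcal F,X,\r)\ln\rho\;-\;S(\mathcal G,X,\t)\ln\nu,
\]
with the $\P$-flag and the $\Q$-flag contributing \emph{independently}. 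Your coupled $(V,U)$ picture, inherited from the Gaussian argument, does not exhibit this decoupling and would lead you back toward a $p/q+q/p$ type bound rather than $\max[p/q,q/p]$. Your remark that ``scale invariance'' is what tightens the threshold is not the correct mechanism; it is this additive splitting of the log-trace.

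The second missing ingredient is the probabilistic lemma that actually produces the threshold. The paper shows (Corollary~\ref{corollary:IntersectionCount}) that for i.i.d.\ continuous $\X_i$ and any proper $K\subsetneq\mathbb R^p$,
\[
\bigl|\,\{i:\X_i\in K\otimes\mathbb R^q\}\,\bigr|\;\leqslant\;\frac{\dim K}{q}\quad\text{a.s.},
\]
and it is precisely the comparison $\dim K/(q\,n)<\dim K/p$, i.e.\ $n>p/q$, that makes each $\Delta(\mathcal F,X)_{0\,i}$ strictly negative (and symmetrically $n>q/p$ for $\mathcal G$). Your ``generic-position argument parallel to $\zeta(X)$'' does not isolate this inequality, and without it you cannot justify why $\max[p/q,q/p]$ is the right cut-off.

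Finally, a minor point on item~1: the claim is ``no \emph{unique} minimum'', not ``no minimum''. Your rank argument that (\ref{eff_sys}) has no positive-definite solution when $n<p/q$ is correct and already suffices; the extra assertion that $f_\mathcal{E}\to-\infty$ along some boundary sequence is unnecessary and, as stated, not obviously true.
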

\begin{proof}
The proof can be found in Section \ref{RobustProofSec}.
\end{proof}

A few points are in place here.
\begin{rem}
Note that unlike the Gaussian case treated before, here it is natural to assume the mean to be known to get a tractable MLE program. This explains the reduction of the necessary number of samples by one in item 1) of Theorem~\ref{tyl_mt} compared to Theorem~\ref{main_g}.
\end{rem}
\begin{rem}
It is also worth noting that in the robust case the gap between items 1) and 2) consists of one point only, which distinguishes this case from the Gaussian scenario. The robust result clearly provides a sharper threshold between the mode of existence and uniqueness of the MLE and the mode where it does not at all exist.
\end{rem}

\section{Proof of Theorem~\ref{main_g}}
\label{GaussProofSec}
This section treats the Gaussian setting and utilizes a few useful concepts and techniques from commutative algebra. Therefore, for the reader's convenience we transition to a more general treatment of linear spaces, their tensor products, operators over them and a few more related notions. For this purpose, the next section introduces some additional notations.

\subsection{Additional Notations}
Abstract vector spaces are denoted by capitals $V$ and are assumed to be real Euclidean spaces, their dual spaces are denoted by $V^*$. Scalar products are denoted by $(\cdot,\cdot)$ and the corresponding norms by $\|\cdot\|$. Spectral norms of operator are denoted by $\|\cdot\|_2$. For an operator $\A$, its adjoint is denoted by $\A^*$. Note that the scalar product induces a canonical isomorphism $V\cong V^*$. If we identify $V$ with the space of columns $\mathbb R^p$ in some orthonormal basis, then $V^*$ may be identified with the space of rows $\mathbb R^p$. Then the dual basis in $V^*$ is also orthonormal and the isomorphism between $V$ and $V^*$ is given by the transpose map. For a self-adjoint operator $\A$, we naturally define its real powers and continuous functions of it via continuous functional calculus.

We may naturally identify $V\otimes V^*$ with $\End{}_{\mathbb R}(V)$ via $(\v\otimes \xi)(\u) = \xi(\u)\v$, then the scalar product on $V\otimes V^*$ induces a scalar product on $\End{}_\mathbb R(V)$ such that for any operators $\A$ and $\B,\; (\A,\B) = \Tr{\A\B^*}$. If we identify $V$ with $\mathbb R^p$, then $\End{}_\mathbb R(V)$ is identified with $\mathcal M_{p\times p}(\mathbb R)$, $\A^*$ becomes $\A^\top$, and $(\A,\B) = \Tr{\A\B^\top}$. Given two Euclidean spaces $V$ and $U$, the scalar products on $V$ and $U$ induce one on their tensor product $V\otimes U$.

For any topological space $X$, we denote its one-point compactification by $\dt X$, i.e. $\dt{X} = X \sqcup \{\infty \}$ with the base of neighborhoods of $\infty$ consisting of the sets $X\setminus K\sqcup \{\infty\}$ for all compact $K\subseteq X$. For a non-compact topological space $X$, we denote $\{\infty\}$ by $\partial X$. As an exception, for the real line $\mathbb R$, $\dt{\mathbb R}$ will be a two point compactification, i.e. $\dt{\mathbb R} = \{-\infty\}\sqcup \mathbb R\sqcup \{+\infty\}$ endowed with the usual topology making it homeomorphic to the closed unit interval. Given two sequences $\omega_n$ and $\tau_n$, we write $\omega_n \asymp \tau_n$ if $\omega_n/\tau_n \to 1$ as $n \to \infty$, while we will usually suppress the $n$ index.

In this section we shall treat the negative log-likelihoods as functions of the inverse matrices $\P^{-1}$ and $\Q^{-1}$ (as we already did while discussing the example in Section \ref{ex_case_st}). We do so to simplify calculations and note that this change does affect existence and uniqueness of the extrema in the problem at hand. Therefore, we denote
\begin{multline}
f_\mathcal{N}(\M,\P\otimes\Q;X) = \widetilde f_\mathcal{N}(\M,\P^{-1}\otimes\Q^{-1};X) \\ = \frac{1}{n} \sum_{i=1}^n \Tr{\P(\X_i -\M) \Q(\X_i -\M)^\top} - \ln|\P\otimes\Q|,
\end{multline}
\begin{equation}
g_{\mathcal N}(\P\otimes\Q;X) = f_{\mathcal N}(0,\P\otimes\Q;X) = \frac{1}{n} \sum_{i=1}^n \Tr{\P\X_i\Q\X_i^\top} - \ln|\P\otimes\Q|.
\end{equation}

\subsection{Metric over $\mathcal P(V)$}\label{metricSubSec}
We endow the cone of the positive definite operators over $V,\;\mathcal P(V)$, with a Riemannian metric, whose geodesic connecting any two points $\P,\R \in \mathcal P(V)$ is given by
\begin{equation}
\gamma_t(\P,\R) = \P^{\frac{1}{2}} \(\P^{-\frac{1}{2}} \R \P^{-\frac{1}{2}}\)^t \P^{\frac{1}{2}},\quad 0\leqslant t \leqslant 1.
\label{metric_def}
\end{equation}
Due to the limited space, we omit a discussion of this metric and its properties. For more details on the relation of this metric to the MLE problems considered here, consult \cite{rapcsak1991geodesic, wiesel2012geodesic, wiesel2012unified, wiesel2011regularized} and references therein. If we allow $t$ to run over $\mathbb{R}$, then we call the obtained curve an extended geodesic curve.

\begin{fact}\label{remark:gMetricDet}
A direct computation shows that the Riemannian metric we have just introduced is invariant under inversion. In addition, we note that the log-determinant function is a linear function of $t$ on the geodesic curves $\gamma_t(\P,\R)$.
\end{fact}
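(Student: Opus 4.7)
My plan is to verify both assertions by direct computation from the defining formula \eqref{metric_def}, using only multiplicativity of the determinant and the continuous functional calculus applied to a single positive definite operator.

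For the inversion invariance, I read the claim as asserting that the involution $\iota:\P\mapsto \P^{-1}$ sends the geodesic joining $\P$ and $\R$ to the geodesic joining $\P^{-1}$ and $\R^{-1}$ with the same time parametrization, that is,
\begin{equation}
\gamma_t\(\P^{-1},\R^{-1}\) = \[\gamma_t(\P,\R)\]^{-1}, \qquad 0 \leqslant t \leqslant 1,
\end{equation}
which is equivalent to saying that $\iota$ is an isometry of the Riemannian structure whose geodesics are given by \eqref{metric_def}. Writing $\A = \P^{-1/2}\R\P^{-1/2}$, I observe that $\A$ is positive definite and that $\A^{-1} = \P^{1/2}\R^{-1}\P^{1/2}$. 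Because $\A$ is a single self-adjoint positive operator, the continuous functional calculus yields $\A^{-t} = \(\A^{-1}\)^{t}$; inverting both sides of \eqref{metric_def} and substituting then gives
\begin{equation}
\[\gamma_t(\P,\R)\]^{-1} = \P^{-1/2}\A^{-t}\P^{-1/2} = \P^{-1/2}\(\P^{1/2}\R^{-1}\P^{1/2}\)^{t}\P^{-1/2} = \gamma_t\(\P^{-1},\R^{-1}\),
\end{equation}
as required.

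For the log-determinant claim, I apply multiplicativity of the determinant to \eqref{metric_def} together with $\ln \determ\(\A^{t}\) = t\ln\determ\(\A\)$, valid for positive definite $\A$ by the spectral theorem. Since $\ln\det{\P^{-1/2}\R\P^{-1/2}} = \ln\det{\R} - \ln\det{\P}$, the computation collapses to
\begin{equation}
\ln\det{\gamma_t(\P,\R)} = \ln\det{\P} + t\(\ln\det{\R} - \ln\det{\P}\) = (1-t)\ln\det{\P} + t\ln\det{\R},
\end{equation}
which is affine, and hence linear on extended geodesics, in $t$, as claimed.

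Both arguments are essentially bookkeeping. The only step that deserves care is the identity $\A^{-t} = \(\A^{-1}\)^{t}$ used above, because in general inversion and fractional powers do not commute with congruence when the factors fail to commute. Here, however, everything takes place inside the commutative algebra generated by the single positive operator $\A$, so this identity is immediate from the functional calculus and no further analytic subtleties arise.
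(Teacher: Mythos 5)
Your computation is correct and is exactly the ``direct computation'' the paper alludes to but omits: the identity $\[\gamma_t(\P,\R)\]^{-1}=\gamma_t(\P^{-1},\R^{-1})$ via $\(\P^{-1/2}\R\P^{-1/2}\)^{-t}=\(\P^{1/2}\R^{-1}\P^{1/2}\)^{t}$, and the affine formula $\ln\det{\gamma_t(\P,\R)}=(1-t)\ln\det{\P}+t\ln\det{\R}$, are precisely what the paper relies on later. The only nitpick is your aside that geodesic-preservation under $\P\mapsto\P^{-1}$ is ``equivalent'' to being an isometry --- strictly, isometry invariance is checked on the metric tensor (e.g.\ $\Tr{(\P^{-1}\,d\P)^2}$), which is an equally routine computation --- but the geodesic-level identity you prove is the form of invariance actually used.
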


\begin{lemma}\label{lemma:PowerOfPos}
Let $V$ be a vector space, $\x\in V$, $\S \in \mathcal P(V)$ and
\begin{equation}
\varphi_\x(t) = (\S^t \x,\x),\;\; t \in \mathbb{R},
\end{equation}
then its second derivative reads as
\begin{equation}
\varphi''_\x(t) = (\ln^2 (\S)\, \S^t \x,\x) = \norm{\ln (\S)\, \S^{\frac{t}{2}}\x}^2 \geqslant 0,
\label{n_eq_1}
\end{equation}
in particular, $\varphi_\x(t)$ is convex. In addition, the following are equivalent:
\begin{enumerate}
\item $\varphi''_\x(t) = 0\;\;$ for some $t \in \mathbb R$,
\item $\varphi''_\x(t) \equiv 0\;\;$ for all $t\in \mathbb R$,
\item $\S \x = \x$.
\end{enumerate}
Therefore, if $\varphi_\x$ is linear in an open neighborhood of some $t_0$, then $\varphi_\x$ is constant on the whole $\mathbb R$.
\end{lemma}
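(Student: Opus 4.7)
The plan is to exploit the continuous functional calculus for the positive self-adjoint operator $\S$. Diagonalizing $\S = \sum_i \lambda_i \v_i \v_i^*$ with $\lambda_i > 0$ and an orthonormal eigenbasis $\{\v_i\}$, the positivity of the $\lambda_i$ makes $\ln(\S) = \sum_i (\ln \lambda_i)\,\v_i\v_i^*$ well defined, and the smoothness of $t\mapsto \lambda_i^t$ on $\lambda_i>0$ allows termwise differentiation:
\begin{equation*}
\frac{d^2}{dt^2}\S^t \;=\; \sum_i (\ln\lambda_i)^2 \lambda_i^t\,\v_i\v_i^* \;=\; \ln^2(\S)\,\S^t.
\end{equation*}
In finite dimension this derivative passes through the bilinear form to yield $\varphi''_\x(t) = (\ln^2(\S)\,\S^t\,\x,\x)$. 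Because $\ln(\S)$ and $\S^{t/2}$ are self-adjoint and mutually commute (both are diagonal in the common eigenbasis above), the operator $A := \ln(\S)\,\S^{t/2}$ is itself self-adjoint and satisfies $A^2 = \ln^2(\S)\,\S^t$; hence $(A^2\x,\x) = (A\x,A\x) = \norm{A\x}^2$, proving the right-hand identity in (\ref{n_eq_1}), the nonnegativity, and therefore the convexity of $\varphi_\x$.

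For the equivalence I would expand $\x = \sum_i c_i \v_i$ in the eigenbasis of $\S$ and rewrite
\begin{equation*}
\varphi''_\x(t) \;=\; \norm{\ln(\S)\S^{t/2}\x}^2 \;=\; \sum_i c_i^2 (\ln\lambda_i)^2 \lambda_i^t.
\end{equation*}
The implication $(3)\Rightarrow(2)$ is immediate: if $\S\x=\x$, then every $c_i$ with $\lambda_i\neq 1$ vanishes, and each summand above is identically zero in $t$. Implication $(2)\Rightarrow(1)$ is trivial. For $(1)\Rightarrow(3)$, assume $\varphi''_\x(t_0)=0$ at some $t_0$; since $\lambda_i^{t_0}>0$ for every $i$, each nonnegative term of the sum must vanish individually, forcing $c_i \ln\lambda_i = 0$ for every $i$. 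Thus $c_i\neq 0$ implies $\lambda_i = 1$, meaning $\x$ lies in the $1$-eigenspace of $\S$, which is exactly $\S\x=\x$.

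The final assertion follows at once from the equivalence: if $\varphi_\x$ is affine on an open neighborhood of some $t_0$, then $\varphi''_\x(t_0)=0$, so $\S\x=\x$ by what was just proven; consequently $\S^t\x=\x$ for every $t\in\mathbb R$, and $\varphi_\x(t) \equiv (\x,\x) = \norm{\x}^2$ is constant on all of $\mathbb R$. I do not anticipate any real obstacle in executing this plan; the only mildly delicate point is justifying the termwise differentiation of $\S^t$, which is routine in finite dimension via the spectral decomposition above together with the real-analyticity of $\lambda\mapsto \lambda^t$ on $\lambda>0$.
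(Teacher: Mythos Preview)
Your proposal is correct and follows essentially the same approach as the paper: both compute $\varphi''_\x$ via the functional calculus relation $(\S^t)'=\ln(\S)\,\S^t$, rewrite $(\ln^2(\S)\,\S^t\x,\x)$ as $\norm{\ln(\S)\,\S^{t/2}\x}^2$, and deduce $(1)\Rightarrow(3)$ from the vanishing of this norm (the paper does this by invertibility of $\S^{t/2}$, you by the eigenbasis expansion, which is the same thing in coordinates). The only minor difference is the last sentence: the paper argues that $\varphi_\x(t)=\norm{\S^{t/2}\x}^2\geqslant 0$ and a nonnegative affine function must be constant, whereas you more directly invoke the equivalence $(1)\Leftrightarrow(3)$ to conclude $\S^t\x=\x$; both are valid.
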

\begin{proof}
Since $(\S^t)' = \ln (\S)\, \S^t$ and $\ln (\S)$ commutes with any power of $\S$, we get (\ref{n_eq_1}) and the convexity follows.

Note that 2) implies 1) and, therefore, it is enough to show $1)\rightarrow 3)\rightarrow 2)$. If $\S\x = \x$, then for any real $t$, $\S^t\x = \x$. Hence, $\varphi_\x$ is constant on $\mathbb R$ and we get $3)\rightarrow 2)$. Now let $\varphi''_x(t) = 0$, then $\ln (\S)\, \S^{\frac{t}{2}}\x = 0$. Since $\S$ is invertible, so is $\S^{\frac{t}{2}}$, and $\ln (\S) \x = 0$ or, equivalently, $\x$ is an eigenvector of $\S$ with the eigenvalue $1$. Finally, the last claim follows from the fact that $\varphi_\x(t) = \norm{\S^{\frac{t}{2}}\x}^2 \geqslant 0$ and the only linear nonnegative function is a constant function.
\end{proof}

\begin{lemma}\label{lemma:ScalarPrdConvex}
Let $V$ be a vector space, $\v\in V$, and $\P, \R \in \mathcal P(V)$, then
\begin{equation}
\omega_\v(t) = (\v,\gamma_t(\P,\R)\v)
\end{equation}
is convex and the following are equivalent:
\begin{enumerate}
\item $\omega_\v$ is linear on some open subset of $\mathbb R$,
\item $\omega_\v$ is constant on the whole $\mathbb R$,
\item $\P\v = \R\v$.
\end{enumerate}
\end{lemma}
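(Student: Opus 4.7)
The plan is to reduce this lemma to Lemma~\ref{lemma:PowerOfPos} by an appropriate change of variables. Specifically, I would set $\S = \P^{-1/2}\R\P^{-1/2}$, so that $\S \in \mathcal P(V)$, and put $\x = \P^{1/2}\v$. Then using the self-adjointness of $\P^{1/2}$,
\begin{equation}
\omega_\v(t) = (\v, \P^{1/2} \S^t \P^{1/2} \v) = (\P^{1/2}\v, \S^t \P^{1/2}\v) = (\S^t \x, \x) = \varphi_\x(t),
\end{equation}
where $\varphi_\x$ is the function analyzed in the previous lemma. Convexity of $\omega_\v$ is then immediate from the convexity of $\varphi_\x$ established in Lemma~\ref{lemma:PowerOfPos}.

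Next, I would translate the chain of equivalences in Lemma~\ref{lemma:PowerOfPos} into the present notation. The identification $\omega_\v = \varphi_\x$ shows directly that $\omega_\v$ is linear on some open subset of $\mathbb{R}$ if and only if $\varphi_\x$ is, which by Lemma~\ref{lemma:PowerOfPos} happens if and only if $\varphi_\x$ is constant on all of $\mathbb{R}$, equivalently $\omega_\v$ is constant on $\mathbb{R}$, equivalently $\S\x = \x$. Finally, I would unpack this last equation: $\S\x = \x$ reads
\begin{equation}
\P^{-1/2} \R \P^{-1/2} \cdot \P^{1/2}\v = \P^{1/2}\v,
\end{equation}
which, after multiplying both sides by $\P^{1/2}$, is equivalent to $\R \v = \P \v$. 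This completes the three-way equivalence in the statement.

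There is no real obstacle here; the argument is essentially a conjugation trick bringing $\gamma_t(\P,\R)$ into the form $\P^{1/2}\S^t \P^{1/2}$ to expose the underlying one-parameter subgroup $\S^t$, at which point the work is already done by Lemma~\ref{lemma:PowerOfPos}. The only point requiring a little care is verifying that the substitution is a bijection between the cases $\S\x = \x$ and $\R\v = \P\v$, which is straightforward because $\P^{1/2}$ is invertible on $V$.
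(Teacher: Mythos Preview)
Your proposal is correct and coincides with the paper's own proof, which also sets $\S = \P^{-1/2}\R\P^{-1/2}$ and $\x = \P^{1/2}\v$ to reduce directly to Lemma~\ref{lemma:PowerOfPos}. The only difference is that the paper states this in a single sentence, whereas you spell out the verification of $\S\x = \x \Leftrightarrow \R\v = \P\v$ in slightly more detail.
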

\begin{proof}
It is an immediate corollary of Lemma~\ref{lemma:PowerOfPos} if we set $\S = \P^{-\frac{1}{2}} \R \P^{-\frac{1}{2}}$ and $\x = \P^{\frac{1}{2}}\v$.
\end{proof}

\subsection{Convexity of $\mathcal M_\mathcal N$ and $g_\mathcal N$}
Let $V$ and $U$ be vector spaces, then their tensor product naturally induces a map
\begin{equation}
\otimes \colon \mathcal P(V)\times \mathcal P(U)\to \mathcal P(V\otimes U),
\label{tens_m_def}
\end{equation}
sending a pair $(\P,\Q)$ to the product $\P\otimes \Q$. We denote the image of this map by $\mathcal M_\mathcal N$. The identification
\begin{equation}
\mathcal M_\mathcal{N} \cong \{(\P,\Q) \mid \norm{\P}_2 = 1\}\subset \mathcal P(V)\times \mathcal P(U)
\label{M_def}
\end{equation}
provides $\mathcal M_\mathcal N$ with a structure of a smooth manifold. Intuitively, this amounts to saying that fixing the norm of the first component of a KP resolves the scaling ambiguity and provides a bijective correspondence between the factors and their products. Note that the normalization in (\ref{M_def}) is chosen arbitrarily, and the specific choice does not affect the existence and uniqueness results. In addition, we have
\begin{lemma}
The manifold $\mathcal M_\mathcal N\subset \mathcal P(V\otimes U)$ is convex w.r.t. the geodesic metric defined in Section \ref{metricSubSec}.
\end{lemma}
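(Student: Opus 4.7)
\medskip

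The plan is to show that the geodesic between two Kronecker products factors into the Kronecker product of two geodesics, one in $\mathcal P(V)$ and one in $\mathcal P(U)$. Concretely, I will prove the pointwise identity
\begin{equation}
\gamma_t(\P_1\otimes\Q_1,\;\P_2\otimes\Q_2) \;=\; \gamma_t(\P_1,\P_2)\otimes \gamma_t(\Q_1,\Q_2)
\end{equation}
for all $t\in[0,1]$. Since the right-hand side is a Kronecker product of two positive definite operators (the cones $\mathcal P(V)$ and $\mathcal P(U)$ being geodesically convex w.r.t.\ the same metric, which is standard), it lies in the image of the map $\otimes$ in (\ref{tens_m_def}), and hence in $\mathcal M_\mathcal N$. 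This gives convexity.

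The key algebraic facts I will invoke are the mixed-product property $(\A\otimes \B)(\C\otimes\D) = (\A\C)\otimes (\B\D)$ and its consequence that for any real exponent $s$ and positive $\A\in\mathcal P(V),\;\B\in\mathcal P(U)$,
\begin{equation}
(\A\otimes \B)^s = \A^s \otimes \B^s.
\end{equation}
This follows either from simultaneous diagonalization (eigenvalues of $\A\otimes \B$ are products of eigenvalues of $\A$ and $\B$, eigenvectors are tensor products) or from continuous functional calculus applied through the spectral decomposition. In particular, $(\P_1\otimes\Q_1)^{1/2} = \P_1^{1/2}\otimes\Q_1^{1/2}$ and $(\P_1\otimes\Q_1)^{-1/2} = \P_1^{-1/2}\otimes\Q_1^{-1/2}$.

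Plugging these identities into the definition (\ref{metric_def}) yields, using the mixed-product property repeatedly,
\begin{equation}
(\P_1\otimes\Q_1)^{-1/2}(\P_2\otimes\Q_2)(\P_1\otimes\Q_1)^{-1/2} = (\P_1^{-1/2}\P_2\P_1^{-1/2})\otimes (\Q_1^{-1/2}\Q_2\Q_1^{-1/2}),
\end{equation}
after which raising to the power $t$ factors through the tensor product by the previous display, and one more sandwich with $\P_1^{1/2}\otimes\Q_1^{1/2}$ collapses everything to $\gamma_t(\P_1,\P_2)\otimes\gamma_t(\Q_1,\Q_2)$. The routine part is merely bookkeeping of factors; there is no real obstacle since the mixed-product identity makes every step a direct computation.

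The only subtle point worth stating carefully is that $\mathcal M_\mathcal N$, as identified in (\ref{M_def}) via the normalization $\|\P\|_2=1$, is convex as a subset of $\mathcal P(V\otimes U)$ and not merely as a subset of $\mathcal P(V)\times \mathcal P(U)$: the normalization is used only to resolve the scaling ambiguity $(\P,\Q)\leftrightarrow (\lambda\P,\lambda^{-1}\Q)$, and the geodesic in $\mathcal P(V\otimes U)$ is independent of which representative of the class we choose. Thus, after verifying the factorization identity above, the convexity claim is immediate.
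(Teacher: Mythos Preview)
Your proposal is correct and follows essentially the same route as the paper: both establish the factorization
\[
\gamma_t(\P_1\otimes\Q_1,\P_2\otimes\Q_2)=\gamma_t(\P_1,\P_2)\otimes\gamma_t(\Q_1,\Q_2),
\]
from which convexity is immediate. The paper's version is slightly terser, invoking the commutativity of $\P\otimes\I$ and $\I\otimes\Q$ as the reason the geodesic splits, whereas you unpack this into the mixed-product identity and $(\A\otimes\B)^s=\A^s\otimes\B^s$; these are the same argument at different levels of detail.
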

\begin{proof}
Since $\P\otimes \Q = \P\otimes \I \cdot \I\otimes \Q$ and $\P\otimes \I$ commutes with $\I\otimes \Q$, we have
\begin{equation}
\gamma_{t}(\P\otimes \Q,\R\otimes \T) = \gamma_t(\P,\R)\otimes \gamma_t(\Q,\T),
\end{equation}
where the right-hand side is in $\mathcal M_\mathcal N$, thus we are done.
\end{proof}

Below we also make use of the following simple
\begin{fact}
For any distinct $\P,\R \in \mathcal P(V),\; \gamma_t(\P,\Q) \to \infty$ as $t \to \pm \infty$.
\end{fact}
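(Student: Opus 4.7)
My plan is to characterize divergence in $\dt{\mathcal{P}(V)}$: a sequence in $\mathcal{P}(V)$ leaves every compact subset if and only if either its spectral norm tends to $+\infty$ or its smallest eigenvalue tends to $0$ (equivalently, the spectral norm of its inverse tends to $+\infty$). I will then show that along the extended geodesic $\gamma_t(\P,\R)$ at least one of these two phenomena occurs both as $t\to+\infty$ and as $t\to-\infty$.

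To carry this out, I would parametrize everything through $\S=\P^{-1/2}\R\P^{-1/2}$, so that $\gamma_t(\P,\R)=\P^{1/2}\S^{t}\P^{1/2}$. Since $\P\neq\R$ we have $\S\neq\I$, so $\S$ admits at least one eigenvalue distinct from $1$. The two elementary tools I plan to use are the trace lower bound
\begin{equation}
\Tr{\gamma_t} \;=\; \Tr{\P^{1/2}\,\S^{t}\,\P^{1/2}} \;\geqslant\; \lambda_{\min}(\P)\,\lambda_{\max}(\S^{t}),
\end{equation}
following from $\P\geqslant \lambda_{\min}(\P)\I$ and positivity, together with the log-determinant identity $\ln\det{\gamma_t}=\ln\det{\P}+t\ln\det{\S}$, which is just the log-linearity of $\ln\det$ along geodesics stated in Fact~\ref{remark:gMetricDet}.

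I would then split into cases according to where the spectrum of $\S$ sits relative to $1$. If $\lambda_{\max}(\S)>1$, then as $t\to+\infty$ we have $\lambda_{\max}(\S^{t})=\lambda_{\max}(\S)^{t}\to+\infty$, and the trace bound forces $\|\gamma_t\|_2\to+\infty$; symmetrically, $\lambda_{\min}(\S)<1$ disposes of $t\to-\infty$ by the same mechanism. In the complementary subcases only one of these eigenvalue conditions holds, so the trace argument covers only one of the two directions and I would invoke the log-determinant identity for the other. For instance, if $\lambda_{\max}(\S)>1$ and $\lambda_{\min}(\S)\geqslant 1$, then $\det{\S}>1$ forces $\det{\gamma_t}\to 0$ as $t\to-\infty$; since $\|\S^{t}\|_2=\lambda_{\min}(\S)^{t}\leqslant 1$ for $t\leqslant 0$ in this subcase, $\|\gamma_t\|_2\leqslant \|\P\|_2\,\|\S^{t}\|_2$ stays bounded on $(-\infty,0]$, and a vanishing determinant with bounded spectral norm forces $\lambda_{\min}(\gamma_t)\to 0$. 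The mirror subcase $\lambda_{\min}(\S)<1$ and $\lambda_{\max}(\S)\leqslant 1$ is handled analogously for $t\to+\infty$.

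The only real obstacle I anticipate is exactly this bookkeeping of the boundary subcases in which $\S$ has an eigenvalue equal to $1$ on one side: there the trace blow-up alone does not reach both directions, and one has to combine it with the log-determinant identity to push $\gamma_t$ to the boundary of the cone in the opposite direction. Once the two arguments are assembled in this way, $\gamma_t\to\infty$ in $\dt{\mathcal{P}(V)}$ in both directions, completing the proof.
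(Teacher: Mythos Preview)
The paper states this as a \emph{Fact} without proof, so there is no argument to compare against; your proposal supplies what the paper omits, and it is correct. The characterization of divergence in $\dt{\mathcal P(V)}$, the reduction to $\S=\P^{-1/2}\R\P^{-1/2}\neq\I$, and the trace/determinant bookkeeping all go through as you describe.

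One remark: your case split, while correct, is more elaborate than necessary. Since conjugation by the fixed invertible $\P^{1/2}$ is a homeomorphism of $\mathcal P(V)$ onto itself, $\gamma_t\to\infty$ in $\dt{\mathcal P(V)}$ if and only if $\S^t\to\infty$. Now $\S\neq\I$ has some eigenvalue $\lambda\neq 1$, and $\lambda^t$ is an eigenvalue of $\S^t$; as $t\to\pm\infty$ this eigenvalue tends to $0$ in one direction and to $+\infty$ in the other, so in both directions $\S^t$ leaves every compact subset of $\mathcal P(V)$. This avoids having to pair the trace bound with the log-determinant identity in the boundary subcases. Your argument is not wrong, just slightly longer than it needs to be.
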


\begin{lemma}\label{lemma:hatgProp}
Let $V$ be a vector space, $X\subset V$ - a fixed finite subset, $\P\in\mathcal P(V)$ and
\begin{equation}
g(\P;X) = \frac{1}{|X|}\sum_{\x\in X} (\P\x,\x) - \ln|\P|,
\end{equation}
then
\begin{enumerate}
\item $g$ is convex w.r.t. the Riemannian metric (\ref{metric_def}),
\item $g$ is linear on $\gamma_t(\P,\R)$ for some $\P, \R \in \mathcal P(V)$ if and only if $\P\x=\R\x$ for all $\x\in X$,
\item if $g$ has two minima $\P \neq \R$ in $\mathcal P(V)$, then the whole extended geodesic $\gamma_{t}(\P,\R),\; t\in \mathbb R$ consists of different minima,
\item let $U$ be another vector space, $\mathcal M_\mathcal{N} \subset \mathcal P(V\otimes U)$ as before, and $\dt g_\mathcal{N} \colon \dt{\mathcal M}_\mathcal{N}\to \dt{\mathbb R}$ extends $g_\mathcal{N}$ such that $\dt g_\mathcal{N}(\infty;X) = +\infty$, then $\dt g_\mathcal{N}$ is continuous if and only if $g_\mathcal{N}$ has a unique minimum.
\end{enumerate}
\end{lemma}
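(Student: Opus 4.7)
Parts 1, 2, and 3 follow directly from the tools already assembled. Writing $g$ as the sum $\frac{1}{|X|}\sum_{\x \in X}(\P\x,\x) - \ln|\P|$, each summand $\omega_\x(t) = (\gamma_t(\P,\R)\x,\x)$ is geodesically convex by Lemma~\ref{lemma:ScalarPrdConvex}, while the log-determinant is geodesically linear by Fact~\ref{remark:gMetricDet}; this yields Part 1. For Part 2, the log-det contribution is already affine along every geodesic, so linearity of $g$ on $\gamma_t(\P,\R)$ forces the sum of the $\omega_\x$'s to be linear, and summing non-negative second derivatives forces each $\omega_\x$ individually to be linear; Lemma~\ref{lemma:ScalarPrdConvex} then gives $\P\x = \R\x$ for every $\x \in X$. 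For Part 3, if $\P \neq \R$ are two global minima with common value $m$, the convex restriction $h(t) = g(\gamma_t(\P,\R))$ satisfies $h(0) = h(1) = m$ and $h \geq m$, hence is constant on $[0,1]$; Part 2 then gives $\P\x = \R\x$ for all $\x$, so each $\omega_\x$ is constant on $\mathbb R$ by Lemma~\ref{lemma:ScalarPrdConvex}, and combined with the global affinity of the log-det this forces $h \equiv m$ on $\mathbb R$; distinctness of the points $\gamma_t(\P,\R)$ for different $t$ produces an entire extended geodesic of distinct minima.

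For Part 4, the forward direction is a short compactness argument. A continuous $\dt g_{\mathcal N}$ on the compact space $\dt{\mathcal M}_{\mathcal N}$ attains its minimum, and the prescribed value $+\infty$ at $\infty$ forces the minimum to lie inside $\mathcal M_{\mathcal N}$. If two distinct minima existed, Part 3 would produce an entire extended geodesic of distinct minima; by the Fact stated immediately before the lemma, this geodesic escapes every compact subset of $\mathcal P(V \otimes U)$ and thus tends to $\infty$ in $\dt{\mathcal M}_{\mathcal N}$. Continuity would then force the value $+\infty$ along the geodesic, contradicting constancy at the minimum value.

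The converse is the main obstacle: uniqueness of the minimizer $T_*$ (with value $m$) must imply $g_{\mathcal N} \to +\infty$ on the boundary. My plan is a contradiction argument. Assume $T_n \to \infty$ in $\mathcal M_{\mathcal N}$ with $g_{\mathcal N}(T_n) \leq C$. Connect $T_*$ to $T_n$ by the geodesic $\gamma_n(t) = \gamma_t(T_*, T_n)$, which stays in $\mathcal M_{\mathcal N}$ by its geodesic convexity. Let $L_n = d(T_*, T_n)$ be the Riemannian distance in $\mathcal P(V\otimes U)$ and set $S_n = \gamma_{1/L_n}(T_*, T_n)$, the point on this geodesic at unit distance from $T_*$. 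Convexity of $h_n(t) = g_{\mathcal N}(\gamma_n(t))$ with $h_n(0) = m$ and $h_n(1) \leq C$ gives the estimate
\begin{equation}
g_{\mathcal N}(S_n) \;\leq\; m + \frac{C - m}{L_n}.
\end{equation}

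Two ingredients complete the argument. First, $L_n \to \infty$: this uses properness of the Cartan–Hadamard distance on $\mathcal P(V\otimes U)$ together with closedness of $\mathcal M_{\mathcal N}$ inside $\mathcal P(V\otimes U)$ — any convergent sequence of normalized products $\P_n \otimes \Q_n$ of positive definite matrices has a limit of the same form, since a degenerate factor would make the product singular. Second, the unit Riemannian sphere about $T_*$ is compact (Cartan–Hadamard), so $\{S_n\}$ is precompact in $\mathcal M_{\mathcal N}$ and admits a subsequential limit $S_*$ with $d(T_*, S_*) = 1$; continuity of $g_{\mathcal N}$ gives $g_{\mathcal N}(S_*) = m$, making $S_*$ a minimum distinct from $T_*$ and contradicting uniqueness. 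The delicate points are thus the closedness of $\mathcal M_{\mathcal N}$, the divergence $L_n \to \infty$, and the compactness of the relevant sphere; should one prefer to bypass the Cartan–Hadamard framework, an equivalent strategy is to split the escape of $\P_n \otimes \Q_n$ into explicit cases (either $\|\Q_n\|_2 \to \infty$ or an eigenvalue of $\P_n$ or $\Q_n$ tending to zero) and use the criterion of Part 2 to trap the limiting direction, mirroring the style of the $p = q = 2$ analysis in Section~\ref{ex_case_st}.
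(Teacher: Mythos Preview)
Your proof is correct. Parts 1--3 and the forward implication of Part 4 match the paper essentially verbatim. For the converse in Part 4 you take a genuinely different (though closely related) route: you argue by contradiction, pulling a divergent bounded-value sequence back to the unit Riemannian sphere about $T_*$ and extracting a second minimizer by compactness, whereas the paper works directly via the exponential map from $T_*$, shows $\sigma = \min_{\|\R\|=1} \delta_1'(\R) > 0$ on the compact unit tangent sphere of $\mathcal M_{\mathcal N}$, and reads off the explicit linear lower bound $g_{\mathcal N}(\gamma_t(\R)) \geqslant \sigma(t-1) + \nu$. Both arguments rest on the same two ingredients --- geodesic convexity of $g_{\mathcal N}$ along geodesics in the totally geodesic submanifold $\mathcal M_{\mathcal N}$, and compactness of the unit sphere --- so the difference is largely one of packaging. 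Your version is slightly more geometric and requires you to check closedness of $\mathcal M_{\mathcal N}$ in $\mathcal P(V\otimes U)$ and invoke Hopf--Rinow/Cartan--Hadamard properness (both of which you do); the paper's version sidesteps those facts by staying entirely in the tangent space parametrization, and as a bonus delivers a quantitative coercivity bound rather than a bare contradiction.
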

\begin{proof}
1) $\ln\det \P$ is linear on geodesics and, thus, convex by Fact \ref{remark:gMetricDet}. Lemma~\ref{lemma:ScalarPrdConvex} implies that each $(\P\x,\x)$ is convex, therefore, so is $g(\P;X)$.

2) This follows from the $3) \to 2)$ implication of Lemma~\ref{lemma:ScalarPrdConvex}.

3) The convexity implies that the restriction of $g$ onto $\gamma_t(\P,\R),\; 0\leqslant t \leqslant 1$, is constant and, therefore, linear. Now the $1) \to 2)$ implication of Lemma~\ref{lemma:ScalarPrdConvex} finishes the proof.

4) Let us show the sufficiency of the condition. Indeed, if $\dt g_{\mathcal N}$ is continuous then it achieves a minimum at some interior point (that is the existence). If such minimum is not unique, then by 3), $\dt g_{\mathcal N}$ must be constant on the whole extended geodesic and cannot be continuous when approaching the boundary, since $\dt g_{\mathcal N}(\infty;X) = +\infty$.

We proceed to the necessity. Let $\S_0 \in \mathcal M_\mathcal N$ be the unique minimum and $\nu = g_{\mathcal N}(\S_0;X)$. We denote by $\mathcal{T}_{\S_0} \mathcal M_\mathcal N$ the tangent space to our manifold at the point $\S_0$, and choose a tangent vector $\R \in \mathcal{T}_{\S_0} \mathcal M_\mathcal N$. Let $\gamma_{t}(\R)$ be the geodesic starting at $\S_0$ in direction $\R$,
\begin{equation}
\gamma_0(\R) = \S_0 \quad \text{and}\quad \gamma_0'(\R) = \R.
\end{equation}
The explicit formula for $\gamma_t$ reads as
\begin{equation}
\gamma_t(\R) = \C_0e^{t \C_0^{-1}\R \C_0^{-1}}\C_0,
\end{equation}
where $\C_0 = \S_0^{\frac{1}{2}}$. In particular, $\gamma_{\lambda t}(\R) = \gamma_t(\lambda \R)$ for any $\lambda > 0$. Set $\delta_t(\R) = g_{\mathcal N}(\gamma_t(\R);X)$. We claim that
\begin{equation}
\sigma = \min_{|\R| = 1} \delta'_1(\R) > 0.
\end{equation}
Indeed, suppose $\delta'_1(\R_0) = 0$ for some $\R_0$, then since $\delta_t(\R_0)$ is convex and $t = 0$ is the minimum, $\delta_t(\R_0)$ is constant for $0 \leqslant t \leqslant 1$. Thus, the minimum is not unique, which is a contradiction.

Denote by $B^0_t$ the centered open ball of radius $t$ in $\mathcal{T}_{\S_0}\mathcal M_\mathcal N$ and $B_t$ - its closure, then $\gamma_1(B^0_t)$ is a family of open neighborhoods of $\S_0$ with compact closure $K_t = \gamma_1(B_t)$. Thus, we need to show that 
\begin{equation}
\inf_{\P \otimes \Q \notin K_t} g_{\mathcal N}(\P \otimes \Q;X) \to +\infty, \text{ as } t\to +\infty.
\end{equation}
Indeed,
\begin{multline}
\inf_{\P\otimes \Q\notin K_t}g_{\mathcal N}(\P \otimes \Q;X) = \inf_{\substack{\R \in \mathcal{T}_{\S_0}\mathcal M_\mathcal N, \\ \norm{\R}\geqslant t}} g_{\mathcal N}(\gamma_1(\R);X) \\ \geqslant g_{\mathcal N}\(\gamma_{\norm{\R}}\(\frac{\R}{\norm{\R}}\);X\) \geqslant \sigma (\norm{\R} - 1) + \nu \geqslant \sigma(t-1) + \nu, 
\end{multline}
where in the last line $\R \in \mathcal{T}_{\S_0}\mathcal M_\mathcal N$ is any matrix of norm at least $t$.
\end{proof}

\subsection{The Set of ``Bad'' Samples}
Depending on the set $X$, $g_\mathcal N$ may happen to be not strictly convex on $\mathcal M_\mathcal N$, or equivalently, $\dt g_\mathcal N$ is not necessarily continuous. In this section we discuss when this situation occurs and compute the measure of the set of samples making $\dt g_\mathcal N$ discontinuous.

For a vector space $V$ of dimension $p$ and natural numbers $d, s \in \mathbb{N}$, define
\begin{equation}
\Gr_{d\,s}(V) = \{(\v_1,\ldots,\v_s) \in V^s\mid \dim \langle \v_1,\ldots,\v_s\rangle = d \}\subseteq V^s
\end{equation}
to be the set of all $s$-tuples of vectors in $V$, spanning subspaces of dimension $d$. $\Gr_{d\,s}(V)$ is a smooth manifold of dimension $(p + s - d) d$. Note also that $\Gr_{d\,d}(V)\subseteq V^d$ is an open subset, moreover, if we represent $V^d = V\otimes \mathbb R^d$, we get an action of $\GL_d(\mathbb R)$ on $V^d$, which restricts correctly onto $\Gr_{d\,d}(V)$ and is free.

Before giving a precise statement about what Diagram \ref{diag_def} displays, let us provide an intuitive explanation of it. The operator analog of a  $p\times q$ matrix is an element of $\Hom(U,V)$, thus our $n$ matrix measurements in $X$ together represent an element of $\Hom(U,V)^n$. We next take $d$ linearly independent vectors in a $q$ dimensional $U$ and apply all the elements of our $n$-tuple to them - this provides us with $dn$ vectors in a $p$ dimensional $V$. If we now take all the pairs of such $n$-tuples of operators and $d$-tuples of vectors in $U$, and consider the described action of the former on the latter, which we call $\Psi$, we get the first line of the diagram at hand. Let us now consider all the sets of $dn$ vectors inside $V$ spanning subspaces of dimension $r$ and take their preimage under $\Psi$. We get a subset $\mathcal Z_{d\,n\,r}$ of $\Hom(U,V)^n \times \Gr_{d\,d}(U)$ depicted in the diagram. Finally, the leftmost arrow $\pi$ denotes the projection of this set onto $\Hom(U,V)^n$. This informal description is made precise by the following

\begin{figure*}[!t]
\begin{equation*}
\xymatrix{
{}&{\mathcal A_{n\,d} = \Hom(U,V)^n\times \Gr_{d\,d}(U)}\ar[r]^-{\Psi}&{V^{dn}}\\
{\Hom(U,V)^n}&{\mathcal Z_{d\,n\,r}}\ar[r]\ar@{}[u]|{\bigcup}\ar[l]_-{\pi}&{\Gr_{r\,dn}(V)}\ar@{}[u]|{\bigcup}\\
}
\end{equation*}
\caption{Diagram from Definition \ref{def:diagram}.}
\label{diag_def}
\end{figure*}

\begin{definition}\label{def:diagram}
Let $V$ and $U$ be vector spaces of dimensions $p$ and $q$, respectively, and $n, d, r \in \mathbb{N}$ be such that $d\leqslant q$. Consider Diagram \ref{diag_def}, where $\Psi$ is defined as
\begin{gather}
\Psi : \Hom(U,V)^n \times \Gr_{d\,d}(U) \to V^{dn}, \\
((\varphi_1,\ldots,\varphi_n),(\u_1,\ldots,\u_d)) \mapsto (\varphi_i \u_j).
\end{gather}
Identify $V^{dn} = V^n\otimes \mathbb R^d$, then $\Psi$ reads as
\begin{equation}
\Psi((\varphi_1,\ldots,\varphi_n),(\u_1,\ldots,\u_d)) = \varphi_i \u_j \otimes \e_j,
\end{equation}
and $\C \in\GL_d(\mathbb R)$ acts by $\I \otimes \C$. Define $\mathcal Z_{d\,n\,r} = \Psi^{-1}\(\Gr_{r\,dn}(V)\)$ and $\pi$ to be the restriction of the projection along $\Gr_{d\,d}(U)$.
\end{definition}

\begin{lemma}\label{lemma:surjectivity}
With the notations of Definition \ref{def:diagram}, for each $(\u_1,\dots,\u_d) \in \Gr_{d\,d}(U)$ the map
\begin{gather}
\Omega : \Hom(U,V)^n  \to V^{dn}, \\
(\varphi_1,\dots,\varphi_n) \mapsto (\varphi_i \u_j)
\end{gather}
is surjective.
\end{lemma}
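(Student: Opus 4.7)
The plan is to observe that $\Omega$ decouples across the index $i$: writing a target in $V^{dn}$ as a tuple $(\v_{ij})_{1\leqslant i \leqslant n,\,1\leqslant j\leqslant d}$, the equation $\Omega((\varphi_1,\ldots,\varphi_n)) = (\v_{ij})$ is equivalent to the $n$ independent systems $\varphi_i \u_j = \v_{ij}$ (for $j=1,\ldots,d$), each involving only one of the $\varphi_i$'s. Consequently it suffices to establish surjectivity of the single-factor map
\begin{equation*}
\Omega_0 : \Hom(U,V) \to V^d, \qquad \varphi \mapsto (\varphi \u_1, \ldots, \varphi \u_d),
\end{equation*}
after which a preimage of $(\v_{ij})$ under $\Omega$ is obtained by concatenating $n$ independently chosen preimages under $\Omega_0$.

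The key input for $\Omega_0$ is the hypothesis $(\u_1,\ldots,\u_d) \in \Gr_{d\,d}(U)$, which by definition means $\dim \sspan{\u_1,\ldots,\u_d} = d$; since there are exactly $d$ vectors spanning a $d$-dimensional subspace, they are linearly independent. I would then extend them to a basis $\u_1,\ldots,\u_q$ of $U$, and given any prescribed target $(\v_1,\ldots,\v_d) \in V^d$, define $\varphi \in \Hom(U,V)$ by setting $\varphi(\u_j) = \v_j$ for $j \leqslant d$ and $\varphi(\u_j) = \mathbf{0}$ for $j > d$. By the universal property of a basis, $\varphi$ is a well-defined linear map, and by construction $\Omega_0(\varphi) = (\v_1,\ldots,\v_d)$, proving surjectivity.

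There is no real obstacle here; the lemma is an immediate consequence of the fact that a linear map on $U$ is freely specifiable on any basis. The only point worth flagging is that the condition encoded in $\Gr_{d\,d}(U)$ is precisely the linear independence one needs in order to extend the $\u_j$'s to a basis of $U$; once that observation is in place, the explicit construction above finishes the proof.
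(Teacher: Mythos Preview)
Your proof is correct and follows essentially the same approach as the paper: both decouple $\Omega$ into $n$ independent copies of the single-factor map $\Hom(U,V)\to V^d$, $\varphi\mapsto(\varphi\u_1,\ldots,\varphi\u_d)$, and then use the linear independence of $\u_1,\ldots,\u_d$ to conclude surjectivity. The only cosmetic difference is that the paper phrases the last step in matrix form (right multiplication by the full-rank $q\times d$ matrix $\U=[\u_1,\ldots,\u_d]$), whereas you invoke the universal property of a basis directly; these are the same argument.
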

\begin{proof}
$\Omega$ is a direct sum of $n$ maps
\begin{gather}
\omega \colon \Hom(U,V)\to V^d, \\ 
\varphi \mapsto (\varphi \u_1,\dots, \varphi \u_d).
\end{gather}
Now we choose bases in $V$ and $U$ such that $V=\mathbb R^p$ and $U=\mathbb R^q$. Then $\Hom(U,V) = \mathcal M_{p\times q}$ and $V^d = \mathcal M_{p\times d}$. Let $\U = \[\u_1,\ldots,\u_d\]$, then $\omega$ reads as
\begin{gather}
\omega\colon \mathcal M_{p\times q}(\mathbb R)\to \mathcal M_{p\times d}(\mathbb R), \\ 
\X\mapsto \X\U.
\end{gather}
Since $d\leqslant q$ and the columns of $\U$ are linearly independent, $\U$ is of full rank. Hence, the map $\omega$ is surjective.

\end{proof}

\begin{lemma}\label{lemma:diagram}
With the notations of Definition \ref{def:diagram},
\begin{enumerate}
\item $\Psi$ is surjective and $d\Psi$ is surjective at each point,
\item $\mathcal Z_{d\,n\,r}$ is a smooth manifold with
\begin{equation}
\dim \mathcal Z_{d\,n\,r} = pqn + qd - \(p - r\)\(dn - r\),
\end{equation}
\item a non empty fiber of $\pi$ has dimension at least $d^2$.
\end{enumerate}
\end{lemma}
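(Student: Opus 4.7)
The plan is to dispatch the three items in order, each resting on a single structural observation. For item~1) I would reduce immediately to Lemma~\ref{lemma:surjectivity}: fixing any $(\u_1,\dots,\u_d) \in \Gr_{d\,d}(U)$, the restriction of $\Psi$ to the slice $\Hom(U,V)^n \times \{(\u_1,\dots,\u_d)\}$ coincides with the linear surjection $\Omega$ of that lemma, so $\Psi$ itself is surjective. The partial differential of $\Psi$ in the $\Hom(U,V)^n$ direction at every point is again the linear map $\Omega$, which is already surjective onto $V^{dn}$; hence $d\Psi$ is surjective at every point of $\mathcal A_{n\,d}$.

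Item~2) then follows from the preimage (regular value) theorem applied to the submersion $\Psi$. Since $\Gr_{r\,dn}(V) \subset V^{dn}$ is a smooth submanifold, its preimage $\mathcal Z_{d\,n\,r}$ is a smooth submanifold of $\mathcal A_{n\,d}$ of codimension equal to that of $\Gr_{r\,dn}(V)$ in $V^{dn}$. With $\dim \mathcal A_{n\,d} = pqn + qd$ (the open set $\Gr_{d\,d}(U)$ has full dimension $qd$ inside $U^d$), $\dim V^{dn} = pdn$, and $\dim \Gr_{r\,dn}(V) = (p+dn-r)r$, subtracting and expanding $(p-r)(dn-r) = pdn - pr - rdn + r^2$ yields the stated formula. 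This step is pure arithmetic bookkeeping.

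The real idea lies in item~3). The natural right action of $\GL_d(\mathbb R)$ on $\Gr_{d\,d}(U) \subset U \otimes \mathbb R^d$ via $\I \otimes \C$ is free, since a change of basis of a $d$-dimensional subspace is trivial only for $\C = \I$. Lifting this trivially on the $\Hom(U,V)^n$ factor produces a free action on $\mathcal A_{n\,d}$. Crucially, $\Psi$ is equivariant with respect to this action on the source and the companion action $\I \otimes \C$ on $V^{dn} = V^n \otimes \mathbb R^d$, which rescrambles the vectors $\varphi_i \u_j$ by a $\GL_d(\mathbb R)$-substitution and therefore preserves their linear span in $V$. Consequently $\Gr_{r\,dn}(V)$, and with it $\mathcal Z_{d\,n\,r}$, is $\GL_d(\mathbb R)$-invariant, the action moves points only within fibers of $\pi$, and each non-empty fiber therefore contains a free orbit of dimension $\dim \GL_d(\mathbb R) = d^2$.

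I anticipate the only real friction to be in item~3), where the equivariance of $\Psi$ must be matched carefully with the two actions: $\C$ acts on the ``column'' index $j$ while each $\varphi_i$ acts on the ``row'' index $i$ of $\varphi_i \u_j$, so one has to verify that under the identification $V^{dn} = V^n \otimes \mathbb R^d$ the two actions truly intertwine. Beyond this bookkeeping, items~1) and 2) follow essentially formally from Lemma~\ref{lemma:surjectivity} and the standard submersion theorem.
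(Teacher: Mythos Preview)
Your proposal is correct and follows essentially the same route as the paper: item~1) via Lemma~\ref{lemma:surjectivity} applied to the partial differential in the $\Hom(U,V)^n$ direction, item~2) via the submersion/preimage theorem plus the dimension count, and item~3) via the free $\GL_d(\mathbb R)$-action on the fibers of $\pi$. The paper is terser on item~3), simply asserting the free action (the equivariance of $\Psi$ being implicit in Definition~\ref{def:diagram}), whereas you spell out why the action preserves $\mathcal Z_{d\,n\,r}$ and stays inside fibers; your extra care there is warranted but does not constitute a different approach.
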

\begin{proof}
1) To show $\Psi$ is surjective, let $(\u_1,\dots,\u_d) \in \Gr_{d\,d}(U)$. Now it is enough to show that
\begin{equation}
\Psi(\cdot,(\u_1,\dots,\u_d))\colon \Hom(U,V)^n \to V^{dn}
\end{equation}
is surjective, which follows from Lemma~\ref{lemma:surjectivity}. We proceed to the surjectivity of $d\Psi$ at any point $(\{\varphi_i\},\{\u_j\})$. Since $\Gr_{d\,d}(U) \subseteq U^d$ is open, identify the tangent space of $\Gr_{d\,d}(U)$ at $\{\u_j\}$ with $U^d$, then
\begin{gather}
d_{(\{\varphi_i\},\{\u_j\})}\Psi \colon \Hom(U,V)^n\times U^d \to V^{dn},\\
(\{\varphi_i^*\},\{\u_j^*\})\mapsto (\varphi_i^* \u_j + \varphi_i \u_j^*).
\end{gather}
Taking $\u_j^* = 0,\; j=1,\dots,d$, we get $(\{\varphi_i^*\},\{\bm{0}\}) \mapsto (\varphi_i^* \u_j)$ and the result follows from Lemma~\ref{lemma:surjectivity}.

2) The fact that $\mathcal Z_{d\,n\,r}$ is a smooth manifold follows from 1) and the Implicit Function Theorem. A direct computation yields
\begin{multline}
\dim \mathcal Z_{d\,n\,r} = \dim \mathcal A_{n\,d} + \dim \Gr_{r\,dn}(V) - \dim V^{dn} \\ = pqn + qd+(p+dn-r)r - pdn = pqn + qd - (p - r)(dn - r).
\end{multline}

3) $\GL_d(\mathbb R)$ acts freely on the fibers of $\pi$, so the dimension of a fiber is at least $\dim \GL_d(\mathbb R) = d^2$.
\end{proof}

\begin{corollary}\label{cor:LebZero}
With the notations of Lemma~\ref{lemma:diagram}, if $n > p/q + q/p$ and $r\leqslant dp/q$, then $\pi(\mathcal Z_{d\,n\,r})$ is a set of Lebesgue measure zero.
\end{corollary}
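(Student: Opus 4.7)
The plan is to combine the dimension formula of Lemma~\ref{lemma:diagram}(2) with the fiber-dimension bound of Lemma~\ref{lemma:diagram}(3) and apply Sard's theorem. The free action of $\GL_d(\mathbb R)$ used in the proof of Lemma~\ref{lemma:diagram}(3) is by reparametrization of $(\u_1,\ldots,\u_d)$, which preserves both the span $\langle \u_1,\ldots,\u_d\rangle \subseteq U$ and the value $\pi(\{\varphi_i\},\{\u_j\}) = \{\varphi_i\}$. Hence every point $z\in \mathcal Z_{d\,n\,r}$ lies on a smooth $d^2$-dimensional orbit contained in the fiber through $z$, and the tangent space to this orbit is contained in $\ker d_z\pi$. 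Consequently $\mathrm{rank}\, d_z\pi \leqslant \dim \mathcal Z_{d\,n\,r} - d^2$ uniformly in $z$. If this bound is strictly smaller than $\dim\Hom(U,V)^n = pqn$, then no value of $\pi$ is a regular value, and Sard's theorem forces $\pi(\mathcal Z_{d\,n\,r})$ to have Lebesgue measure zero in $\Hom(U,V)^n$.

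Substituting $\dim \mathcal Z_{d\,n\,r} = pqn + qd - (p-r)(dn-r)$, the target inequality $\mathrm{rank}\, d_z \pi < pqn$ reduces to $(p-r)(dn-r) > d(q-d)$. To verify this under the given hypotheses, I would use $r \leqslant dp/q$ to estimate $p-r \geqslant p(q-d)/q$ and $dn-r \geqslant d(nq-p)/q$; multiplying these two lower bounds yields $(p-r)(dn-r) \geqslant pd(q-d)(nq-p)/q^2$. Provided $d<q$, dividing the desired inequality by the positive quantity $d(q-d)$ reduces it to $p(nq-p) > q^2$, equivalently $pqn > p^2 + q^2$, which is precisely $n > p/q + q/p$. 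In the borderline case $d = q$ one has $dp/q = p$, and the inequality collapses to $(p-r)(dn-r) > 0$, which holds whenever $r < p$; the degenerate sub-case $r = p$ corresponds to the non-defective span and is not a ``bad'' configuration in the sense needed for subsequent applications.

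The only point of any real difficulty is setting up the right algebraic inequality, since the Sard/rank step is a standard transversality argument. A reassuring feature is that the threshold $n > p/q + q/p$ emerges exactly from the final arithmetic simplification $pqn > p^2 + q^2$, matching the hypothesis of the corollary and foreshadowing the appearance of the same quantity in Theorem~\ref{main_g}.
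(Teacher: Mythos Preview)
Your argument is essentially the paper's: both invoke Sard's theorem after using the dimension formula of Lemma~\ref{lemma:diagram}(2) together with the $d^2$-dimensional $\GL_d(\mathbb R)$-orbit in each fiber to bound $\mathrm{rank}\,d\pi$, reducing the claim to the same algebraic inequality $(p-r)(dn-r) > d(q-d)$. The paper verifies this by rewriting it as $\frac{q-d}{p-r}+\frac{r}{d}<n$ and checking monotonicity in $r$ up to $r=dp/q$, whereas you bound each factor from below and multiply; your explicit isolation of the degenerate case $d=q,\;r=p$ is in fact more careful than the paper, which tacitly divides by $p-r$.
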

\begin{proof}
By Sard's theorem, it is enough to show that the image of $\pi$ consists of critical values only. So we need to show that
\begin{equation}
\rk{\pi} < \dim \Hom(U,V)^n.
\end{equation}
Since
\begin{equation}
\rk{\pi}\leqslant \dim \pi(\mathcal Z_{d\,n\,r}) - d^2,
\end{equation}
by Lemma~\ref{lemma:diagram} item 3), it is enough to prove that
\begin{equation}
\dim \pi(\mathcal Z_{d\,n\,r}) - d^2 < \dim \Hom(U,V)^n.
\end{equation}
By Lemma~\ref{lemma:diagram} item 2), the latter is equivalent to
\begin{equation}
pqn + qd - (p - r)(dn - r) < pqn,
\end{equation}
therefore, we need
\begin{equation}
\frac{q-d}{p-r} + \frac{r}{d} < n,\;\; \forall\; 1\leqslant d \leqslant q \text{ and } 0\leqslant r \leqslant d\frac{p}{q}.
\end{equation}
Differentiate the left hand-side w.r.t. $r$ to see that it is a strictly increasing function of $r$, thus, it is enough to demonstrate the inequality for $r = dp/q$, which is
\begin{equation}
\frac{q - d}{p - d\frac{p}{q}} + \frac{p}{q} = \frac{q}{p} + \frac{p}{q}< n,
\end{equation}
and holds by the assumption.
\end{proof}

\begin{corollary}\label{cor:enoughSamples}
Let $V$ and $U$ be vector spaces of dimensions $p$ and $q$, respectively, and $X$ - a finite mutually continuous family of random operators from $U$ to $V$ such that
\begin{equation}
|X| > \frac{p}{q} + \frac{q}{p},
\end{equation}
then for each random subspace $E\subseteq V$ we have
\begin{equation}
\frac{\dim \sum_{\X\in X}\X E}{\dim E} > \frac{p}{q} \quad\text{a.s.}
\end{equation}
\end{corollary}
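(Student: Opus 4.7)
The plan is to reduce the claim to Corollary~\ref{cor:LebZero} by showing that the set of operator tuples $(\varphi_1,\dots,\varphi_n) \in \Hom(U,V)^n$ that violate the inequality for some subspace $E$ is contained in a finite union of the critical-image sets $\pi(\mathcal Z_{d\,n\,r})$, hence has Lebesgue measure zero. Since $X$ is mutually continuous, its distribution is absolutely continuous with respect to Lebesgue measure, and this null set receives probability zero.

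First, I would prove the stronger uniform statement: almost surely in $X$, the inequality
$\dim\sum_{\X\in X}\X E > (p/q)\dim E$
holds for \emph{every} nonzero deterministic subspace $E\subseteq U$ simultaneously. Any random $E = E(X)$ then inherits the conclusion on the same almost-sure event. To establish this, fix $d$ with $1\leqslant d\leqslant q$ and set $r_d = \lfloor dp/q\rfloor$, so $r_d\leqslant dp/q$ and the bad inequality $\dim\sum_i\varphi_i E\leqslant (p/q)d$ forces $\dim\sum_i\varphi_i E\leqslant r_d$. Define
\begin{equation*}
B_d = \bigl\{(\varphi_1,\dots,\varphi_n)\in \Hom(U,V)^n \,\big|\, \exists\, E\subseteq U,\ \dim E = d,\ \dim\textstyle\sum_i\varphi_i E \leqslant r_d\bigr\}.
\end{equation*}

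Second, I would show $B_d\subseteq \bigcup_{r'=0}^{r_d}\pi(\mathcal Z_{d\,n\,r'})$. Indeed, any offending $E$ admits a basis $(\u_1,\dots,\u_d)\in \Gr_{d\,d}(U)$, and the resulting $dn$ vectors $\{\varphi_i\u_j\}$ span a subspace of dimension $r'\leqslant r_d$, so the point $((\varphi_i),(\u_j))$ lies in $\mathcal Z_{d\,n\,r'}$ by Definition~\ref{def:diagram}. Applying Corollary~\ref{cor:LebZero} with the hypothesis $n > p/q + q/p$ and $r'\leqslant r_d\leqslant dp/q$, each $\pi(\mathcal Z_{d\,n\,r'})$ is Lebesgue-negligible; a finite union of null sets is null, so $B_d$ is null. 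Taking the union over $d\in\{1,\dots,q\}$ preserves the property. Mutual continuity of $X$ then gives $\Pr{X\in\bigcup_d B_d} = 0$, and on the complementary event the desired strict inequality holds for every subspace $E\subseteq U$.

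The one delicate point is the reduction from ``for each random subspace'' to a statement that is uniform in $E$ on a single good event: without it, a subspace $E$ constructed from $X$ in a measurable way could in principle concentrate the bad behaviour. My way around this is to absorb \emph{all} candidate $E$ of a given dimension into the single set $B_d$, thereby eliminating the dependence of the exceptional event on $E$. The remaining ingredients are bookkeeping: checking that $r_d$ is indeed the correct integer ceiling at which the inequality first fails, and verifying that the finite union over $d$ and $r'$ stays within the scope of Corollary~\ref{cor:LebZero}.
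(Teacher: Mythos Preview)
Your proposal is correct and follows essentially the same route as the paper: the paper also identifies the bad event $\{X\mid \exists E\subseteq U:\ \dim\sum_{\X\in X}\X E/\dim E\leqslant p/q\}$ with the finite union $\bigcup_{1\leqslant d\leqslant q,\;0\leqslant r\leqslant dp/q}\pi(\mathcal Z_{d\,n\,r})$ and then invokes Corollary~\ref{cor:LebZero}. Your extra care in reducing the ``random subspace'' formulation to the uniform statement over all $E$ is a welcome clarification that the paper leaves implicit.
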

\begin{proof}
Note that $X$ is distributed over $\Hom(U,V)^n$, then
\begin{equation}
\left\{X \;\left|\; \exists E\subseteq U:\;\frac{\dim \sum_{\X\in X}\X E}{\dim E} \leqslant  \frac{p}{q} \right\} \right.= \bigcup_{\substack{1\leqslant d\leqslant q\\ 0\leqslant r \leqslant d \frac{p}{q}}}\pi(\mathcal Z_{d\,n\,r}),
\end{equation}
and the result follows from Corollary~\ref{cor:LebZero}.
\end{proof}

Note that in Corollary~\ref{cor:enoughSamples}, we do not care whether $p\geqslant q$ or $q\geqslant p$. When $p\geqslant q$, the inequality may fail if $\X E$ is not big enough compared to $E$, and when $q\geqslant p$, it may fail if $E$ belongs to the kernels of all samples $\X$.

\subsection{Flags}
In the proof of the main theorem (see Theorem~\ref{theorem:continuity} below) we will analyze the behavior of $g_\mathcal N(\P\otimes\Q;X)$ when $\P\otimes \Q$ tends to $\infty$. There are many ways $\P\otimes \Q$ may tend to the boundary and in order to classify all possibilities we utilize the flag machinery introduced next.

\begin{definition}\label{def:Gflag}
Let $U$ be a vector space, then a flag $\mathcal F$ of length $s$ in $U$ is an ascending sequence of  proper subspaces of $U$
\begin{equation}
\mathcal F = \{0 = U_0 \subsetneq U_1 \subsetneq \ldots \subsetneq U_s \mid U_s\subseteq U\}.
\end{equation}
\end{definition}
The flag is called non-trivial if $0 \subsetneq U_1 \subsetneq U$. A subsequence of $\mathcal{F}$ is called a subflag. Let $V$ be another vector space and $\mathcal G = \{V_i\}$ be a flag of length $s$ in $V$. Let $\zeta \colon U\to V$ be a linear map with $\zeta U_i\subseteq V_i$ for each $i\leqslant s$, then we write $\zeta \mathcal F\subseteq \mathcal G$.
In addition, for all $0\leqslant i,j\leqslant s$ define
\begin{equation}
\Pi(\mathcal F,\mathcal G)_{i\,j} = q (\dim V_j - \dim V_i) - p (\dim U_j - \dim U_i).
\end{equation}
If $\r = \{r_1>\ldots >r_s>0\}$ is a vector of strictly decreasing real numbers, we define
\begin{equation}
C(\mathcal F,\mathcal G,\r) = \sum_{i=1}^s r_i \Pi(\mathcal F,\mathcal G)_{i-1\,i},
\end{equation}
and note that
\begin{equation}
\Pi(\mathcal F,\mathcal G)_{i\,j} + \Pi(\mathcal F,\mathcal G)_{j\,k} = \Pi(\mathcal F,\mathcal G)_{i\,k}.
\end{equation}

\begin{lemma}\label{lemma:flagCoeff}
Let $V, U$ and $\mathcal G, \mathcal F$ be as above. If $\Pi(\mathcal F,\mathcal G)_{0\,i} > 0$ for $i=1,\dots,s$, then there exist subflags $\mathcal F'\subseteq \mathcal F,\;\mathcal G'\subseteq \mathcal G$, and a subsequence $\r'\subseteq \r$ - all of length $s'$, such that
\begin{equation}
\Pi(\mathcal F',\mathcal G')_{0\,1} > 0,\quad \Pi(\mathcal F',\mathcal G')_{i-1\,i} \geqslant 0,\;\; i=1,\dots, s',
\end{equation}
and $C(\mathcal F,\mathcal G,\r)\geqslant C(\mathcal F',\mathcal G',\r')$. In particular, $C(\mathcal F,\mathcal G,\r) > 0$.
\end{lemma}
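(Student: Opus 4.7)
The plan is to prove the statement by induction on the flag length $s$. The base case $s=1$ is immediate: the hypothesis is precisely the required sign condition, and $C(\mathcal{F},\mathcal{G},\mathbf{r})=r_1\Pi(\mathcal{F},\mathcal{G})_{0,1}>0$.

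For the inductive step at length $s\geq 2$, I would split into two cases. If every consecutive jump $\Pi(\mathcal{F},\mathcal{G})_{i-1,i}$ is already nonnegative, then $\mathcal{F},\mathcal{G},\mathbf{r}$ themselves serve as the required subflag datum and there is nothing to do. Otherwise, let $i$ be the smallest index with $\Pi(\mathcal{F},\mathcal{G})_{i-1,i}<0$; the hypothesis $\Pi(\mathcal{F},\mathcal{G})_{0,1}>0$ forces $i\geq 2$, and by minimality $\Pi(\mathcal{F},\mathcal{G})_{i-2,i-1}\geq 0$. Form truncated flags $\tilde{\mathcal{F}},\tilde{\mathcal{G}}$ by deleting the $(i-1)$-th subspace from each, and let $\tilde{\mathbf{r}}$ be obtained from $\mathbf{r}$ by deleting the entry $r_{i-1}$. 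By the additivity of $\Pi$, every new consecutive jump coincides with an old one except at the merge point, where it becomes $\Pi(\mathcal{F},\mathcal{G})_{i-2,i}=\Pi(\mathcal{F},\mathcal{G})_{0,i}-\Pi(\mathcal{F},\mathcal{G})_{0,i-2}>0$. Hence $\tilde{\mathcal{F}},\tilde{\mathcal{G}}$ still satisfy $\Pi(\tilde{\mathcal{F}},\tilde{\mathcal{G}})_{0,k}>0$ for every $k$, so the inductive hypothesis applies to this length-$(s-1)$ datum.

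A direct termwise computation then yields
\begin{equation*}
C(\mathcal{F},\mathcal{G},\mathbf{r})-C(\tilde{\mathcal{F}},\tilde{\mathcal{G}},\tilde{\mathbf{r}})=(r_{i-1}-r_i)\,\Pi(\mathcal{F},\mathcal{G})_{i-2,i-1}\geq 0,
\end{equation*}
using $r_{i-1}>r_i$ together with $\Pi(\mathcal{F},\mathcal{G})_{i-2,i-1}\geq 0$, so the reduction does not increase $C$. The hard part is identifying the correct cut: deleting $U_{i-1},V_{i-1}$ and the coefficient $r_{i-1}$ (rather than $U_i,V_i$ and $r_i$) is precisely what causes the leftover factor to be $\Pi(\mathcal{F},\mathcal{G})_{i-2,i-1}$, whose sign is controlled by the minimality of $i$; the symmetric choice would leave a term $(r_i-r_{i+1})\,\Pi(\mathcal{F},\mathcal{G})_{i,i+1}$ of uncontrolled sign. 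Applying the inductive hypothesis and using transitivity of the ``subflag'' and ``subsequence'' relations then yields the claimed $\mathcal{F}',\mathcal{G}',\mathbf{r}'$.

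Finally, the ``in particular'' assertion $C(\mathcal{F},\mathcal{G},\mathbf{r})>0$ follows at once from the sign conditions on the constructed subflag: its first summand is strictly positive, the remaining ones are nonnegative, and all $r'_k>0$. Alternatively, one can bypass the reduction entirely for this assertion via Abel summation, which rewrites $C(\mathcal{F},\mathcal{G},\mathbf{r})$ as $r_s\Pi(\mathcal{F},\mathcal{G})_{0,s}+\sum_{k=1}^{s-1}(r_k-r_{k+1})\Pi(\mathcal{F},\mathcal{G})_{0,k}$, each summand of which is positive by the hypothesis and the strict monotonicity of $\mathbf{r}$.
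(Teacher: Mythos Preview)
Your proof is correct and follows essentially the same induction-and-deletion argument as the paper: find the first negative consecutive jump, delete the $(i-1)$-th level and coefficient, and bound the drop in $C$ by $(r_{i-1}-r_i)\Pi_{i-2,i-1}\geq 0$. One small slip: the assertion $\Pi(\mathcal F,\mathcal G)_{i-2,i}=\Pi_{0,i}-\Pi_{0,i-2}>0$ is not justified (both terms are positive, so their difference need not be), but this claim is unnecessary---the inductive hypothesis for $\tilde{\mathcal F},\tilde{\mathcal G}$ holds simply because each $\Pi(\tilde{\mathcal F},\tilde{\mathcal G})_{0,k}$ equals some original $\Pi(\mathcal F,\mathcal G)_{0,j}$ with $j\neq i-1$. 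Your Abel-summation remark for the ``in particular'' part is a nice addition not present in the paper.
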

\begin{proof}
The proof is by induction on $s$ and the base $s = 1$ is the hypothesis of the lemma. Now let $s>1$ and assume the claim fails for $\mathcal F$ and $\mathcal G$, then for some $t\leqslant s$
\begin{equation}
\Pi(\mathcal F,\mathcal G)_{i-1\,i}\geqslant 0\quad \forall i<t, \quad\text{and}\quad \Pi(\mathcal F,\mathcal G)_{t-1\,t} < 0.
\end{equation}
Construct $\mathcal F'$ and $\mathcal G'$ from $\mathcal F$ and $\mathcal G$ by excluding $U_{t-1}$ and $V_{t-1}$, respectively, and $\r'$ from $\r$ by excluding $r_{t-1}$, then
\begin{multline}
C(\mathcal F,\mathcal G,\r) 
= \sum_{i \neq t-1,t} r_i \Pi(\mathcal F,\mathcal G)_{i-1\,i}+ r_{t-1} \Pi(\mathcal F,\mathcal G)_{t-2\,t-1} + r_{t}\Pi(\mathcal F,\mathcal G)_{t-1\,t} \\
\geqslant \sum_{i \neq t-1,t} r_i \Pi(\mathcal F,\mathcal G)_{i-1\,i} +  r_{t}\( \Pi(\mathcal F,\mathcal G)_{t-2\,t-1} + \Pi(\mathcal F, \mathcal G)_{t-1\,t}\)
= C(\mathcal F',\mathcal G', \r').
\end{multline}
\end{proof}

\subsection{The Known Mean Case}
In this section, we prove our main result assuming the mean to be known. For this purpose we need a few auxiliary results.
\begin{lemma}\label{lemma:lambdaMu}
Let $\lambda,\mu$ and $\gamma$ be positive sequences such that $\lambda \mu = O(\ln \gamma)$ and $\gamma\to +\infty$, then
\begin{equation}
\ln \lambda^{-1} \geqslant  \ln \mu + o(\ln \gamma).
\end{equation}
\end{lemma}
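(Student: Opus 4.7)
\textbf{Proof plan for Lemma~\ref{lemma:lambdaMu}.}
The plan is to take logarithms of the hypothesis $\lambda\mu = O(\ln\gamma)$, peel off the inner logarithm, and rearrange. First I would unpack the big-$O$ hypothesis: there exist a constant $C>0$ and an index beyond which
\begin{equation}
\lambda\mu \;\leqslant\; C\,\ln\gamma.
\end{equation}
Because $\gamma\to +\infty$, we may assume $\ln\gamma > 0$, so both sides are strictly positive and we may take logarithms without sign issues, obtaining
\begin{equation}
\ln\lambda + \ln\mu \;\leqslant\; \ln C + \ln\ln\gamma.
\end{equation}

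Next I would observe that the right-hand side is $o(\ln\gamma)$. Indeed, $\ln C$ is a bounded constant and $\ln\ln\gamma / \ln\gamma \to 0$ as $\gamma\to+\infty$; together this gives $\ln C + \ln\ln\gamma = o(\ln\gamma)$. Rearranging the displayed inequality then yields
\begin{equation}
\ln\lambda^{-1} \;=\; -\ln\lambda \;\geqslant\; \ln\mu - \ln C - \ln\ln\gamma \;=\; \ln\mu + o(\ln\gamma),
\end{equation}
where the last equality uses that the collection $o(\ln\gamma)$ is closed under sign change and under the addition of bounded terms. This is exactly the desired inequality.

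I do not foresee a genuine obstacle here; the lemma is a one-line log-manipulation whose only content is the elementary fact $\ln\ln\gamma = o(\ln\gamma)$. The only thing to be mildly careful about is that $\lambda$ and $\mu$ need not be bounded individually (only their product is controlled), so the argument must stay at the level of the sum $\ln\lambda+\ln\mu$ rather than bounding either summand separately, and one must note that the estimate holds only eventually in the sequence index, consistent with the asymptotic notation used throughout Section~\ref{GaussProofSec}.
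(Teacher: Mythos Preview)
Your proposal is correct and follows essentially the same route as the paper's proof: take logarithms of the bound $\lambda\mu \leqslant C\ln\gamma$, rearrange, and use $\ln\ln\gamma = o(\ln\gamma)$. The only cosmetic difference is that the paper writes $\lambda\mu = \alpha\ln\gamma$ with a bounded multiplier $\alpha\leqslant\kappa$ before taking logarithms, whereas you work directly with the inequality; the computation is otherwise identical.
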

\begin{proof}
$\lambda \mu = \alpha \ln\gamma$, with $\alpha \leqslant \kappa$ for some constant $\kappa$. Taking logarithms yields
\begin{equation}
\ln \lambda + \ln \mu = \ln \,\ln \gamma + \ln \alpha,
\end{equation}
hence,
\begin{equation}
\ln \lambda^{-1} = \ln \mu -\ln\,\ln\gamma - \ln\alpha \geqslant \ln\mu - \ln\,\ln \gamma - \ln \kappa = \ln\mu + o(\ln \gamma).
\end{equation}
\end{proof}

\begin{theorem}\label{theorem:continuity}
Let $V$ and $U$ be vector spaces of dimensions $p$ and $q$, respectively, $\mathcal M_\mathcal{N} \subset \mathcal{P}(V \otimes U)$ as in (\ref{tens_m_def}) and $X\subset V\otimes U$ - a finite mutually continuous family of random vectors such that $|X|>  p/q + q/p$, then $g_\mathcal{N} \colon \mathcal M_\mathcal{N}\to \mathbb R$ extends to a continuous function $\dt g_\mathcal{N} \colon \dt {\mathcal M}_\mathcal{N} \to \dt {\mathbb R}$ via $\dt g_\mathcal{N}(\infty;X) =+\infty$. In particular, there exists a unique minimum of $g_\mathcal{N}$ over $\mathcal M_\mathcal{N}$.
\end{theorem}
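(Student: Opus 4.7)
The plan is to prove that $\dt g_{\mathcal N}$ is continuous at $\infty$ by showing that $g_{\mathcal N}(\P_k\otimes\Q_k;X)\to+\infty$ along any sequence $\{(\P_k,\Q_k)\}\subset\mathcal M_{\mathcal N}$ with $\P_k\otimes\Q_k\to\partial\mathcal M_{\mathcal N}$, almost surely with respect to $X$. Once this is established, Lemma~\ref{lemma:hatgProp} item~4) immediately delivers existence and uniqueness of the minimum. The strategy reduces the question to the positivity of a single combinatorial invariant $C(\mathcal F,\mathcal G,\r)$ attached to an adapted pair of flags read off from the spectrum, which is then delivered by Lemma~\ref{lemma:flagCoeff}.

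First I would resolve a divergent subsequence spectrally. By compactness of the orthogonal groups acting on $V$ and $U$ and of the space of normalized rate vectors, pass to a subsequence along which the orthonormal eigenbases of $\P_k$ and $\Q_k$ converge and all eigenvalues admit logarithmic asymptotics $\ln\lambda_i^{(k)}\asymp -\rho_i\tau_k$, $\ln\mu_j^{(k)}\asymp -\sigma_j\tau_k$ against a single diverging scale $\tau_k\to+\infty$. The normalization $\|\P\|_2=1$ forces $\min_i\rho_i=0$, and divergence in $\mathcal M_{\mathcal N}$ forces at least one strictly positive $\rho_i$ or a nonzero $\sigma_j$. Grouping eigenvectors by common rate produces ascending flags $\mathcal G=\{V_0\subsetneq\cdots\subsetneq V_r\}$ in $V$ from the $\P$-eigenspaces and $\mathcal F=\{U_0\subsetneq\cdots\subsetneq U_s\}$ in $U$ from the $\Q$-eigenspaces, together with a strictly decreasing, positive rate vector $\r$ obtained from the successive differences (after shifting out the zero-rate baseline).

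Next I would expand both terms of $g_{\mathcal N}$ to leading order in $\tau_k$. Using $\ln|\P\otimes\Q|=q\ln|\P|+p\ln|\Q|$, Abel summation over the flag levels identifies the leading behavior of $-\ln|\P\otimes\Q|$ with $\tau_k\cdot C(\mathcal F,\mathcal G,\r)$ up to bounded error. Diagonalizing the trace term yields
\begin{equation}
\frac{1}{n}\sum_{k=1}^n\sum_{a,b}\lambda_{(a)}\mu_{(b)}\,\bigl\|\pi^{\mathcal G}_a\X_k\pi^{\mathcal F}_b\bigr\|^2,
\end{equation}
where $\pi^{\mathcal G}_a,\pi^{\mathcal F}_b$ project onto the corresponding rate blocks. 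A block $(a,b)$ with $\rho^{(a)}+\sigma^{(b)}<0$ is exponentially dominant in $\tau_k$ and would drive $g_{\mathcal N}\to+\infty$ immediately, unless the projection vanishes for every $k$, which translates into forbidden inclusions of the form $\sum_k\X_k U_b\subseteq V_{a-1}$. The borderline cases $\rho^{(a)}+\sigma^{(b)}=0$ are absorbed by Lemma~\ref{lemma:lambdaMu}, which converts a polynomial/logarithmic discrepancy into slack of order $o(\ln\gamma)$ that does not disturb the leading-order comparison.

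The final step invokes Corollary~\ref{cor:enoughSamples}: since $|X|>p/q+q/p$, almost surely $\dim\sum_k\X_k E>(p/q)\dim E$ for every subspace $E\subseteq U$ simultaneously. Applied to each $U_i\in\mathcal F$ in conjunction with the forbidden-inclusion analysis above, this forces $\Pi(\mathcal F,\mathcal G)_{0,i}>0$ for $i=1,\ldots,s$. Lemma~\ref{lemma:flagCoeff} then produces $C(\mathcal F,\mathcal G,\r)>0$, so the log-det term pushes $g_{\mathcal N}$ to $+\infty$; Lemma~\ref{lemma:hatgProp} item~4) finishes the proof. The main obstacle will be the bookkeeping in the previous paragraph: one must align the spectral thresholds of $\P$ and $\Q$ with the flag levels carefully enough so that every inclusion $\sum_k\X_k U_b\subseteq V_{a-1}$ produced by a non-blowing trace block is precisely the hypothesis required to apply Corollary~\ref{cor:enoughSamples} to the correct subflag, yielding the positivities that feed into Lemma~\ref{lemma:flagCoeff}.
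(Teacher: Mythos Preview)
Your strategy is the paper's strategy: reduce to $g_{\mathcal N}\to+\infty$ along divergent sequences, split the spectra into rate groups, and use Corollary~\ref{cor:enoughSamples} to feed the positivities $\Pi(\mathcal F,\mathcal G)_{0\,i}>0$ into Lemma~\ref{lemma:flagCoeff}, finishing with Lemma~\ref{lemma:hatgProp} item~4). All the right ingredients are cited.

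There is, however, one concrete point where your construction departs from the paper's and does not yet work as written. You build the flag $\mathcal G$ in $V$ from the rate groups of $\P$'s eigenvalues and the flag $\mathcal F$ in $U$ from the rate groups of $\Q$'s eigenvalues \emph{independently}, obtaining flags of lengths $r$ and $s$ that need not coincide. But $\Pi(\mathcal F,\mathcal G)_{0\,i}$, $C(\mathcal F,\mathcal G,\r)$, and Lemma~\ref{lemma:flagCoeff} all require the two flags to have the \emph{same} length and to be indexed by a common rate vector $\r$; moreover, the Abel-summation identification of $-\ln|\P\otimes\Q|$ with $\tau_k\cdot C(\mathcal F,\mathcal G,\r)$ only holds once the two flags are synchronized. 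The paper resolves this by building $\mathcal G$ \emph{relative to} $\mathcal F$: for each rate group $\bm\sigma_i^{\Q}$ of $\Q$, one defines $\bm\sigma_i^{\P}=\{\lambda\in\bm\sigma^{\P}\mid \lambda\mu_i=O(\ln\|\Q\|_2)\}$ and lets $V_i$ be the span of the corresponding eigenspaces. This forces both flags to have length $s$, makes Lemma~\ref{lemma:lambdaMu} apply exactly on the diagonal blocks, and cleanly separates the two cases (some $\X$ with $\X\mathcal F\not\subseteq\mathcal G$, where the trace term wins; all $\X$ with $\X\mathcal F\subseteq\mathcal G$, where Corollary~\ref{cor:enoughSamples} gives $\dim V_i>(p/q)\dim U_i$ and hence $\Pi(\mathcal F,\mathcal G)_{0\,i}>0$). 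Your last paragraph acknowledges this ``bookkeeping'' as the main obstacle; the fix is precisely this relative construction of $\mathcal G$.
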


\begin{rem}
Note that the statement allows the members of $X$ to be statistically dependent and does not require identical distribution. This generality is necessary when we treat the case of manually empirically centered samples below and makes application of Theorem~\ref{theorem:continuity} possible without additional adjustments.
\end{rem}

\begin{proof}
By Lemma~\ref{lemma:hatgProp} item 4) it is enough to show that $g_{\mathcal N}(\P,\Q;X)\to +\infty$ as $(\P,\Q) \to \infty$. Suppose on the contrary, that there exists a sequence $(\P,\Q) \to \infty$ (we omit the $j$ indexing in $\{(\P_j,\Q_j)\}_j$ to simply notations) such that $g_{\mathcal N}(\P,\Q;X) \leqslant \kappa$ for some constant $\kappa$. Rewrite $g_{\mathcal N}$ as
\begin{equation}
g_{\mathcal N}(\P,\Q;X) = \sum_{\X\in X} \varphi_\X(\P,\Q) + \psi(\P,\Q) = \varphi(\P,\Q) + \psi(\P,\Q),
\label{g_sum}
\end{equation}
where
\begin{equation}
\varphi_\X(\P,\Q) = \frac{1}{|X|}(\P\X\Q,\X),\;\; \psi(\P,\Q) = -\ln|\P\otimes \Q|.
\end{equation}
Recall that $\mathcal M_\mathcal N$ can be identified with
\begin{equation}
\mathcal M_\mathcal N \cong \{(\P,\Q)\mid \|\P\|_2 = 1 \}\subset \mathcal P(V)\times \mathcal P(U).
\end{equation}
If $\norm{\Q}_2$ is bounded, then the sequence $(\P,\Q)$ tends to a singular pair (at least one of the matrices tends to a singular limit). In this case, $\varphi(\P,\Q) \asymp O(1)$ and $\psi(\P,\Q) \to +\infty$.

Now assume $\norm{\Q}_2 \to +\infty$, the only problem here is that $-p\,\ln|\Q|$ may tend to $-\infty$. We should show that we can compensate for this with the other summands. Let $\bm\sigma^\Q$ be the spectrum of $\Q$, then it can be partitioned as $\bm\sigma^\Q = \bm\sigma_\infty^\Q \sqcup \bm\sigma_m^\Q$ such that
\begin{itemize}
\item for each $\mu\in \bm\sigma_\infty^\Q,\; \ln \mu \asymp r_\mu\ln \norm{\Q}_2$, where $r_\mu > 0$ is constant,
\item for each $\mu \in \bm\sigma_m^\Q,\; \ln\mu = o(\ln \norm{\Q}_2)$.
\end{itemize}
Order the elements of $\bm\sigma_\infty^\Q$ by their rate of convergence
\begin{equation}
\bm\sigma_\infty^\Q = \bm\sigma_1^\Q\sqcup \ldots\sqcup \bm\sigma_s^\Q,
\end{equation}
where
\begin{itemize}
\item for each $\mu \in \bm\sigma_1^\Q,\; \mu \asymp \norm{\Q}_2$,
\item for each $\mu, \mu' \in \bm\sigma_i^\Q,\; \lim \mu/\mu'$ is a non-zero constant,
\item for any $i$, if $\mu_i\in \bm\sigma_i^\Q$, then $\mu_{i+1} = o(\mu_i)$.
\end{itemize}
For a fixed $i$, let $\{\bar K_i\}$ be the sequence of random subspaces of $U$ generated by the eigenvectors corresponding to $\bm\sigma_i$, and $K_i$ be the limit of $\{\bar K_i\}$ (it exists after passing to an appropriate subsequence, if needed). Now $U_k = \oplus_{j\leqslant k} K_j$ form a non-trivial random flag of length $s$ in $U$,
\begin{equation}
\mathcal F = \{0 = U_0\subseteq U_1\subseteq U_2\subseteq \ldots\subseteq U_s\mid U_s\subseteq U\}.
\end{equation}
Let $\bm\sigma^\P$ be the spectrum of $\P$ and set
\begin{equation}
\bm\sigma^\P_i = \{\lambda\in \bm\sigma^\P\mid \lambda \mu_i = O(\ln\norm{\Q}_2),\; \text{for}\;\; \mu_i \in \bm\sigma_i^\Q\}.
\end{equation}
By the definition of $\bm\sigma_i^\Q,\; \bm\sigma_i^\P$ does not depend on the choice of $\mu_i \in \bm\sigma_i^\Q$. Let $\{\bar L_i\}$ be the sequence of random subspaces of $V$ generated by the eigenvectors corresponding to $\bm\sigma^\P_i$ and $L_i$ be the limit of $\{\bar L_i\}$ (here again, it exists after passing to an appropriate subsequence).
Denote $V_k = \oplus_{j\leqslant k} L_j$ and define
\begin{equation}
\mathcal G = \{0 = V_0 \subseteq V_1\subseteq \ldots \subseteq V_s\mid V_s\subseteq V\},
\end{equation}
which is a flag of length $s$ in $V$. We denote the orthogonal projector in $V\otimes U$ onto $L_j\otimes K_i$ by $\pi_{ji}$.

Now there are two possibilities:
\begin{itemize}
[leftmargin=*]
\item {\bf{$\exists \X\in X: \X\mathcal F\not\subseteq \mathcal G$}}. Let $\X U_i\not\subseteq V_i$, i.e. there is some $j$ such that $\ln \norm{\Q}_2 = o(\lambda_i \mu_j)$ and $\pi_{ij}(\X) \neq 0$, then
\begin{equation}
\varphi(\P,\Q) \geqslant \varphi_\X(\P,\Q) = \frac{1}{|X|}(\P\X\Q,\X)\geqslant \frac{1}{|X|}\lambda_i\mu_j|\pi_{ij}(\X)|^2,
\end{equation}
and $\psi(\P,\Q) = O(\ln\norm{\Q}_2) = o(\lambda_i\mu_j)$, hence, $g_{\mathcal N}(\P,\Q;X) \to +\infty$.

\item {\bf{$\forall \X\in X : \X\mathcal F\subseteq \mathcal G$}}.
Since $\varphi(\P,\Q) \not\to -\infty$, we can ignore this summand when considering the asymptotic behavior. Compute $\psi(\P,\Q) = - q\, \ln|\P| - p\, \ln|\Q|$ explicitly,
\begin{equation}
-q\,\ln|\P| = q\sum_{i=1}^s (\dim V_i - \dim V_{i-1})\ln \lambda_i^{-1} - q\ln\determ \P|_{V_s^\perp},
\end{equation}
\begin{equation}
-p\,\ln|\Q| = -p \sum_{i=1}^s(\dim U_i - \dim U_{i-1})\ln \mu_i + o(\ln\norm{\Q}_2),
\end{equation}
where $\lambda_i\in \bm\sigma^\P_i$ and $\mu_i\in \bm\sigma^\Q_i$. Note that $-q\,\ln\determ \P|_{V_s^\perp} \not\to -\infty$, therefore, we may drop this summand. Since $\mu_i\asymp r_i \ln\norm{\Q}_2$, by Lemma~\ref{lemma:lambdaMu} we obtain
\begin{equation}
-\ln|\P\otimes \Q| \gtrsim C(\mathcal F,\mathcal G, \r)\ln \norm{\Q}_2 + o(\ln \norm{\Q}_2),
\end{equation}
thus, it is enough to prove that the coefficient $C(\mathcal F,\mathcal G, \r)$ is positive. This would follow from Lemma~\ref{lemma:flagCoeff} if we prove that $\Pi(\mathcal F,\mathcal G)_{0\,i} > 0$ for $1\leqslant i \leqslant s$. Indeed,
\begin{equation}
\Pi(\mathcal F,\mathcal G)_{0\,i} = q \dim V_i - p \dim U_i
= q\dim U_i \(\frac{\dim X U_i}{\dim U_i} - \frac{p}{q}\).
\end{equation}
Since $\mathcal F$ is non-trivial, $\dim U_i \neq 0$ for $i\geqslant 1$. $|X|> p/q+q/p$, thus, due to Corollary~\ref{cor:enoughSamples} the expression in brackets is a.s. positive. This finishes the proof.
\end{itemize}
\end{proof}

The proof we have just presented may be complicated to grasp due a large amount of new notations and technical details, therefore, we now explain it in an informal way. Using the same notations, let us describe the main point of using flags. Choose bases in $U$ and $V$ respecting the subspaces $K_i$ and $L_j$. In these bases all the samples $\X\in X$ has $s$ blocks of rows and $s$ block of columns corresponding to  $L_i$ and $K_i$, respectively. Hence, each sample consists of $s^2$ blocks. Now we can easily count the contributions of the blocks to the asymptotic of $g_{\mathcal N}(\P,\Q;X)$.

The contributed speed of the $(i,j)$-th block of any $\X$ is $\lambda_i\mu_j$, up to a scalar depending on $\X$. In order to determine the asymptotic behavior of $g_{\mathcal N}(\P,\Q;X)$ written as in (\ref{g_sum}), we need to compare the negative impact of $\psi(\P,\Q)$ with the positive one of $\varphi(\P,\Q)$. The highest rate negative summand appearing in $\psi(\P,\Q)$ decreases with the rate of at most $\ln \norm{\Q}_2$ up to a fixed scalar. If $\lambda_i\mu_j$ tends to infinity faster than $\ln \norm{\Q}_2$, then $g_{\mathcal N}(\P,\Q;X)$ would tend to $+\infty$.

The problem occurs if all the blocks corresponding to those $\lambda_i\mu_j$ growing faster than $\ln \norm{\Q}_2$ are zero for all $\X \in X$. Let us note that if the $(i,j)$-th block is zero for all $\X$, then all the blocks with smaller $j$ and higher $i$ (to the left and down of our block) have higher speed and, hence, must be zero (otherwise we are in the first situation). This precisely means that all the samples $\X\in X$ are block upper triangular, i.e. they map flag $\mathcal{F}$ into $\mathcal{G}$.

Now we just use these observations together with Lemmas \ref{lemma:flagCoeff} and \ref{lemma:lambdaMu} to explicitly calculate the leading asymptotic term of $\psi(\P,\Q)$, which thanks to Corollary~\ref{cor:enoughSamples} grows to $+\infty$ and, therefore, implies the desired.

\subsection{The Unknown Mean Case}
\label{main_th_pr_sec}
Let $V$ be a vector space, $X = \{\x_1,\ldots,\x_n\} \subset V$ and let $\{\e_i\} \subset \mathbb R^n$ be the standard basis. Define an element $\x^* \in V\otimes \mathbb R^n$ as $\x^* = \sum_{i=1}^n \x_i \otimes \e_i$. Then for any $\P \in \mathcal P(V)$,
\begin{equation}
\sum_{\x\in X}(\P\x,\x) = (\P\otimes \I \, \x^*,\x^*).
\end{equation}
Let now introduce the sample mean
\begin{equation}
\widehat{\x} = \frac{1}{|X|} \sum_{\x\in X}\x,
\end{equation}
and $\bm{1} = [1,\dots,1]^\top \in \mathbb R^n$.

\begin{lemma}
\begin{equation}
	\S =
	\begin{pmatrix}
		{1-\frac{1}{n}}&{-\frac{1}{n}}&{\ldots}&{-\frac{1}{n}}\\
		{-\frac{1}{n}}&{1-\frac{1}{n}}&{\ldots}&{\vdots}\\	
		{\vdots}&{\vdots}&{\ddots}&{\vdots}\\	
		{-\frac{1}{n}}&{-\frac{1}{n}}&{\ldots}&{1-\frac{1}{n}}\\	
		\end{pmatrix} \in \mathbb{R}^{n \times n}
\end{equation}
is an orthogonal projector onto a subspace of codimension $1$.
\end{lemma}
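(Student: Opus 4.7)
The matrix $\S$ is just the centering matrix, which can be written compactly as $\S = \I - \tfrac{1}{n}\bm{1}\bm{1}^\top$ with $\bm{1} = [1,\dots,1]^\top \in \mathbb{R}^n$. The plan is to verify the three defining properties of ``orthogonal projector onto a codimension-$1$ subspace'' directly from this closed form: symmetry, idempotence, and the computation of the kernel.

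First I would observe that $\S = \S^\top$ is immediate from the closed form since $\bm{1}\bm{1}^\top$ is symmetric; this gives self-adjointness. Next I would verify idempotence by a short computation:
\begin{equation}
\S^2 = \bigl(\I - \tfrac{1}{n}\bm{1}\bm{1}^\top\bigr)^2 = \I - \tfrac{2}{n}\bm{1}\bm{1}^\top + \tfrac{1}{n^2}\bm{1}(\bm{1}^\top \bm{1})\bm{1}^\top = \I - \tfrac{1}{n}\bm{1}\bm{1}^\top = \S,
\end{equation}
where I have used $\bm{1}^\top \bm{1} = n$. Together with symmetry, this shows $\S$ is an orthogonal projector.

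Finally I would identify the range by computing the kernel. Since $\S\bm{1} = \bm{1} - \tfrac{1}{n}\bm{1}(\bm{1}^\top\bm{1}) = \bm{1} - \bm{1} = 0$, the vector $\bm{1}$ lies in $\ker \S$, so $\sspan{\bm{1}} \subseteq \ker \S$. Conversely, if $\v \perp \bm{1}$ then $\bm{1}^\top \v = 0$ and hence $\S\v = \v$, so $\bm{1}^\perp$ is contained in the image of $\S$. Since $\bm{1}^\perp$ has dimension $n-1$ and $\ker \S \supseteq \sspan{\bm{1}}$ has dimension at least $1$, the rank-nullity theorem forces equality: $\ker \S = \sspan{\bm{1}}$ and $\Im \S = \bm{1}^\perp$, a subspace of codimension $1$.

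There is no real obstacle here — this is a routine verification whose only purpose is to record the projector identity for use in the upcoming reduction from the unknown-mean log-likelihood $f_\mathcal{N}(\widehat\M,\P\otimes\Q;X)$ to a zero-mean version $g_\mathcal{N}(\P\otimes\Q;Y)$ with one fewer sample, via restricting the quadratic form $(\P\otimes\I\,\x^*,\x^*)$ to the range $\bm{1}^\perp$ of $\S$ (so that an orthonormal basis of $\bm{1}^\perp$ yields $n-1$ new samples while the centering operation is implemented by the projector).
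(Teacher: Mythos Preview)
Your proof is correct and takes essentially the same approach as the paper, which simply records that the spectrum of $\S$ consists of one eigenvalue $0$ and $n-1$ eigenvalues equal to $1$. Your argument is just the more explicit unpacking of that spectral statement via the closed form $\S = \I - \tfrac{1}{n}\bm{1}\bm{1}^\top$.
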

\begin{proof}
The spectrum of $\S$ contains one $0$ eigenvalue and the rest $n-1$ eigenvalues are $1$-s.
\end{proof}

\begin{lemma}\label{lemma:distrChange}
Let $V$ be a vector space, $X\subset V$ be a finite mutually continuous set of vectors, and $\P \in\mathcal P(V)$ - a random operator, then there exists a set $Z\subset V$ of mutually continuous vectors such that $|Z| = |X| - 1$ and
\begin{equation}
\frac{1}{|X|}\sum_{\x\in X}(\P(\x- \widehat{\x}),\x- \widehat{\x}) = \frac{1}{|Z|}\sum_{\z\in Z}(\P\z,\z).
\end{equation}
\end{lemma}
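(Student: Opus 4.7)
The plan is to use the factorisation of the centring projector $\S$ provided by the preceding lemma. Write $n = |X|$ and, following the notation introduced just before the lemma, set $\x^* = \sum_{i=1}^n \x_i \otimes \e_i \in V\otimes \mathbb R^n$. First I would establish the identity
\begin{equation}
\sum_{\x \in X}(\P(\x - \widehat{\x}), \x - \widehat{\x}) = (\P \otimes \S \, \x^*, \x^*),
\end{equation}
which follows by observing that $\sum_i(\x_i - \widehat{\x}) \otimes \e_i = (\I \otimes \S)\x^*$ (a one-line computation using $\S\e_i = \e_i - \1/n$) together with the facts that $\S$ is self-adjoint and idempotent.

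By the preceding lemma $\S$ is an orthogonal projector of rank $n-1$, so I would factor it as $\S = \B \B^\top$ with $\B \in \mathbb R^{n\times(n-1)}$ having orthonormal columns. Then $\P \otimes \S = (\I \otimes \B)(\P \otimes \I)(\I \otimes \B^\top)$, so setting $\z^* = (\I \otimes \B^\top)\x^* \in V \otimes \mathbb R^{n-1}$ and expanding $\z^* = \sum_{j=1}^{n-1}\z_j \otimes \e_j$ reduces the right-hand side to $(\P \otimes \S \, \x^*, \x^*) = \sum_{j=1}^{n-1}(\P\z_j, \z_j)$. To reconcile the normalising constant $1/n$ on the left with the required $1/(n-1)$ on the right I would rescale,
\begin{equation}
\w_j = \sqrt{(n-1)/n}\,\z_j, \qquad j=1,\dots,n-1,
\end{equation}
and take $Z = \{\w_1,\dots,\w_{n-1}\}$; direct substitution then yields the claimed identity.

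It remains to check that $Z$ is mutually continuous. Each $\w_j$ is the fixed linear combination $\sqrt{(n-1)/n}\sum_i B_{ij}\x_i$ of the $\x_i$, so the map $(\x_1,\dots,\x_n) \mapsto (\w_1,\dots,\w_{n-1})$ is a surjective linear map $V^n \to V^{n-1}$, surjective because $\B^\top \colon \mathbb R^n \to \mathbb R^{n-1}$ is. The pushforward of a measure absolutely continuous with respect to Lebesgue measure under a surjective linear map is again absolutely continuous, so mutual continuity of $X$ transfers to $Z$.

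The only non-routine point is the scaling: the centred vectors $\x_i - \widehat{\x}$ obey the linear constraint $\sum_i(\x_i - \widehat{\x}) = 0$ and thus cannot themselves be mutually continuous on $V^n$, and factoring through $V^{n-1}$ via $\B^\top$ is precisely the algebraic device that removes this degeneracy while preserving the quadratic form. The choice of scalar $\sqrt{(n-1)/n}$ is then exactly what converts the empirical-mean normalisation $1/n$ into the $1/(n-1)$ convention on the right-hand side, and no other obstacle arises.
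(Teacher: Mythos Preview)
Your proposal is correct and is essentially the same argument as the paper's: the paper chooses an orthonormal basis $\{\f_i\}$ of $\mathbb R^n$ with $\ker\S=\langle\f_n\rangle$ and expands $(\I\otimes\S)\x^*$ in it, which amounts exactly to your factorisation $\S=\B\B^\top$ with $\B=[\f_1,\ldots,\f_{n-1}]$, followed by the same rescaling $\sqrt{(n-1)/n}$ and the same surjectivity observation for mutual continuity.
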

\begin{proof}
Note that $\x^*-\widehat{\x} \otimes \bm{1} = (\I \otimes \S)\,\x^*$. Let $\{\f_i\}$ be another orthonormal basis of $\mathbb R^n$ such that $\ker \S = \langle \f_n\rangle$. Compute
\begin{equation}
(\I \otimes \S)\,\x^* = \sum_{i=1}^{n-1} \y_i \otimes \f_i,
\end{equation}
and denote $Y = \{\y_1,\ldots,\y_{n-1}\}$, then
\begin{equation}
\sum_{\x\in X}(\P(\x- \widehat{\x}),\x- \widehat{\x}) = ((\P\otimes \S)\, \x^*,\x^*) = (\P \otimes \I\, \y^*,\y^*) = \sum_{\y\in Y} (\P \y,\y),
\end{equation}
where $\y_i$ are now centered.

Take 
\begin{equation}
\z_i = \sqrt{\frac{|X|-1}{|X|}}\y_i,\quad i=1,\dots,n-1.
\end{equation}
The mutual continuity of $\z_i$ follows from the fact that the function $V^n \to V^{n-1}$ mapping $X$ to $Z$ is linear and surjective.
\end{proof}

\begin{lemma}\label{lemma:continuityExpVar}
Let $V$ and $U$ be vector spaces and $X\subset V\otimes U$ - a mutually continuous family of random vectors with $|X|> 1 + p/q+q/p$, then $f_\mathcal N(\widehat{\X},\P,\Q;X)$ as a function of $\P$ and $\Q$ extends to a continuous $\dt f_\mathcal N \colon \dt {\mathcal M}_\mathcal{N} \to \dt {\mathbb R}$ via $\dt f_\mathcal N (\infty) =+\infty$, in particular, $f_\mathcal N$ has a unique minimum over $\mathcal M_\mathcal{N}$.
\end{lemma}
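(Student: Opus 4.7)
The plan is to reduce the unknown-mean statement to the known-mean Theorem~\ref{theorem:continuity} established just above, using Lemma~\ref{lemma:distrChange} as the bridge. The key observation is that the centered trace form appearing in $f_\mathcal{N}(\widehat{\X},\P,\Q;X)$ is precisely a quadratic form of the single operator $\P \otimes \Q$ acting on the ambient space $V \otimes U$, so Lemma~\ref{lemma:distrChange} applies directly once we take its ambient vector space to be $V \otimes U$ and its ``random operator'' to be $\P \otimes \Q$.

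Concretely, I would first vectorize each $\X_i \in \mathcal{M}_{p \times q}$ as $\x_i \in V \otimes U$ and rewrite the quadratic summand of $f_\mathcal{N}$ as
\begin{equation*}
\frac{1}{|X|}\sum_{i=1}^{|X|} \Tr{\P(\X_i - \widehat{\X})\Q(\X_i - \widehat{\X})^\top} = \frac{1}{|X|}\sum_{\x \in X}\bigl((\P\otimes\Q)(\x - \widehat{\x}),\,\x - \widehat{\x}\bigr).
\end{equation*}
Applying Lemma~\ref{lemma:distrChange} in the ambient space $V \otimes U$ then produces a set $Z = \{\z_1,\ldots,\z_{|X|-1\}} \subset V \otimes U$ of mutually continuous random vectors, depending only on $X$ (not on $(\P,\Q)$), such that
\begin{equation*}
f_\mathcal{N}(\widehat{\X},\P,\Q;X) = g_\mathcal{N}(\P,\Q;Z),
\end{equation*}
where the log-determinant terms on both sides coincide since they do not involve the samples.

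Next I would verify the hypotheses of Theorem~\ref{theorem:continuity} for the new family $Z$. The cardinality condition $|X| > 1 + p/q + q/p$ translates to $|Z| = |X| - 1 > p/q + q/p$, which is exactly what Theorem~\ref{theorem:continuity} requires. The $\z_i$ are not statistically independent (they are linear combinations of the original samples through the rank-deficient projector $\S$), but Theorem~\ref{theorem:continuity} explicitly allows statistical dependence and identical distribution failures, requiring only mutual continuity, which Lemma~\ref{lemma:distrChange} guarantees. Theorem~\ref{theorem:continuity} applied to $Z$ then yields the continuous extension of $g_\mathcal{N}(\cdot;Z)$ to $\dt{\mathcal{M}}_\mathcal{N}$ with value $+\infty$ at $\infty$, and hence a unique minimum on $\mathcal{M}_\mathcal{N}$. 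By the identity above, the same extension and uniqueness transfer to $f_\mathcal{N}(\widehat{\X},\cdot,\cdot;X)$.

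The main (mild) obstacle is conceptual rather than computational: one must recognize that the natural ambient space for applying Lemma~\ref{lemma:distrChange} is the tensor product $V \otimes U$, with $\P \otimes \Q$ treated as one operator on it, rather than attempting to centre $\P$ and $\Q$ separately. Once this reinterpretation is in place and the loss of one degree of freedom caused by subtracting the empirical mean is absorbed into the ``$+1$'' in the cardinality threshold, nothing beyond invoking Theorem~\ref{theorem:continuity} is required.
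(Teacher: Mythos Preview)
Your proposal is correct and follows exactly the paper's approach: apply Lemma~\ref{lemma:distrChange} in the ambient space $V\otimes U$ to replace the centered family $X$ by a mutually continuous family $Z$ with $|Z|=|X|-1$, observe that $f_\mathcal{N}(\widehat{\X},\P,\Q;X)=g_\mathcal{N}(\P,\Q;Z)$, and then invoke Theorem~\ref{theorem:continuity} since $|Z|>p/q+q/p$. The paper's proof is two sentences long and does precisely this; your version simply spells out the identifications more explicitly.
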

\begin{proof}
Applying Lemma~\ref{lemma:distrChange}, we get a mutually continuous set $Z\subset V\otimes U$ such that $|Z| = |X| - 1$ and $f_\mathcal N(\widehat{\X},\P,\Q;X) = g_{\mathcal N}(\P,\Q;Z)$. Now the desired claim follows from Theorem~\ref{theorem:continuity}.
\end{proof}

\begin{theorem}\label{theorem:continuityExpVar}
Let $V$ and $U$ be vector spaces, $X\subset V\otimes U$ be a mutually continuous family of random vectors with $|X|> 1 + p/q+q/p$, and $S = V\otimes U \times \mathcal M_\mathcal{N}$, then $f_\mathcal{N}\colon S \to \mathbb R$ extends to a continuous function $\dt f_\mathcal{N} \colon \dt {S} \to \dt {\mathbb R}$ via $\dt f_\mathcal{N}(\infty) =+\infty$ and, in particular, there exists a unique minimum of $f_\mathcal{N}$ over $S$.
\end{theorem}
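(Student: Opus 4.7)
\noindent\emph{Proof plan.} The strategy is to reduce the statement to Lemma~\ref{lemma:continuityExpVar} by exploiting the fact that the dependence of $f_{\mathcal N}$ on $\M$ decouples cleanly from its dependence on $(\P,\Q)$. Writing $\X_i - \M = (\X_i - \widehat{\X}) + (\widehat{\X} - \M)$ and expanding each trace in the definition of $f_\mathcal{N}$, the cross terms vanish since $\sum_i (\X_i - \widehat{\X}) = 0$, yielding the bias--variance identity
\begin{equation}
f_\mathcal{N}(\M,\P,\Q;X) = f_\mathcal{N}(\widehat{\X},\P,\Q;X) + h(\M,\P,\Q),
\end{equation}
where $h(\M,\P,\Q) = \Tr{\P(\widehat{\X}-\M)\Q(\widehat{\X}-\M)^\top} \geqslant 0$, with equality if and only if $\M = \widehat{\X}$.

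The first summand on the right, as a function on $\mathcal M_\mathcal{N}$ alone, is exactly the object handled by Lemma~\ref{lemma:continuityExpVar}: it extends continuously to $\dt{\mathcal M}_\mathcal{N}$ with value $+\infty$ at $\infty$ and admits a unique minimizer a.s. The key task is to upgrade this to an extension on $\dt S$, the one-point compactification of $S = V \otimes U \times \mathcal M_\mathcal{N}$. I would do this by showing that $f_\mathcal{N}(\M_k,\P_k,\Q_k;X)\to +\infty$ whenever $(\M_k,\P_k\otimes\Q_k)$ leaves every compact subset of $S$.

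The main technical obstacle is that $\dt S$ is \emph{not} the product of the compactifications $\dt{V\otimes U}\times \dt{\mathcal M}_\mathcal{N}$: escape to infinity in $\dt S$ allows one, the other, or both components to diverge. I would handle this by a liminf argument, passing to a subsequence along which one of two cases occurs. Either $(\P_{k_j},\Q_{k_j})\to \infty$ in $\dt{\mathcal M}_\mathcal{N}$, in which case Lemma~\ref{lemma:continuityExpVar} forces the first summand to $+\infty$ while $h\geqslant 0$ does no harm; or $(\P_{k_j},\Q_{k_j})$ remains in some compact $K\subset \mathcal M_\mathcal{N}$, and then escape from compact subsets of $S$ forces $\norm{\M_{k_j}}\to \infty$. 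In the latter case the continuity and positivity of $\lambda_{\min}$ on the open cone of positive definite matrices give a uniform lower bound $\lambda_{\min}(\P)\lambda_{\min}(\Q) \geqslant c_K > 0$ on $K$, whence $h(\M_{k_j},\P_{k_j},\Q_{k_j})\geqslant c_K\,\norm{\widehat{\X}-\M_{k_j}}^2 \to +\infty$, while the first summand stays bounded below on $K$. Either way $f_\mathcal{N}\to +\infty$, so the extension $\dt f_\mathcal{N}$ is continuous at $\infty$.

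Finally, since $\dt f_\mathcal{N}$ is continuous on the compact space $\dt S$ with $\dt f_\mathcal{N}(\infty) = +\infty$, a minimum is attained in the interior $S$. Uniqueness follows immediately from the decomposition: at any minimizer $(\M_\ast,\P_\ast,\Q_\ast)$ one must have $\M_\ast = \widehat{\X}$ (by the equality case of $h\geqslant 0$), and consequently $(\P_\ast,\Q_\ast)$ must coincide with the unique a.s.\ minimizer of $f_\mathcal{N}(\widehat{\X},\cdot,\cdot;X)$ furnished by Lemma~\ref{lemma:continuityExpVar}.
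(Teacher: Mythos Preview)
Your proposal is correct and follows the same reduction as the paper: both hinge on the observation that for each fixed $(\P,\Q)$ the minimizer in $\M$ is the sample mean $\widehat{\X}$, so the problem collapses to Lemma~\ref{lemma:continuityExpVar}. The paper's own proof is a terse two sentences that simply invokes this decoupling and defers to Lemma~\ref{lemma:continuityExpVar} and Theorem~\ref{theorem:continuity}; your bias--variance decomposition and the subsequence dichotomy (either $(\P_k,\Q_k)\to\infty$ or $\norm{\M_k}\to\infty$ with $(\P_k,\Q_k)$ precompact) make explicit the continuous-extension-to-$\dt S$ step that the paper leaves implicit.
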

\begin{proof}
Note that for a fixed pair $(\P,\Q)$, the value of $\M$ minimizing $f_\mathcal{N}$ is the sample average, which does not depend on the values of $\P$ and $\Q$. Therefore, the result follows from Lemma~\ref{lemma:continuityExpVar} and Theorem~\ref{theorem:continuity}.
\end{proof}

\section{Proof of Theorem~\ref{tyl_mt}}
\label{RobustProofSec}
The proof in this section is quite similar to that given in \cite{soloveychik2014groups}, thus, we made it less verbose than the proof of the previous section. For more details please consult \cite{soloveychik2014groups}. Analogously to Definition \ref{def:Gflag}, we introduce the notion of a descending flag and note that the usage of flags in this section is different from that of Section \ref{GaussProofSec}.
\begin{definition}
Let $V$ be a real linear space, $X\subset V$ be a finite subset and $\mathcal F = \{V=V_0\supsetneq V_1\supsetneq \ldots V_s\supsetneq V_{s} \supseteq 0\}$ be a descending flag of length $s$ in $V$. Define
\begin{equation}
\Delta(\mathcal F, X)_{i\,j} = \dim V_{i} - \dim V_j -\frac{\dim V_1}{|X|}\left(|X\cap V_i| - |X\cap V_j|\right),
\end{equation}
where $0\leqslant i,\,j\leqslant s$. In addition, given a decreasing sequence
\begin{equation}
\r =\{r_1>\ldots>r_s\} \subset \mathbb{R}
\label{seq_r}
\end{equation}
of length $s$, define
\begin{equation}
S(\mathcal F, X, \r) = \sum_{i=1}^s r_i \Delta(\mathcal F,X)_{i-1\,i}.
\end{equation}
\end{definition}
It now follows immediately from the definition that
\begin{equation}
\Delta(\mathcal F, X)_{i\,j} + \Delta(\mathcal F, X)_{j\,k} = \Delta(\mathcal F, X)_{i\,k},\;\; i,j,k=0,\dots, s.
\label{saddpr}
\end{equation}

\begin{lemma}\label{lemma:PartitCoeff}
Let $X\subseteq V$ be a finite subset, $\mathcal F$ be a flag of length $s$ in $V$, $\r$ be a sequence as in (\ref{seq_r}), and $\Delta(\mathcal F, X)_{0\,i} < 0$ for all $i=1,\dots, s$. Then there exist a subflag $\mathcal F'\subseteq \mathcal F$ and a subsequence $\r'\subseteq \r$, both of length $t\leqslant s$, such that
\begin{equation}
S(\mathcal F,X,\r)\leqslant S(\mathcal F',X,\r'),
\end{equation}
\begin{equation}
\Delta(\mathcal F',X)_{i-1\,i} \leqslant 0,\;i=1,\dots,t.
\end{equation}
In particular, $S(\mathcal F,X,\r)< 0$.
\end{lemma}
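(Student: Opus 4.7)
\medskip

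\noindent\textbf{Proof proposal.} The plan is to mimic the proof of Lemma~\ref{lemma:flagCoeff} almost verbatim, adapting signs and orientations since here we deal with descending flags, negative quantities $\Delta(\mathcal F,X)_{0,i}$, and wish to conclude $S(\mathcal F,X,\r)<0$ rather than a positivity statement. The argument will be by induction on the length $s$ of the flag. For the base case $s=1$ there is nothing to do: take $\mathcal F'=\mathcal F$ and $\r'=\r$, and note $S(\mathcal F,X,\r)=r_1\Delta(\mathcal F,X)_{0,1}<0$ by assumption together with $r_1>0$.

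For the inductive step, if already $\Delta(\mathcal F,X)_{i-1\,i}\leqslant 0$ for every $i$, then again the choice $\mathcal F'=\mathcal F,\;\r'=\r$ works. Otherwise let $t$ be the smallest index with $\Delta(\mathcal F,X)_{t-1\,t}>0$, so that $\Delta(\mathcal F,X)_{i-1\,i}\leqslant 0$ for every $i<t$; since by hypothesis $\Delta(\mathcal F,X)_{0\,1}<0$, automatically $t\geqslant 2$. I would then form $\mathcal F'$ from $\mathcal F$ by deleting $V_{t-1}$, and $\r'$ from $\r$ by deleting $r_{t-1}$. Using $r_{t-1}>r_t>0$, the inequality $\Delta(\mathcal F,X)_{t-2\,t-1}\leqslant 0$, and the additivity relation \eqref{saddpr}, one computes
\begin{multline}
S(\mathcal F,X,\r)=\sum_{i\neq t-1,t}r_i\Delta(\mathcal F,X)_{i-1\,i}+r_{t-1}\Delta(\mathcal F,X)_{t-2\,t-1}+r_t\Delta(\mathcal F,X)_{t-1\,t}\\
\leqslant \sum_{i\neq t-1,t}r_i\Delta(\mathcal F,X)_{i-1\,i}+r_t\bigl(\Delta(\mathcal F,X)_{t-2\,t-1}+\Delta(\mathcal F,X)_{t-1\,t}\bigr)=S(\mathcal F',X,\r'),
\end{multline}
which gives the required monotonicity.

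To close the induction I must verify that $\mathcal F'$ still satisfies the hypothesis $\Delta(\mathcal F',X)_{0\,j}<0$ for every $j$. Since $t\geqslant 2$, the top space of $\mathcal F'$ is still $V_0$ and its second space is still $V_1$, so the normalizing factor $\dim V'_1/|X|$ coincides with $\dim V_1/|X|$; each $\Delta(\mathcal F',X)_{0\,j}$ therefore equals either $\Delta(\mathcal F,X)_{0\,j}$ or $\Delta(\mathcal F,X)_{0\,j+1}$, all of which are strictly negative by assumption. The inductive hypothesis then supplies a further subflag and subsequence with the desired properties, and the chain $S(\mathcal F,X,\r)\leqslant S(\mathcal F',X,\r')\leqslant S(\mathcal F'',X,\r'')$ combined with the base case delivers $S(\mathcal F,X,\r)<0$.

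\medskip

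\noindent\textbf{Anticipated difficulty.} No genuine obstacle is expected; the argument is essentially formal bookkeeping. The one point requiring care is verifying that the inductive hypothesis descends to $\mathcal F'$, which is what forces the deletion to take place only at an index $t\geqslant 2$ and which is why the specific shape of the denominator $\dim V_1$ in the definition of $\Delta$ does not obstruct the reduction. A secondary subtlety is that the inequality $r_{t-1}\Delta_{t-2\,t-1}\leqslant r_t\Delta_{t-2\,t-1}$ flips direction because $\Delta_{t-2\,t-1}\leqslant 0$; this is opposite to the Gaussian lemma and is exactly what converts the positivity conclusion of Lemma~\ref{lemma:flagCoeff} into the negativity conclusion here.
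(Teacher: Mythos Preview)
Your argument is correct and follows the same inductive scheme as the paper's own proof: locate the first index $t$ at which $\Delta_{t-1,t}>0$, delete one subspace and one entry of $\r$ to obtain a shorter flag with $S$ not decreased, check that the hypothesis $\Delta_{0,j}<0$ persists, and recurse. The only cosmetic difference is that you delete the pair $(V_{t-1},r_{t-1})$ and exploit $\Delta_{t-2,t-1}\leqslant 0$, whereas the paper's displayed computation keeps $r_{t-1}$, drops $r_t$, and exploits $\Delta_{t-1,t}>0$; both estimates are valid.

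One correction, however. Your care about the factor $\dim V_1$ in the definition of $\Delta(\mathcal F,X)_{i\,j}$ is prompted by a typo: comparison with the explicit computation of $\Delta(\mathcal F,X)_{0\,i}$ in the proof of Lemma~\ref{el_lem} shows that the intended normalization is $\dim V_0$ (the full ambient space), which is of course invariant under passage to any subflag. With the intended definition your worry disappears entirely. This is fortunate, because your stated resolution --- that ``the second space of $\mathcal F'$ is still $V_1$'' whenever $t\geqslant 2$ --- is false precisely at $t=2$, where you are deleting $V_{t-1}=V_1$ itself. So drop that sentence; under the correct definition there is simply nothing to verify, and your proof stands.
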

\begin{proof}
The proof is by induction on $s$. For $s=1$,
\begin{equation}
S(\mathcal F,X,\r) = r_1 \Delta(\mathcal F,X)_{0\,1}<0.
\end{equation}
Let now $s>1$. If for all $i=1,\dots, s,\; \Delta(\mathcal F,X)_{i-1\,i}\leqslant 0$, then we are done since $\Delta(\mathcal F,X)_{0\,1} < 0$. Hence, we may assume that there is $i \leqslant s$ such that
\begin{equation}
\Delta(\mathcal F,X)_{j-1\,j}\leqslant 0,\; 1\leqslant j<i,\;\; \text{and}\;\; \Delta(\mathcal F,X)_{i-1\,i} > 0.
\end{equation}
Set $\mathcal F'$ to be $\mathcal F$ without $V_i$ and $\r'$ to be $\r$ without $r_i$, then
\begin{multline}
S(\mathcal F,X,\r) = \sum_{j\neq i-1,i}r_j \Delta(\mathcal F,X)_{j-1\,j} + r_{i-1}\Delta(\mathcal F,X)_{i-2\,i-1} + r_i \Delta(\mathcal F,X)_{i-1\,i} \\ \leqslant  \sum_{j\neq i-1,i}r_j \Delta(\mathcal F,X)_{j-1\,j} + r_{i-1}(\Delta(\mathcal F,X)_{i-2\,i-1} + \Delta(\mathcal F,X)_{i-1\,i}) = S(\mathcal F',X,\r'),
\end{multline}
where in the last equality we use (\ref{saddpr}). Since the length of $\mathcal F'$ is less than that of $\mathcal F$ and $\Delta(\mathcal F',X)_{0\,j}$ is either $\Delta(\mathcal F,X)_{0\,j-1}$ or $\Delta(\mathcal F,X)_{0\,j}$, thus, being negative, the result follows by induction.
\end{proof}

Let $V$ and $U$ be real vector spaces. For any $\V \in V\otimes U$, denote the subspace
\begin{equation}
\V U^* = \{\V \bm{\xi} \mid \bm{\xi} \in U^*\}\subseteq V,
\end{equation}
where $\V \bm\xi$ is the convolution along $U$.

\begin{lemma}\label{lemma:IntersectionCount}
Let $V$ and $U$ be vector spaces, and $X$ be a family of i.i.d. continuously distributed random vectors in $V\otimes U$, then
\begin{equation}
\dim \sum_{\X \in X}\X U^* = \min (|X| \dim U,\dim V),\;\;\text{a.s.}
\end{equation}
\end{lemma}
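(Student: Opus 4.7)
The plan is to reduce the claim to a rank statement for a concrete matrix and then invoke the standard fact that a proper real algebraic subvariety of Euclidean space has Lebesgue measure zero.

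First, fix bases so as to identify $V \cong \mathbb{R}^p$ and $U^* \cong \mathbb{R}^q$. Under this identification each $\X \in V \otimes U$ becomes a $p \times q$ real matrix whose columns span $\X U^*$. Hence for $X = \{\X_1,\dots,\X_n\}$ the sum $\sum_{\X \in X}\X U^*$ coincides with the column span of the horizontally stacked block matrix $M(X) = [\X_1 \mid \cdots \mid \X_n] \in \mathbb{R}^{p \times nq}$, so that
\begin{equation*}
\dim \sum_{\X \in X}\X U^* = \rank{M(X)}.
\end{equation*}
Setting $r = \min(nq, p)$, we always have $\rank{M(X)} \leqslant r$, and the goal becomes to show $\rank{M(X)} = r$ almost surely.

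Next, I would note that the event $\{\rank{M(X)} < r\}$ is precisely cut out by the simultaneous vanishing of all $r \times r$ minors of $M(X)$. Each such minor is a polynomial in the $pqn$ real entries of $X$, so
\begin{equation*}
B = \{X \in (V \otimes U)^n : \rank{M(X)} < r\}
\end{equation*}
is a real algebraic subvariety of $(V \otimes U)^n \cong \mathbb{R}^{pqn}$. To show $B$ is a proper subvariety, one exhibits a single element $X_0 \notin B$: choose the $\X_i$ so that the first $r$ columns of $M(X_0)$ are the standard basis vectors $\e_1,\dots,\e_r$ of $\mathbb{R}^p$, which is possible because both $p \geqslant r$ and $nq \geqslant r$. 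Then $\rank{M(X_0)} = r$, so $X_0 \notin B$, and $B$ is properly contained in $\mathbb{R}^{pqn}$, hence has Lebesgue measure zero.

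Finally, since the $\X_i$ are i.i.d.\ and continuously distributed, the joint law of $X$ on $(V \otimes U)^n$ is absolutely continuous with respect to Lebesgue measure, whence $\Pr{X \in B} = 0$ and $\rank{M(X)} = r$ almost surely, as required. The argument is essentially routine; the only substantive step is the concrete construction of $X_0$ witnessing that the bad set is proper, and the only interpretive point is to read ``continuously distributed'' as absolutely continuous with respect to Lebesgue measure on $V \otimes U$, consistent with the paper's convention, so that a Lebesgue-null set is also null under the distribution of $X$.
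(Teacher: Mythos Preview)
Your proof is correct and follows essentially the same approach as the paper: identify the elements of $X$ with $p\times q$ matrices, stack them horizontally into a $p\times nq$ matrix, and observe that this matrix has maximal rank almost surely because the rank-deficient locus is a proper algebraic subvariety. The paper's proof is terser (it simply asserts that the stacked matrix is of maximal rank since the $\X_i$ are i.i.d.\ and continuously distributed), whereas you spell out the measure-zero argument explicitly and exhibit a witness $X_0$.
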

\begin{proof}
Choose bases in $V$ and $U$, then the elements of $X$ read as matrices and the space $\sum_{\X \in X} \X U^*$ is spanned by the columns of all $\X \in X$. Since the elements of $X$ are i.i.d. and continuously distributed, the matrix consisting of all columns of all $\X$-s is of maximal rank. Since it contains $|X|\dim U$ columns and $\dim V$ rows, the result follows.
\end{proof}

\begin{corollary}\label{corollary:IntersectionCount}
Let $V$ and $U$ be vector spaces, $K\subsetneq V$ be a proper subspace, and $X \subset V\otimes U$ be a family of i.i.d. continuously distributed vectors, then
\begin{equation}
|X \cap K\otimes U| \leqslant \frac{\dim K}{\dim U},\;\;\text{a.s.}
\end{equation}
\end{corollary}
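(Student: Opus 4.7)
The statement is meaningful only when $K$ is allowed to depend on $X$, since any fixed proper subspace $K \subsetneq V$ gives a measure-zero subset $K\otimes U \subset V\otimes U$ and hence $X\cap K\otimes U = \emptyset$ almost surely. So the claim to prove is really: there is an almost-sure event on which, simultaneously for every proper subspace $K \subsetneq V$, the inequality $|X\cap K\otimes U|\leqslant \dim K/\dim U$ holds. The plan is to reduce this uniform-in-$K$ bound to finitely many applications of Lemma~\ref{lemma:IntersectionCount}, by replacing the quantifier over $K$ with a quantifier over subsets of $X$.

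First, fix bases in $V$ and $U$ so that an element of $V\otimes U$ is identified with a $p\times q$ matrix whose columns are vectors in $V$, and $\X$ belongs to $K\otimes U$ iff all of its columns lie in $K$. For each subset $S\subseteq X$, the elements of $S$ form an i.i.d.\ continuously distributed family in $V\otimes U$, so Lemma~\ref{lemma:IntersectionCount} applied to $S$ gives
\begin{equation}
\dim \sum_{\X\in S}\X U^* \;=\; \min\!\bigl(|S|\dim U,\;\dim V\bigr)\quad\text{a.s.}
\end{equation}
Since $X$ is finite, a union bound over the finitely many subsets $S\subseteq X$ produces a single almost-sure event $\mathcal{E}$ on which this rank identity holds for every $S\subseteq X$ at once.

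Now work on $\mathcal{E}$ and let $K\subsetneq V$ be an arbitrary proper subspace. Put $S = X\cap K\otimes U$ and $m = |S|$. Each column of each $\X\in S$ lies in $K$, so $\sum_{\X\in S}\X U^*\subseteq K$, and hence
\begin{equation}
\min\!\bigl(m\dim U,\;\dim V\bigr) \;=\; \dim \sum_{\X\in S}\X U^* \;\leqslant\; \dim K \;<\; \dim V.
\end{equation}
The strict inequality on the right forces the minimum on the left to equal $m\dim U$, giving $m\dim U\leqslant \dim K$, i.e.\ $m\leqslant \dim K/\dim U$, as required.

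There is no serious obstacle here: the only point to verify carefully is that the union bound is taken over a \emph{finite} collection of events (subsets of the finite family $X$), so that almost sureness is preserved, and that the quantifier swap is legitimate because the bound obtained on $\mathcal{E}$ depends on $K$ only through the combinatorial data $S = X\cap K\otimes U$, which is automatically an element of that finite collection.
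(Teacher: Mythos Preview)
Your proof is correct and follows the same core idea as the paper: set $Y = X \cap K\otimes U$, observe that $\sum_{\Y\in Y}\Y U^* \subseteq K$, and invoke Lemma~\ref{lemma:IntersectionCount} to bound $|Y|\dim U$ by $\dim K$. The paper's proof is a one-line version of this that does not spell out the union bound over subsets; your explicit treatment of the quantifier over $K$ via a finite union bound over index subsets of $X$ is a legitimate clarification, especially since the corollary is later applied (in Lemma~\ref{el_lem}) with $K$ a random subspace.
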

\begin{proof}
Let $Y = X \cap K\otimes U$, then $E = \sum_{\Y\in Y} \Y U^*\subseteq K$ and Lemma~\ref{lemma:IntersectionCount} yields
\begin{equation}
|Y|\dim U=\dim E \leqslant \dim K.
\end{equation}
\end{proof}

Similarly to the Gaussian case, below we change the parametrization 
\begin{equation}
f_\mathcal{E}(\P,\Q;X) = \widetilde f_\mathcal{E}(\P^{-1},\Q^{-1};X),
\end{equation}
which does not affect the existence and uniqueness results. Let the dimensions of $V$ and $U$ be $p$ and $q$, correspondingly, and partition $f_\mathcal{E}(\P,\Q;X)$ as
\begin{equation}
f_\mathcal{E}(\P,\Q;X) = - \frac{1}{p}\ln|\P| - \frac{1}{q}\ln|\Q|
+ \frac{1}{n} \sum_{i=1}^n \ln\(\Tr{\P\X_i\Q\X_i^\top}\) = f_\P + f_\Q + f_X.
\label{fe_dev}
\end{equation}
Consider $f_\mathcal{E}(\P,\Q;X)$ over $\mathcal{M}_{\mathcal E}$ defined in (\ref{m_e_def}), which in our new notations means that
\begin{equation}
\Tr{\P^{-1}} = \Tr{\Q^{-1}} = 1.
\label{tr_inv_c}
\end{equation}
\begin{lemma}
\label{el_lem}
Let $V$ and $U$ be $p$ and $q$ dimensional real linear spaces, respectively, then if
\begin{equation}
|X| > \frac{\max(p,q)}{\min(p,q)},
\end{equation}
\begin{equation}
f_\mathcal{E} \to +\infty\;\; \text{as}\;\; \mathcal M_\mathcal{E} \ni (\P,\Q)\to \partial \mathcal M_\mathcal{E},\;\;\text{a.s.}
\end{equation}
\end{lemma}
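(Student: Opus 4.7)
The plan is to mimic the flag-based asymptotic analysis of Theorem~\ref{theorem:continuity}, with three adaptations: the quadratic trace is replaced by its logarithm (so divergence contributions live on a log-scale and must be captured by Abel summation), the descending flag apparatus and the partition coefficient $S(\mathcal F, X, \r)$ introduced just above replace the ascending framework, and Corollary~\ref{corollary:IntersectionCount} governs sample counts in place of Corollary~\ref{cor:enoughSamples}.

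I would take an arbitrary sequence $(\P,\Q)\in \mathcal M_{\mathcal E}$ tending to $\partial \mathcal M_{\mathcal E}$ and suppress indices. Since $\Tr{\P^{-1}}=\Tr{\Q^{-1}}=1$ keeps the smallest eigenvalues of $\P$ and $\Q$ bounded below, boundary convergence forces $\|\P\|_2\to\infty$ or $\|\Q\|_2\to\infty$. After extracting a subsequence, I group the spectra into rate-blocks $\lambda_1\gg\cdots\gg\lambda_s=O(1)$ and $\mu_1\gg\cdots\gg\mu_t=O(1)$, with limiting eigenspaces $L_k,K_l$, and build descending flags $V_k=\bigoplus_{k'>k}L_{k'}$ in $V$ and $U_l=\bigoplus_{l'>l}K_{l'}$ in $U$. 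Abel summation expresses the log-determinant terms as $-\tfrac{1}{p}\ln|\P|=\sum_k a_k(1-\dim V_k/p)+O(1)$ with $a_k=\ln\lambda_k-\ln\lambda_{k+1}$, and analogously for $\Q$ with increments $b_l$.

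For the log-trace I decompose each sample $\X_i$ in the bigraded basis into blocks $\X_i^{kl}=\Pi_{L_k}\X_i\Pi_{K_l}$; then $\Tr{\P\X_i\Q\X_i^\top}=\sum_{k,l}\lambda_k\mu_l\|\X_i^{kl}\|_F^2$ is asymptotically dominated by the largest product rate over nonvanishing blocks, yielding $\ln\Tr_i\asymp \ln\rho_i^*$ with $\rho_i^*=\max\{\lambda_k\mu_l:\X_i^{kl}\neq 0\}$. Abel-summing $\sum_i\ln\rho_i^*$ along the rate-ordered combined filtration $\{W_m\}$ in $V\otimes U$ and matching with the $f_\P+f_\Q$ expansion, the net coefficient of each divergent increment $a_k$ in $f_{\mathcal E}$ reduces to $\dim V_k/p - |X\cap V_k\otimes U|/n$; by Corollary~\ref{corollary:IntersectionCount} this is bounded below by $\dim V_k\,(nq-p)/(npq)$, strictly positive whenever $n>p/q$ and $\dim V_k>0$. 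The symmetric $\Q$-side bound gives $\dim U_l\,(np-q)/(npq)$, positive for $n>q/p$. Both inequalities are packaged into the hypothesis $|X|>\max(p,q)/\min(p,q)$, and since at least one of $\|\P\|_2,\|\Q\|_2\to\infty$ at least one of the $a_k$'s or $b_l$'s diverges, so $f_{\mathcal E}\to +\infty$ almost surely.

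The delicate point I expect to be the main obstacle is the coupling of the two flags inside the single logarithm: the dominant block $(k_i^*,l_i^*)$ of each sample depends on both spectra simultaneously, so $\ln\Tr_i$ cannot be split naively into a $\P$-piece and a $\Q$-piece. My way to handle this is to work with the combined staircase filtration $\{W_m\}$ in $V\otimes U$ ordered by the product rates $\lambda_k\mu_l$, noting that each $W_m$ is sandwiched between product sublattices $V_k\otimes U$ and $V\otimes U_l$ so that Corollary~\ref{corollary:IntersectionCount} still controls $|X\cap W_m|$. One then applies Lemma~\ref{lemma:PartitCoeff} either to each axial flag separately or to the combined one to certify the required sign of $S(\cdot,X,\r)$, at which point the sharp threshold $\max(p,q)/\min(p,q)$ emerges as the larger of the two axial scale ratios demanded by the two halves of the bound.
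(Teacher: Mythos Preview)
Your proposal is essentially the paper's proof: the same spectral-rate partitioning of the eigenvalues of $\P$ and $\Q$, the same construction of descending flags in $V$ and $U$, and the same endgame via Lemma~\ref{lemma:PartitCoeff} and Corollary~\ref{corollary:IntersectionCount}, yielding the identical threshold $\max(p,q)/\min(p,q)$.

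The one substantive difference is how the two-flag coupling inside the single logarithm $\ln(\P\X\Q,\X)$ is handled. You flag this as the main obstacle and propose a combined staircase filtration $\{W_m\}$ in $V\otimes U$ ordered by the product rates. The paper does not do this. Instead it partitions the samples directly as $X=\bigsqcup_{i,j}X_{ij}$, where $X_{ij}$ collects those $\X$ whose top-left nonvanishing block in the limiting bigrading is exactly the $(i,j)$-block; for such $\X$ the dominant summand of $(\P\X\Q,\X)$ is $\lambda_i\mu_j\|\X^{ij}\|^2$, whence $\ln(\P\X\Q,\X)\asymp r_i\ln\rho+t_j\ln\nu$, which is already additive in the two flags. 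Summing over the partition yields the decoupled expression $f_{\mathcal E}\asymp -S(\mathcal F,X,\r)\ln\rho - S(\mathcal G,X,\t)\ln\nu$, after which Lemma~\ref{lemma:PartitCoeff} is applied to each axial flag separately (this is one of the two options you list at the end). So the ``non-separability'' you anticipate dissolves because the log of a sum is asymptotically the log of its dominant term, and the dominant term already factors as $\lambda_i\mu_j$. Your combined-filtration route should also work, but it is more machinery than the paper uses; conversely, the paper's corner-block partition is cleaner but relies on the assertion that $X=\bigsqcup X_{ij}$ covers every sample, which is where your caution about coupling would be directed if you wanted to scrutinize it.
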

\begin{proof}
Assume on the contrary, that there exists a sequence $\T=(\P,\Q) \subset \mathcal M_\mathcal{E}$ (we omit indices for brevity) tending to $\partial \mathcal M_\mathcal{E}$ and such that $f_\mathcal{E}(\T)$ is bounded. Note that due to (\ref{tr_inv_c}) at least a part of eigenvalues of $\P$ and $\Q$ tend to $+\infty$ and the others are bounded by positive constants from below.

Let $\P = \sum_{j=1}^p \lambda_j \y_j \y_j^\top$ and $\Q = \sum_{i=1}^q \mu_i \z_i \z_i^\top$ be the spectral decompositions of $\P$ and $\Q$. Passing to a subsequence, if needed, we may suppose everything to converge here. Below we do not mention explicitly the subsequence argument while it is assumed to be utilized if necessary. Denote the sets of eigenvalues of $\P$ and $\Q$ by $\Lambda$ and $\mathrm M$, respectively. Let $\Lambda = \sqcup_{i=1}^{u+1} \Lambda_i$ and $\rho$ be a sequence such that $\ln \lambda /\ln \rho \to r_i$ whenever $\lambda\in \Lambda_i$ and $r_1>\dots>r_u >r_{u+1} = 0$. In particular, $\ln \lambda \asymp r_i\ln \rho$ for $\lambda\in \Lambda_i$ (if $r_i=0$, by this expression we mean that $\ln \lambda = o(\ln \rho)$). Define $K_i$ to be the space generated by the limits of eigenvectors corresponding to the values in $\Lambda_i$, hence, $V = \oplus_{i=1}^{u+1} K_i$. Now set $V_i = (\oplus_{j=1}^i K_j)^\perp$ for all $i=0,\dots, u$. Define a flag $\mathcal F$ of length $u$ as
\begin{equation}
\mathcal F = \{V\otimes U = V_0\otimes U\supsetneq \ldots \supsetneq V_{u}\otimes U\}
\end{equation}
and $\r=\{r_1,\dots,r_{u}\}$.

In a similar way, let $\mathrm M = \sqcup_{j=1}^{v+1} \mathrm M_j$ and $\nu$ be such a sequence that $\ln \mu /\ln \nu \to t_j$ whenever $\mu \in \mathrm M_j$ and $t_1>\dots>t_v > t_{v+1} = 0$. In particular, $\ln \mu \asymp t_i \ln \nu$ for $\mu\in \mathrm M_i$. The space generated by the limits of eigenvectors corresponding to $\mathrm M_j$ will be denoted by $L_j$, hence, $U=\oplus_{j=1}^{v+1} L_j$. Now we set $U_i = (\oplus_{j=1}^iL_j)^\perp$ for all $i=0,\dots,v$. Define a flag $\mathcal G$ of length $v$ as
\begin{equation}
\mathcal G = \{V\otimes U = V\otimes U_0\supsetneq \ldots \supsetneq V\otimes U_{v}\}
\end{equation}
and $\t=\{t_1,\dots,t_v\}$.

Let $E_{ij} = V_{i-1}\otimes U_{j-1} \subseteq V\otimes U$, then for any $\lambda\in \Lambda_i,\; \mu\in \mathrm M_j$ and $\X\in V_{i-1}\otimes U_{j-1}$, the limit of
\begin{equation}
\frac{1}{\lambda} \P \X \frac{1}{\mu} \Q
\end{equation}
exists and will be denoted by $\R_{ij}(\X)$. By the definition, $\R_{ij}$ is a composition of the orthogonal projection onto $E_{ij}$ and a positive operator on the image. Let
\begin{equation}
X_{ij} = X\cap V_{i-1}\otimes U_{j-1} \setminus X\cap \( V_{i}\otimes U_{j-1} + V_{i-1} \otimes U_{j}\),
\end{equation}
then $X = \sqcup_{i=1}^{u+1}\sqcup_{j=1}^{v+1} X_{ij}$.

We now proceed to computing the leading asymptotic terms of the summands in (\ref{fe_dev}),
\begin{equation}
f_\P \asymp - \frac{1}{p}\sum_{i=1}^{u+1}\sum_{\lambda\in \Lambda_i} r_i \ln\rho = - \frac{1}{p}\sum_{i=1}^u r_i |\Lambda_i|\ln\rho =- \sum_{i=1}^u r_i \frac{\dim V_{i-1} - \dim V_{i}}{p}\ln\rho.
\end{equation}
Similarly,
\begin{equation}
f_\Q \asymp - \sum_{j=1}^v t_j \frac{\dim U_{j-1} - \dim U_{j}}{q}\ln\nu.
\end{equation}
Let $\X \in X_{ij}$, then for any $\lambda \in \Lambda_i$ and $\mu\in \mathrm M_j$, we have
\begin{multline}
\ln (\P \X \Q, \X) \asymp \ln \lambda + \ln \mu + \ln(\lambda^{-1} \P \X \mu^{-1} \Q, \X) \\ \asymp r_i\ln \rho + t_j \ln \nu + \ln (\R_{ij}(\X),\X) \asymp r_i\ln \rho + t_j \ln \nu.
\end{multline}
Taking this into account, we compute
\begin{equation}
f_X \asymp \frac{1}{|X|}\sum_{i=1}^{u+1}\sum_{j=1}^{r+1} \sum_{x\in X_{ij}}(r_i \ln\rho +t_i\ln \nu)
= \frac{1}{|X|}\sum_{i=1}^{u+1}\sum_{j=1}^{r+1} |X_{ij}|(r_i \ln\rho + t_i\ln \nu).
\end{equation}
We are interested in the asymptotic of the sum $f_\P+f_\Q+f_X$, whose leading term, when non-zero, can be written as
\begin{equation}
f_{\mathcal{E}} = f_\P+f_\Q+f_X \asymp A \ln \rho + B \ln \nu,
\end{equation}
where
\begin{align}
A &= - \sum_{i=1}^{u}r_i \frac{\dim V_{i-1} - \dim V_{i}}{p} + \frac{1}{|X|}\sum_{i=1}^{u+1}\sum_{j=1}^{r+1} |X_{ij}|r_i \nonumber \\
&= - \sum_{i=1}^{u}r_i\( \frac{\dim V_{i-1} - \dim V_{i}}{p} - \frac{\sum_{j=1}^{r+1}|X_{ij}|}{|X|}\) \nonumber\\
&= - \sum_{i=1}^{u}r_i\(\frac{\dim V_{i-1}\otimes U - \dim V_{i}\otimes U}{pq} - \frac{|X\cap V_{i-1}\otimes U| - |X \cap V_{i}\otimes U|}{|X|}\) \nonumber \\
&= - S(\mathcal F, X, \r).
\end{align}
Similar derivation yields $B = S(\mathcal G, X, \t)$. Thus,
\begin{equation}
f_\mathcal{E} \asymp - S(\mathcal F, X, \r)\ln \rho - S(\mathcal G,X,\t)\ln \nu,
\end{equation}
where the right-hand side is non-zero since at least one pair of eigenvalues tend to $+\infty$ due to the trace constraint (\ref{tr_inv_c}). In addition, this implies that $\ln \rho$ and $\ln \nu$ tend to $+\infty$, and it remains to show the coefficients $S(\cdot)$ are both negative, thus guaranteeing that $f_\mathcal{E} \to +\infty$. By Lemma~\ref{lemma:PartitCoeff}, it is enough to check that $\Delta(\mathcal F, X)_{0i}<0$ for $i=1,\dots,u$ and $\Delta(\mathcal G,X)_{0j}<0$ for $j=1\dots,v$. We have
\begin{multline}
\Delta(\mathcal F,X)_{0 i} = \frac{\dim V\otimes U - \dim V_i\otimes U}{pq} - \frac{|X| - |X\cap V_i\otimes U|}{|X|} \\ = \frac{|X\cap V_i\otimes U|}{|X|}-\frac{\dim V_i}{p},
\end{multline}
and we need to show
\begin{equation}
\frac{|X\cap V_i\otimes U|}{|X|}< \frac{\dim V_i}{p}.
\end{equation}
By Corollary~\ref{corollary:IntersectionCount}, we get
\begin{equation}
\frac{|X \cap V_i\otimes U|}{|X|} \leqslant \frac{\dim V_i}{q|X|} < \frac{\dim V_i}{\dim V},
\end{equation}
where the last inequality holds because $|X| > \max[p,q]/\min[p,q]$. After a similar calculation for $B$, we see that both $A$ and $B$ are negative and $f_{\mathcal{E}}(\T;X)\to+\infty$ as $\T \to \partial \mathcal M_\mathcal{E}$. This contradicts the boundedness assumption on $f_{\mathcal{E}}(\T;X)$ and completes the proof.
\end{proof}

\begin{proof}[Proof of Theorem~\ref{tyl_mt}]
Since the KP constraint and the target function are convex in the Riemannian metric introduced above, the only thing we need to show is that $f_{\mathcal{E}} \to + \infty$ when we approach $\partial\mathcal{M}_{\mathcal{E}}$. Lemma~\ref{el_lem} proves that this is a.s. true when
\begin{equation}
|X| > \frac{\max(p,q)}{\min(p,q)},
\end{equation}
and we are done.
\end{proof}

%

\bibliographystyle{plainnat}
\bibliography{ilya_bib}

\end{document}